\newtheorem{theorem}{Theorem}[section]
\newtheorem{lemma}[theorem]{Lemma}
\newtheorem{corollary}[theorem]{Corollary}
\theoremstyle{definition}  
\newtheorem{definition} [theorem] {Definition} 
\newtheorem{example} [theorem] {Example}
\newcommand{\F}{{\mathbb{F}}}
\newcommand{\br}{\rangle}
\renewcommand{\ket}{\rangle}
\newcommand{\bl}{\langle}
\renewcommand{\bra}{\langle}
\newcommand{\G}{\Gamma}
\newcommand{\K}{\Delta_+}
\title{Supersymmetry Breaking in Graph Quantum Mechanics}
\author{Bek Herz}
\email {erherz15@gmail.com}
\author{Ivan Contreras}
\email{icontreraspalacios@amherst.edu}
\begin{document}

\begin{abstract}
In this paper, we develop the groundwork for a graph theoretic toy model of supersymmetric quantum mechanics. Using discrete Witten--Morse theory, we demonstrate that finite graphs have a natural supersymmetric structure and use this structure to incorporate supersymmetry into an existing model of graph quantum mechanics. We prove that although key characteristics of continuum supersymmetric systems are preserved on finite unweighted graphs, supersymmetry cannot be spontaneously broken. Finally, we prove new results about the behavior of supersymmetric graph quantum systems under edge rewiring. 
\end{abstract}
\maketitle
\section{Introduction} \label{introduction}
Discrete methods in mathematical physics, including spectral graph theory, discrete Morse theory, and simplicial complexes, have proven to be instrumental in understanding the intricate details of quantum field theory. In this manuscript, we use these methods to develop the groundwork for a graph theoretic toy model of supersymmetric quantum mechanics.

There has been some recent work on the supersymmetric nature of graphs, including a graph theoretic interpretation of Morse inequalities via supersymmetric quantum mechanics on graphs \cite{Contreras_Xu_2019}, building on Forman's discussion of combinatorial Morse theory \cite{Forman_2002,Forman_1998_2,Forman_1998}, and, independently, studies in spectral graph theory by Ogurisu \cite{ Ogurisu_2003, Ogurisu_2002}, Requardt,  \cite{ Requardt_2002, Requardt_2005}, and Post, \cite{Post_2009,Post_2008}. In our work, we use finite graphs to model supersymmetric quantum systems, inspired by Witten's approach to supersymmetric quantum mechanics and Morse theory \cite{Witten_1982}.

In a supersymmetric (SUSY) physical theory, for each kind of boson there is an associated kind of fermion of equal mass, called a superpartner. Hermitian symmetry operators called supercharges map between bosonic and fermionic states. Supersymmetric Standard Model extensions pose elegant resolutions to many major open questions in high energy physics, including the hierarchy problem and quantum gravity. Imposing supersymmetric constraints can provide just enough additional structure to make an unsolvable quantum field theory problem a solvable case study. Additionally, insights from supersymmetry have driven significant developments in mathematical physics, including supergeometry and a physical interpretation of Morse theory.

When studying supersymmetric quantum mechanics, we typically consider the configuration space of states to be a differentiable finite dimensional manifold. Here, we replace the continuous manifold with finite graphs. We believe that incorporating supersymmetry into a graph quantum mechanical model is a natural next step in the study of quantum theories on graphs.

We adopt a dual approach (Figure \ref{fig:SUSY Approach}). On the one hand, we can seek to extract physical results from combinatorial Morse theory on graphs by translating the framework back into the language of supersymmetry (red arrow). On the other hand, we can approach graph supersymmetry by directly discretizing supersymmetric quantum theories (blue arrow). We first remind the reader of the established method of discretization of (non-supersymmetric) quantum mechanics on graphs (Section \ref{graph QM}), the key features of  supersymmetric systems (Section   \ref{SUSY}), and the fundamentals of discrete Morse theory (Section \ref{discrete Morse}). Then, in Section \ref{Results}, we combine these three concepts, presenting new results of supersymmetry on combinatorial graphs. We conclude in Section \ref{conclusions} with a brief summary and directions for future research in this area, including combinatorial QFT on graphs \cite{contreras2024combinatorial}.

\begin{figure}[ht]
    \centering
    \includegraphics[width=0.8\linewidth]{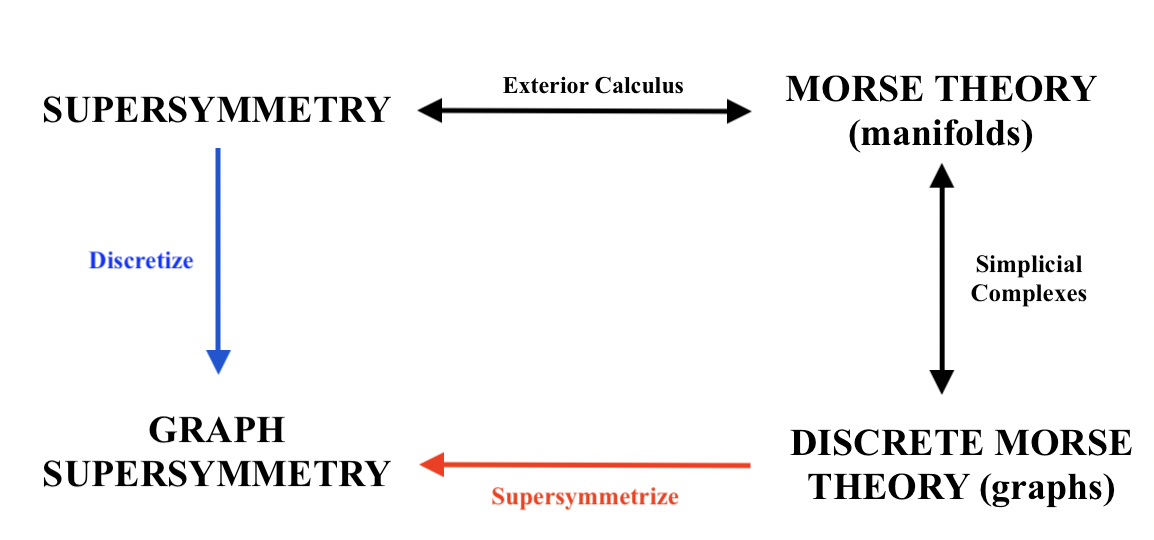}
    \caption{\small{The path to graph supersymmetry: a dual approach.}}
    \label{fig:SUSY Approach}
\end{figure}

\subsection*{Acknowledgments} I.C. and B.H. thank Kannan Jagannathan for co-supervising B.H.'s senior thesis, from which the main results originated. Excerpts from B.H.'s thesis are reproduced here. I.C. and B.H. thank Joe Kraisler for insightful discussions of spectral graph theory and Regan Williams for helpful revisions of the manuscript. I.C. also thanks Pavel Mnev, Santosh Kandel and Konstantin Wernli for useful discussions.

\section{Overview of Graph Quantum Mechanics} \label{graph QM}

We remind the reader of an interpretation of graphs as discrete position-basis quantum mechanical systems.  We may picture the vertices of graph as a discrete sampling of position space. A metric graph $\Gamma_\omega$ comes with an associated function $$\omega: V\times V \rightarrow \mathbb{R},$$ satisfying $$\omega(v_i,v_j)=\omega(v_j,v_i)\;\;\; \text{and}\;\;\; \omega \geq 0,$$
for all adjacent vertices $v_i, v_j \in V(\G_\omega)$, which applies a weight to each edge. Such metrics encode a ``physical distance'' on the graph. In this paper, we focus only on the case of \textit{combinatorial graphs,} graphs with $\omega(v_i,v_j)=1$ for all adjacent vertices $v_i,$ $v_j$. 

In the continuum, a quantum state takes some assigned complex value at each point in space. A \textit{graph} quantum state takes a complex value at each vertex and each edge. A \textit{vertex state} $|\psi\rangle _v$ assigns a complex number to each vertex $v\in V(\Gamma)$. An\textit{ edge state }$|\psi\rangle _e$ assigns a complex number to each edge $e\in E(\Gamma)$. The space of complex valued vertex states $\mathbb{C}^V$ of a graph has dimension $|V|,$ equal to the number of vertices in the graph and the space of complex valued edge states  $\mathbb{C}^E$ has dimension $|E|,$ equal to the number of edges in the graph. 
Thus we can identify the vertex and edge sets with the canonical bases for graph quantum states. For a vertex state operator $\hat O: \mathbb{C}^V\rightarrow\mathbb{C}^V$, we can express the matrix elements:   \begin{equation*}
  [\hat O]_{uv}= \bra u| \hat O |v\ket,
\end{equation*} and likewise for an edge state operator\footnote{When unclear if an operator acts on vertex states or edge states, we distinguish with a ``$+$'' subscript for vertex state operators and a ``$-$'' subscript for edge state operators.}.

Let $\Gamma$ be a graph with vertices $v_1, v_2, ..., v_n$. The adjacency matrix of $\Gamma$ is the $n\times n-$matrix $A: \mathbb{C}^V\rightarrow \mathbb{C}^V,$ with $(i,j)$ entries  \begin{equation} \label{eq:adjacency mtx} [A]_{ij}= \begin{dcases} 1, &\text{if}\;\ v_iv_j\in E(\Gamma) \\ 0, &\text{otherwise}\;\;\;\;\;\;\;\;\;\; .  \end{dcases}  \end{equation}
The adjacency matrix acts on a vertex basis state as \begin{equation}
    A|v_i\ket=\sum_{v_j\in V}[A]_{v_jv_i}|v_j\ket. \end{equation} If a bijection from the adjacency matrix of one graph to the adjacency matrix of another exists, the graphs are isomorphic. The adjacency matrix also helps us understand walks on a graph and is essential to defining a graph analog of the path integral. 
    
\begin{theorem} \label{Th:number_of_walks} Let $\Gamma$ be a graph with vertices $v_1,...,v_n$ and let $A$ be its adjacency matrix. For any positive integer $k$, the matrix element $[A^k]_{ij}=\bra v_i|A^k|v_j\ket$  is the number of walks from $v_i$ to $v_j$ of length $k$ \end{theorem}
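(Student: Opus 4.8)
\emph{Proof proposal.} The plan is to induct on the walk length $k$, using the identity $A^{k+1} = A^{k}A$ and the definition of matrix multiplication to reduce the combinatorial statement about walks of length $k+1$ to one about walks of length $k$. The only non-algebraic ingredient needed is a simple bijection that splits a walk according to where it is one step before the end.

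First I would dispatch the base case $k=1$. Here the claim is immediate from the definition~\eqref{eq:adjacency mtx} of the adjacency matrix: $[A]_{ij}$ equals $1$ exactly when $v_iv_j \in E(\Gamma)$, i.e.\ exactly when the single edge $v_iv_j$ constitutes a walk of length $1$ from $v_i$ to $v_j$, and equals $0$ otherwise. Since there is at most one walk of length $1$ between two given vertices, $[A]_{ij}$ is precisely their number.

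For the inductive step, assume the statement for some $k \geq 1$ and expand $[A^{k+1}]_{ij} = \sum_{\ell=1}^{n} [A^{k}]_{i\ell}\,[A]_{\ell j}$. The key step is to observe that concatenation sets up a bijection between the walks of length $k+1$ from $v_i$ to $v_j$ and the pairs consisting of a walk of length $k$ from $v_i$ to some vertex $v_\ell$ together with an edge $v_\ell v_j$; classifying each walk of length $k+1$ by its penultimate vertex $v_\ell$ is exactly this decomposition. Hence the number of walks of length $k+1$ from $v_i$ to $v_j$ is the sum over $\ell$ of (number of walks of length $k$ from $v_i$ to $v_\ell$) times (number of edges from $v_\ell$ to $v_j$), which by the inductive hypothesis and the base case is $\sum_\ell [A^{k}]_{i\ell}[A]_{\ell j} = [A^{k+1}]_{ij}$, completing the induction.

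I do not expect a genuine obstacle here; the argument is essentially bookkeeping. The one point that merits a careful sentence is the verification that the "append one edge" map really is a bijection onto walks of length $k+1$ for \emph{every} intermediate vertex $v_\ell$ — including the cases $v_\ell = v_i$ or $v_\ell = v_j$ — which is where it matters that walks, unlike paths, may revisit vertices. It is also worth remarking that since $\Gamma$ is undirected and $A$ is symmetric, using $A^{k+1} = A A^{k}$ and decomposing a walk after its \emph{first} step yields the same conclusion, so the choice of convention is immaterial.
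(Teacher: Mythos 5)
Your proof is correct and is exactly the standard induction on $k$ via the penultimate-vertex decomposition that the paper itself invokes (it simply cites the well-known argument rather than writing it out). No issues.
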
 The well-known proof follows by induction on $k$. See  \cite{Harris_Hirst_Mossinghoff_2008}.

The incidence matrix $I$ catalogs which edges touch which vertices and the orientation of each edge. For a graph $\Gamma$ with $m$ vertices and $n$ edges. Then $I(\Gamma) $ is an $m\times n-$matrix with elements
\begin{equation}
    [I]_{ij}=\begin{cases} 1, & \text{edge $j$ ends at vertex $i$} \\ -1, & \text{edge $j$ starts at vertex $i$} \\ 0, & \text{otherwise} \end{cases}\;\; . 
\end{equation}

From the incidence matrix, we can build even and odd graph Laplacians.  The even graph Laplacian (which is known as the combinatorial graph Laplacian)
\begin{equation} \label{eq: evenL}
    \Delta_+(\Gamma)= II^T
\end{equation} acts on vertex states from $\mathbb{C}^V\to \mathbb{C}^V.$ The following properties are well known for the even graph Laplacian \cite{Mohar_1991}.
\begin{theorem}\label{Th:Properties Delta+}
Let $\Delta_+$ be the even Laplacian of a finite simple graph.
\begin{enumerate}
    \item $\Delta_+$ is a real symmetric (Hermitian) matrix.
    \item The even graph Laplacian has the quadratic form \begin{equation} \label{eq:+_qform}
   \ q_{\Delta_+}=\sum_{v_i,v_j| v_iv_j\in E(\Gamma)} |v_i-v_j|^2. \end{equation}
    \item $\Delta_+$ is positive semidefinite. 
    \item The eigenvalues of $\Delta_+$ are non negative real numbers.
    \item $\Delta_+$ is independent of the orientation of the edges of $\Gamma$. 
\end{enumerate}
\end{theorem}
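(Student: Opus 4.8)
The plan is to prove part (2) first, since the quadratic form computation is what drives all the other claims. Writing a vertex state as a column vector $x \in \mathbb{C}^V$ with component $x_i$ at the vertex $v_i$, I would compute $\langle x | \Delta_+ | x\rangle = x^{\dagger} I I^{T} x$. Because the incidence matrix $I$ has (real) integer entries, $(I^{T}x)^{\dagger} = x^{\dagger}I$, so setting $y = I^{T}x \in \mathbb{C}^{E}$ gives $\langle x|\Delta_+|x\rangle = y^{\dagger}y = \|I^{T}x\|^{2} \geq 0$. It then remains to read off the entries of $y$: for an edge $e$ oriented from $v_i$ to $v_j$, the definition of the incidence matrix gives $y_e = \sum_k [I]_{ke} x_k = [I]_{ie}x_i + [I]_{je}x_j = x_j - x_i$, with every other vertex contributing $0$. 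Summing $|y_e|^2$ over all edges yields $q_{\Delta_+}(x) = \sum_{v_iv_j \in E(\Gamma)} |x_i - x_j|^2$, which is exactly (2) under the identification of $x_i$ with the value of the state at $v_i$.

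Parts (1), (3), and (4) then follow quickly. For (1): $\Delta_+ = II^{T}$ has real entries and $(II^{T})^{T} = II^{T}$, so it is real symmetric, hence Hermitian. For (3): the identity $q_{\Delta_+}(x) = \|I^{T}x\|^{2} \geq 0$ established above says precisely that $\Delta_+$ is positive semidefinite. For (4): by (1) and the spectral theorem $\Delta_+$ has only real eigenvalues, and if $\Delta_+ x = \lambda x$ with $x \neq 0$ then $\lambda \|x\|^{2} = \langle x|\Delta_+|x\rangle = \|I^{T}x\|^{2} \geq 0$, forcing $\lambda \geq 0$; so all eigenvalues are nonnegative reals.

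For (5), I would observe that reversing the orientation of any subset of edges replaces $I$ by $IS$, where $S$ is the diagonal matrix that flips the sign of exactly the columns indexed by the reversed edges. Since $S$ is diagonal we have $S^{T} = S$ and $S^{2} = \mathrm{Id}$, so $(IS)(IS)^{T} = I S S^{T} I^{T} = I S^{2} I^{T} = II^{T} = \Delta_+$, and the even Laplacian is unchanged. (Alternatively, this is immediate from (2), since the quadratic form only involves the unordered pair $\{v_i, v_j\}$ and a symmetric matrix is determined by its quadratic form.)

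All of these steps are elementary linear algebra; the only place that genuinely requires care is the bookkeeping in (2) — keeping track of the sign convention for the incidence matrix and the complex conjugation so that the quadratic form really collapses to a sum of squared moduli with no leftover cross terms. Once that computation is pinned down, the rest are short deductions from standard facts about real symmetric positive semidefinite matrices, so I expect no serious obstacle.
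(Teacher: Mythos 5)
Your proof is correct, and all five parts check out: the computation $(I^{T}x)_e = x_j - x_i$ for an edge oriented from $v_i$ to $v_j$ gives the quadratic form, and parts (1), (3), (4), (5) follow exactly as you say. The paper itself does not prove this theorem --- it cites it as well known from Mohar --- but your argument is the standard one, and it is the same computation the paper does carry out explicitly later when proving that the non-zero spectra of $\Delta_+=II^{T}$ and $\Delta_-=I^{T}I$ coincide (there the identity $v^{T}II^{T}v=\lVert I^{T}v\rVert^{2}\geq 0$ is used for positive semidefiniteness, just as in your part (3)). Your two alternative routes to (5) --- the sign-flip matrix $S$ with $S^{2}=\mathrm{Id}$, or reading orientation independence off the quadratic form --- are both valid; the second is arguably cleaner since it requires no bookkeeping at all once (2) is established.
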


The even graph Laplacian $\Delta_+$ is a discrete (finite difference) approximation of the \textit{negative} continuous Laplacian.
We similarly define the odd graph Laplacian $\Delta_-$ to act on the edge states of a graph $\Gamma$ from  $\mathbb{C}^E\to \mathbb{C}^E$:
\begin{equation} \label{eq: OddL}
     \Delta_-(\Gamma)= I^TI.
\end{equation}
Unlike the even graph Laplacian, the odd Laplacian is orientation \textit{dependent}. That is, if we swap the direction of one of the edges in $\Gamma$, the odd Laplacian will change to reflect this. 

We can then define graph Schr\"odinger equations analogous to the continuum Schr\"odinger equation for a free particle, which describe how the (vertex and edge) graph states evolve. Let $\Gamma$  be a graph of order $m$ and  $|\psi\ket_v: V(\G) \rightarrow (\mathbb{C})$ a vertex state function that assigns a complex amplitude to each vertex. The even graph free Schr\"odinger equation is \begin{equation} \label{eq:Schrod+_G}
    \frac{\partial}{\partial t}|\psi\ket_v=i{\hbar}\Delta_{+}|\psi\ket_v,
\end{equation}
with solutions \begin{equation}
    |\psi(t)\ket_v=e^{i{\hbar}\Delta_+t}|\psi(0)\ket_v.
\end{equation}
For a graph $\G$ of size $n$ and and edge states $|\psi\ket_e: E(\G)\to(\mathbb{C})$, the odd graph free Schr\"odinger equation is 

\begin{equation} \label{eq:Schrod-_G}
    \frac{\partial}{\partial t}|\psi\ket_e=i{\hbar}\Delta_{-}|\psi\ket_e,
\end{equation} with solutions \begin{equation}
     |\psi(t)\ket_e=e^{i{\hbar}\Delta_-t}|\psi(0)\ket_e.
\end{equation}

\textit{Remark:} The apparent difference in the sign of the exponent between these and the continuous equations is corrected implicitly in the discrete Laplacians. Here we have also set $2m=1$, for simplicity.

\subsection{Steady States}\label{s:SteadSF_G}
As in the continuum , we say a graph quantum state is steady\footnote{A quantum state is called \textit{steady} if the probability distributions of all observables are independent of time. That is, when $
    \frac{\partial}{\partial t}|\psi\ket=0 \;\; \Leftrightarrow   e^{-i{\hbar}\hat Ht}|\psi\ket =  |\psi\ket $ holds in any basis. Then, the steady states of a system occur exactly when $E=0$ and are the vacuum energy eigenstates. There is an alternative way to define steady states (also called ``stationary'' states), with which the reader may be more familiar. In the ``projective picture,'' states related by a time-varying global phase are said to be equivalent, since phase is not observable. In this picture, all energy eigenstates may be considered ``steady''. However, we have chosen to define steady states to coincide only with vacuum states.} if 
\begin{equation}
    e^{i{\hbar}\Delta_+t}|\psi_0\ket_v=|\psi_0\ket_v \;\; \text{or } \;\;   e^{i{\hbar}\Delta_-t}|\psi_0\ket_e=|\psi_0\ket_e.
\end{equation}
To understand the topological conditions for achieving these steady states, we use a few key theorems.
\begin{theorem} \label{Thm:inKer}
    Let A be a $n\times n-$ matrix, $|\psi_0\ket \in \mathbb{C}^n$ an n-dimensional vertex or edge quantum state, $k$ a non-zero complex constant, and $t$ a complex parameter. Then, $e^{kAt}|\psi_0\ket=|\psi_0\ket$ for all values of $t$ if and only if $|\psi_0\ket \in \text{Ker}(A)$ . 
\end{theorem}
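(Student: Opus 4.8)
The plan is to prove both directions of the equivalence, using the power series expansion of the matrix exponential. Throughout, write $f(t) = e^{kAt}\ket{\psi_0}$, a vector-valued function of the complex parameter $t$.

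First I would handle the easy direction: suppose $\ket{\psi_0} \in \mathrm{Ker}(A)$, so $A\ket{\psi_0} = 0$. Then $A^n \ket{\psi_0} = 0$ for every $n \geq 1$, and expanding
\begin{equation*}
    e^{kAt}\ket{\psi_0} = \sum_{n=0}^{\infty} \frac{(kt)^n}{n!} A^n \ket{\psi_0} = \ket{\psi_0} + \sum_{n=1}^{\infty} \frac{(kt)^n}{n!} A^n \ket{\psi_0} = \ket{\psi_0},
\end{equation*}
since every term with $n \geq 1$ vanishes. This holds for all $t$, giving one implication with no obstruction.

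For the converse, suppose $e^{kAt}\ket{\psi_0} = \ket{\psi_0}$ for all $t$. The cleanest route is to differentiate $f(t)$ with respect to $t$ at $t = 0$: since $f$ is constant, $f'(0) = 0$; but term-by-term differentiation of the power series (justified since the exponential series converges everywhere and uniformly on compact sets) gives $f'(t) = kA e^{kAt}\ket{\psi_0}$, so $f'(0) = kA\ket{\psi_0}$. Hence $kA\ket{\psi_0} = 0$, and since $k \neq 0$ we conclude $A\ket{\psi_0} = 0$, i.e. $\ket{\psi_0} \in \mathrm{Ker}(A)$. An alternative that avoids differentiation: rearrange the hypothesis as $\sum_{n=1}^{\infty} \frac{(kt)^n}{n!} A^n \ket{\psi_0} = 0$ for all $t$; dividing by $t$ and letting $t \to 0$ picks out the lowest-order term $kA\ket{\psi_0} = 0$. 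Either way works, and I would probably present the second as it stays within the elementary algebraic setting of the surrounding material.

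The only point requiring a little care — the ``main obstacle,'' though it is minor — is justifying that we may manipulate the infinite series term by term (differentiate, divide by $t$, take limits). This is standard for the matrix exponential because the series $\sum (kt)^n A^n / n!$ has infinite radius of convergence in $t$, so on any compact disk it converges uniformly and its formal term-by-term derivative also converges uniformly; one may then apply the standard theorem on differentiating a uniformly convergent series of analytic functions. I would state this justification in one sentence rather than belabor it. No finite-dimensionality or diagonalizability of $A$ is needed, so the statement holds as written for arbitrary $n \times n$ complex matrices.
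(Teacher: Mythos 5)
Your proof is correct and complete: the forward direction by killing all $n\ge 1$ terms of the exponential series, and the converse by extracting the first-order coefficient (equivalently $f'(0)=kA\ket{\psi_0}=0$ with $k\neq 0$), which correctly uses the ``for all $t$'' hypothesis rather than a single value of $t$. The paper itself does not prove this theorem but defers to the cited reference, and your argument is the standard one that such a reference would give, so there is nothing to flag.
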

For proof, see \cite{Casiday_Contreras_Meyer_Mi_Spingarn_2024}. Thus, to characterize the steady states of the Laplacian (even or odd), we need to find their kernels. We'll begin with the even Laplacian.

\begin{theorem}\cite{Casiday_Contreras_Meyer_Mi_Spingarn_2024}
    \label{Thm:Ker+} A vertex quantum state is in the kernel of the even Laplacian if and only if it is constant on each connected component.
\end{theorem}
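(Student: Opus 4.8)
The plan is to reduce the kernel condition to the vanishing of the quadratic form of $\Delta_+$ and then unwind that combinatorially, first along single edges and then along walks. First I would exploit the factorization $\Delta_+ = II^T$ from Equation~\eqref{eq: evenL}: for any $\psi \in \mathbb{C}^V$ we have $\langle \psi|\Delta_+|\psi\rangle = \langle I^T\psi \mid I^T\psi\rangle = \|I^T\psi\|^2$, so $\Delta_+\psi = 0$ implies $I^T\psi = 0$, and conversely $I^T\psi = 0$ implies $\Delta_+\psi = I(I^T\psi) = 0$. Hence $\ker\Delta_+ = \ker I^T$, and $\psi \in \ker\Delta_+$ if and only if $q_{\Delta_+}(\psi) = 0$, where $q_{\Delta_+}$ is the quadratic form of Theorem~\ref{Th:Properties Delta+}(2). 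Since $\psi$ is complex-valued while $\Delta_+$ is real symmetric, writing $\psi = a + ib$ gives $q_{\Delta_+}(\psi) = q_{\Delta_+}(a) + q_{\Delta_+}(b)$ with no cross term, so the passage to $\mathbb{C}$ causes no trouble.

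Next I would read off the combinatorial content. By Theorem~\ref{Th:Properties Delta+}(2),
\[
  q_{\Delta_+}(\psi) = \sum_{v_iv_j \in E(\Gamma)} |\psi(v_i) - \psi(v_j)|^2,
\]
a sum of non-negative terms; it vanishes if and only if $\psi(v_i) = \psi(v_j)$ for every edge $v_iv_j \in E(\Gamma)$, i.e. $\psi$ is constant along every edge. The last step is to identify ``constant along every edge'' with ``constant on each connected component.'' One direction is immediate, since the two endpoints of an edge lie in the same component. For the other, given vertices $u$ and $w$ in a common component, choose a walk $u = x_0, x_1, \dots, x_k = w$ through $\Gamma$ (which exists by the definition of a connected component, or by Theorem~\ref{Th:number_of_walks} applied within that component); constancy along each edge $x_\ell x_{\ell+1}$ forces $\psi(u) = \psi(x_1) = \cdots = \psi(w)$. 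Ranging over all such pairs and over all components completes the argument.

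I do not expect a genuine obstacle here: the only point that needs a little care is the passage from $q_{\Delta_+}(\psi) = 0$ to $\psi \in \ker\Delta_+$ over $\mathbb{C}$ rather than $\mathbb{R}$, which the $II^T$ factorization (equivalently, positive semidefiniteness from Theorem~\ref{Th:Properties Delta+}(3)) settles cleanly, and one should make sure ``connected component'' is being used in the walk-connectivity sense so that the path in the final step genuinely exists.
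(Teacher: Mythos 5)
Your proof is correct and complete; the paper itself defers to the cited reference for this statement, but your route --- identifying $\ker\Delta_+$ with $\ker I^T$ via the factorization $\Delta_+=II^T$, reducing to the vanishing of the quadratic form of Theorem~\ref{Th:Properties Delta+}(2) edge by edge, and propagating constancy along walks within a component --- is exactly the standard argument underlying that reference. The one point you flag, handling complex-valued states, is indeed settled by the $\|I^T\psi\|^2$ computation, so there is no gap.
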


Finally, combining Theorems \ref{Thm:inKer} and \ref{Thm:Ker+}, we conclude the following: 
 \begin{theorem} \cite{Casiday_Contreras_Meyer_Mi_Spingarn_2024}
     A vertex state $|\psi\ket_v$ is steady for the even Laplacian if and only if it is constant on each of its connected components. 
 \end{theorem}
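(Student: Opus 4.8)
The plan is to derive the statement directly from the two theorems just stated, so that the proof amounts to little more than unwinding the definition of ``steady.'' First I would recall that a vertex state $|\psi\rangle_v$ is, by definition, steady for the even Laplacian exactly when $e^{i\hbar\Delta_+ t}|\psi_0\rangle_v = |\psi_0\rangle_v$ holds for all values of the parameter $t$. This is precisely the hypothesis appearing in Theorem \ref{Thm:inKer}, instantiated with $A = \Delta_+$ (an $|V|\times|V|$ matrix) and the nonzero constant $k = i\hbar$; applying that theorem gives the first equivalence, namely that $|\psi\rangle_v$ is steady if and only if $|\psi\rangle_v\in\mathrm{Ker}(\Delta_+)$.

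Next I would invoke Theorem \ref{Thm:Ker+}, which identifies $\mathrm{Ker}(\Delta_+)$ with the set of vertex states that are constant on each connected component of $\Gamma$. Composing the two equivalences then yields that $|\psi\rangle_v$ is steady for the even Laplacian if and only if $|\psi\rangle_v\in\mathrm{Ker}(\Delta_+)$, if and only if $|\psi\rangle_v$ is constant on each connected component of $\Gamma$ --- which is exactly the claim.

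There is no substantive obstacle here; the statement is a corollary of the preceding results. The one point that requires a moment of care is the quantifier on $t$: ``steady'' is taken to mean time-independence of the state for \emph{every} $t$ (equivalently, of every observable's probability distribution, as in the footnote), and it is this ``for all $t$'' formulation --- not the weaker requirement that $e^{i\hbar\Delta_+ t_0}|\psi_0\rangle_v=|\psi_0\rangle_v$ hold merely at a single $t_0$ --- that matches the hypothesis of Theorem \ref{Thm:inKer}. I would make this explicit so that the reader sees why the kernel characterization is the correct one, and so that the convention is preserved that steady states coincide exactly with the vacuum ($E=0$) states.
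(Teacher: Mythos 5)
Your proposal is correct and matches the paper's own argument exactly: the paper likewise obtains this theorem by combining Theorem \ref{Thm:inKer} (steady $\Leftrightarrow$ in the kernel) with Theorem \ref{Thm:Ker+} (kernel $\Leftrightarrow$ constant on each connected component). Your added remark about the ``for all $t$'' quantifier is a reasonable clarification but not a departure from the paper's route.
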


\begin{corollary}
    \label{Thm: DimKer+}  $\dim[\ker(\Delta_+(\G))]$ is the number of connected components of  $\G$ \cite{Contreras_Xu_2019}. 
\end{corollary}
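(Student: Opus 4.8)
The plan is to deduce this directly from Theorem~\ref{Thm:Ker+}, which already identifies $\ker(\Delta_+(\G))$ as precisely the space of vertex states that are constant on each connected component of $\G$. So the only remaining task is to compute the dimension of that space, and the natural route is to exhibit an explicit basis.

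First I would suppose $\G$ has connected components $C_1,\dots,C_k$, partitioning $V(\G)$. For each $i$ define the indicator vertex state $|\chi_i\ket_v \in \mathbb{C}^V$ by $\bra v|\chi_i\ket = 1$ if $v\in C_i$ and $0$ otherwise. Each $|\chi_i\ket$ is by construction constant on every connected component (value $1$ on $C_i$, value $0$ on the others), so by Theorem~\ref{Thm:Ker+} each lies in $\ker(\Delta_+(\G))$. Next I would check that $\{|\chi_1\ket,\dots,|\chi_k\ket\}$ is linearly independent: since the $C_i$ are pairwise disjoint, the supports of the $|\chi_i\ket$ are disjoint, so a vanishing linear combination $\sum_i a_i |\chi_i\ket = 0$ forces $a_i = 0$ by evaluating at any vertex of $C_i$. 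Finally I would argue spanning: if $|\psi\ket_v \in \ker(\Delta_+(\G))$, then by Theorem~\ref{Thm:Ker+} it takes a single value $c_i$ on each $C_i$, whence $|\psi\ket_v = \sum_{i=1}^k c_i |\chi_i\ket$. Thus $\{|\chi_i\ket\}_{i=1}^k$ is a basis and $\dim[\ker(\Delta_+(\G))] = k$, the number of connected components.

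There is essentially no obstacle here; the corollary is a routine bookkeeping consequence of Theorem~\ref{Thm:Ker+}, and the only point requiring any care is making the identification ``constant on each component'' $\leftrightarrow$ ``spanned by component indicators'' fully explicit, which the disjoint-support argument handles. One could alternatively invoke the rank--nullity relation together with the fact that the rank of the incidence matrix of a graph on $m$ vertices with $k$ components is $m-k$, since $\Delta_+ = II^T$ has the same rank as $I$; but the indicator-basis argument is more self-contained given what has been established in the excerpt, so that is the version I would present.
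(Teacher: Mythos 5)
Your proposal is correct and is exactly the argument implicit in the paper, which states this corollary as an immediate consequence of Theorem~\ref{Thm:Ker+} (citing \cite{Contreras_Xu_2019}) without writing out the indicator-function basis. Making the basis $\{|\chi_i\ket\}$ explicit, with the disjoint-support check of independence and the spanning step, is a complete and routine justification of the dimension count.
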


\begin{theorem}
\label{Thm:Ker-}An edge quantum state is in the kernel of the odd Laplacian if and only if it is a linear combination of independent edge cycle states.
\end{theorem}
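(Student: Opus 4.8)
The plan is to reduce the statement to a characterization of $\ker(I)$ and then to identify that kernel with the cycle space of $\G$. First, since $\Delta_-=I^TI$, the inclusion $\ker(I)\subseteq\ker(\Delta_-)$ is immediate, and if $\Delta_-|\psi\ket_e=0$ then $\bra\psi|I^TI|\psi\ket_e=\|I|\psi\ket_e\|^2=0$, forcing $I|\psi\ket_e=0$; hence $\ker(\Delta_-)=\ker(I)$. This is the edge-state analog of the positive-semidefiniteness argument behind Theorem~\ref{Th:Properties Delta+}. Next I would unwind $I|\psi\ket_e=0$ combinatorially: writing $|\psi\ket_e=\sum_{e\in E}\psi_e|e\ket$, the $v$-component of $I|\psi\ket_e$ is the signed sum $\sum_{e\ni v}\pm\psi_e$, the sign recording whether $e$ is oriented into or out of $v$. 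So $|\psi\ket_e\in\ker(\Delta_-)$ precisely when the signed sum of incident edge amplitudes vanishes at every vertex — that is, $|\psi\ket_e$ is a discrete divergence-free flow (a circulation) on $\G$.

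The third step is to show that circulations are exactly linear combinations of edge cycle states. Each oriented cycle $C$ in $\G$ determines an edge state assigning $+1$, $-1$, or $0$ to an edge traversed along, against, or outside the cycle orientation; a one-line check shows every vertex of $C$ has matching in- and out-contributions, so such cycle states lie in $\ker(I)$. For the converse I would induct on the size of the support $S=\{e:\psi_e\neq 0\}$ of a nonzero circulation: the subgraph on $S$ can have no degree-one vertex (a leaf would violate the vanishing-divergence condition there), so $S$ contains a cycle $C$; fixing an edge $e_0\in C$ and subtracting $\psi_{e_0}$ times the cycle state of $C$ oriented to agree with $e_0$ kills the $e_0$ coordinate while keeping us in $\ker(I)$, strictly shrinking the support. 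Iterating writes $|\psi\ket_e$ as a finite sum of scalar multiples of cycle states; discarding redundancies yields a representation in terms of \emph{independent} edge cycle states. For completeness I would also note $\operatorname{rank}(I)=m-c$ for a graph with $m$ vertices and $c$ components, so $\dim\ker(\Delta_-)=n-m+c$, the circuit rank, which equals the number of independent edge cycle states.

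The main obstacle is making the flow-decomposition step airtight: giving a precise definition of "edge cycle state" that behaves well when cycles share vertices or edges, checking that the support-shrinking step is unaffected by multi-edges or loops if the setup permits them, and confirming the induction is valid over $\C$ — which it is, since every operation used is a scalar multiplication and subtraction. By contrast, the reduction $\ker(\Delta_-)=\ker(I)$ and the translation of $I|\psi\ket_e=0$ into the Kirchhoff-type balance condition are routine.
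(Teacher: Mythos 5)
Your argument is correct, and it is actually more complete than the one in the paper. Both proofs share the same skeleton: reduce to $\ker(\Delta_-)=\ker(I)$ and observe that an oriented cycle state telescopes to zero under $I$ (Kirchhoff's first law). But the paper only verifies this one inclusion --- that cycle states lie in $\ker(I)$ --- and then simply \emph{asserts} that $\ker(I)=\operatorname{span}(c)$; the reverse inclusion, that every element of the kernel is a combination of cycle states, is never argued. You supply exactly that missing half: the vertex-balance (divergence-free) reading of $I|\psi\ket_e=0$ shows the support subgraph of a nonzero kernel element has no degree-one vertex, hence contains a cycle, and peeling off $\psi_{e_0}$ times that cycle's state strictly shrinks the support, so induction terminates in a finite cycle decomposition. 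Your explicit norm argument $\|I|\psi\ket_e\|^2=\bra\psi|\Delta_-|\psi\ket_e=0$ for $\ker(\Delta_-)=\ker(I)$ is also stated rather than proved in the paper (it appears only implicitly, via the positive-semidefiniteness in Theorem \ref{thm:+-_same_spectrum}). The closing rank count $\dim\ker(\Delta_-)=|E|-|V|+c$ is a nice cross-check that recovers Corollary \ref{Thm: dimker-}. The one point to tighten is the phrase ``discarding redundancies'': what you really need is that every simple cycle state lies in the span of a fixed basis of independent cycle states (e.g.\ the fundamental cycles of a spanning forest), which your dimension count justifies; with that said explicitly, the proof is airtight.
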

\begin{proof}

Let $\{c_1, c_2, \dots c_n\}$ be the independent cycles of a graph $\Gamma$ of size $m$. For each $c_i$, let $|c_i\ket$ denote the independent cycle edge state with value $1$ on clockwise oriented edges in $c_i$, $-1$ on counter-clockwise oriented edges, and $0$ on edges not in $c_i$. Call $c=\{|c_i\ket\}$ the set of linearly independent cycle edge states. It is sufficient to show that these states span the kernel of the odd Laplacian. 

Consider an edge state $|e_k\ket$ with value $1$ on edge $e_k$ and $0$ on all other edges. Let $v_{k_0}$ and $v_{k_f}$ be the end vertices of $e_k$, with $e_k$ originating from $v_{k_0}$. The action of the incidence matrix on $e_k$ is $$I|e_k\ket=|v_{k_f}\ket-|v_{k_0}\ket,$$
where $|v_{k_f}\ket$ and $|v_{k_0}\ket$ are the vertex states with value $1$ on vertices $v_{k_f}$ and $v_{k_0}$, and $0$
 elsewhere. Then, consider the action of $I$ on an independent cycle state $$|c_i\ket=|e_{i1}\ket+|e_{i2}\ket+\dots+|e_{ij}\ket,$$ where each $|e_{ik}\ket$ has value 1 if oriented clockwise and $-1$ if oriented counter-clockwise.  The negative sign on each counter-clockwise oriented edge can be thought of as flipping the orientation of that edge to clockwise, such that the final vertex of each edge $e_{ik}$ is the initial vertex of edge $e_{i(k+1)}$, with $v_{i(j_f)}=v_{i1}.$ 
 Then, by linearity, \begin{equation*}\begin{split}
         I|c_i\ket &=I|e_{i1}\ket+I|e_{i2}\ket+\dots+I|e_{ij}\ket \\&= (|v_{i1_f}\ket-|v_{i1_0})+ (|v_{i2_f}\ket-|v_{i2_0})+\dots+(|v_{ij_f}\ket-|v_{ij_0})\ket\\&=(|v_{i1_f}\ket-|v_{i1_0})+ (|v_{i2_f}\ket-|v_{i1_f})+\dots+(|v_{i1_0}\ket-|v_{i(j-1)_f})\ket=|0\ket.
 \end{split}
 \end{equation*}
(Notice: this is just Kirchhoff's first law: the ingoing and outgoing edges of a cycle must balance.) Thus, every vector representing a linear combination of independent edge cycle states is in the kernel of the incidence matrix and $\text{ker}(I)=\text{span}(c)$. Then, since $\text{ker}(\Delta_-)=\text{ker}(I)$, $\text{ker}(\Delta_-) = \text{span}(c)$. 
\end{proof}
It is worth noting that the steady states are orientation dependent. That is, two orientations of the same graph may have different spaces of steady states. Finally, combining Theorems \ref{Thm:inKer} and \ref{Thm:Ker-} , we conclude the following:

\begin{theorem}
    An edge state $|\psi\ket_e$ is steady for the odd Laplacian if and only if  it is a linear combination of independent edge cycle states.
\end{theorem}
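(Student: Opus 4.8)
The plan is to recognize this statement as essentially immediate once the definition of ``steady'' is unwound and the two preceding kernel results are invoked. First I would recall that, by the definition in Section \ref{s:SteadSF_G}, an edge state $|\psi\ket_e$ is steady for the odd Laplacian precisely when $e^{i\hbar\Delta_- t}|\psi_0\ket_e = |\psi_0\ket_e$ holds for \emph{all} values of the time parameter $t$. So the task reduces to characterizing which edge states satisfy this exponential fixed-point condition.

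Next I would apply Theorem \ref{Thm:inKer} with the choices $A = \Delta_-$ (an $n \times n$ matrix acting on $\mathbb{C}^E$), $k = i\hbar$, and $|\psi_0\ket = |\psi_0\ket_e$. The only hypothesis to check is that $k$ is a nonzero complex constant, which holds since $\hbar \neq 0$; the theorem then gives that $e^{i\hbar\Delta_- t}|\psi_0\ket_e = |\psi_0\ket_e$ for all $t$ if and only if $|\psi_0\ket_e \in \ker(\Delta_-)$. This is the step that does all the real work, and it is the same mechanism already used to treat the even Laplacian, so no new ideas are needed here — one simply has to be careful that the quantification is over all $t$, not a single fixed $t$, since it is exactly the ``all $t$'' version that Theorem \ref{Thm:inKer} addresses.

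Finally I would invoke Theorem \ref{Thm:Ker-}, which states that an edge quantum state lies in $\ker(\Delta_-)$ if and only if it is a linear combination of independent edge cycle states. Chaining the two equivalences yields: $|\psi\ket_e$ is steady for the odd Laplacian $\iff$ $|\psi\ket_e \in \ker(\Delta_-)$ $\iff$ $|\psi\ket_e$ is a linear combination of independent edge cycle states. I would also add a one-line remark, paralleling the comment after Theorem \ref{Thm:Ker-}, that since the odd Laplacian and hence its kernel depend on the chosen orientation of $\Gamma$, the space of steady edge states is likewise orientation dependent.

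I do not anticipate a genuine obstacle: the proof is a two-step composition of results already established in the excerpt, and the statement is the edge-state analogue of the even-Laplacian steady-state theorem proved just above it. The only place warranting a moment's care is confirming the applicability of Theorem \ref{Thm:inKer} — specifically that $i\hbar$ is a legitimate nonzero constant and that $\Delta_-$ is a square matrix on the edge space — after which the conclusion follows formally.
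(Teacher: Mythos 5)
Your proof is correct and follows exactly the route the paper intends: the theorem is stated there as an immediate consequence of combining Theorem \ref{Thm:inKer} (applied with $A=\Delta_-$ and $k=i\hbar$) with Theorem \ref{Thm:Ker-}, which is precisely your two-step composition. The added remark on orientation dependence matches the paper's own comment as well.
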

\begin{corollary}
\label{Thm: dimker-} $\dim[(\ker\Delta_{-}(\G)]$) is the number of independent cycles of $\G$ \cite{Contreras_Xu_2019}. 
\end{corollary}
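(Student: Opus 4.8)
The plan is to read off the dimension directly from Theorem \ref{Thm:Ker-}, whose proof establishes the stronger statement $\ker(\Delta_-) = \operatorname{span}(c)$, where $c = \{|c_i\ket\}$ is the collection of cycle edge states associated to a chosen set of independent cycles. Since the dimension of the span of a finite set of vectors equals its cardinality exactly when the set is linearly independent, it suffices to confirm that the states $\{|c_i\ket\}$ are linearly independent in $\mathbb{C}^E$; then $\dim \ker(\Delta_-) = |c|$, and $|c|$ is by construction the number of independent cycles.

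To make the count unambiguous, I would realize the independent cycles concretely via a spanning forest $F$ of $\Gamma$. The forest $F$ has $|V| - \kappa$ edges, where $\kappa$ is the number of connected components, so there are $|E| - (|V| - \kappa) = |E| - |V| + \kappa$ edges outside $F$. Each such edge $e$ closes exactly one fundamental cycle $c_e$ relative to $F$, and this family of fundamental cycles is the standard set of independent cycles, of cardinality $\beta_1 := |E| - |V| + \kappa$. The associated cycle states are linearly independent because the non-tree edge $e$ has nonzero coefficient in $|c_e\ket$ but in no other fundamental cycle state; hence any vanishing linear combination of the $|c_e\ket$ forces all coefficients to be zero, read off one non-tree edge at a time.

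Combining this linear independence with the spanning statement of Theorem \ref{Thm:Ker-} yields $\dim \ker(\Delta_-) = \beta_1$, the number of independent cycles, which is the assertion of the corollary. As an independent check one may bypass the explicit cycle states entirely: since $\ker(\Delta_-) = \ker(I^T I) = \ker(I)$, rank--nullity applied to $I : \mathbb{C}^E \to \mathbb{C}^V$ gives $\dim \ker(\Delta_-) = |E| - \operatorname{rank}(I)$, and the standard fact $\operatorname{rank}(I) = |V| - \kappa$ over a field of characteristic zero reproduces $|E| - |V| + \kappa = \beta_1$.

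The only point demanding genuine care is that Theorem \ref{Thm:Ker-} already calls the $|c_i\ket$ ``linearly independent'' as part of its setup, so quoting it naively would make the corollary look circular. The fundamental-cycle construction removes this worry by exhibiting an explicit independent family whose size is manifestly $|E| - |V| + \kappa$, thereby identifying the possibly-abstract notion of ``independent cycles'' with a genuine basis of the kernel and fixing the count once and for all.
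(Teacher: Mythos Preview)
Your argument is correct and its core is the same as the paper's: both invoke Theorem \ref{Thm:Ker-} to identify $\ker(\Delta_-)$ with $\operatorname{span}(c)$ and then read off the dimension as the number of independent cycle states. The paper's proof stops there, taking the linear independence of the $|c_i\ket$ as given from the setup of Theorem \ref{Thm:Ker-}. You go further in two ways: you exhibit a concrete independent family via fundamental cycles of a spanning forest, making the count $|E|-|V|+\kappa$ explicit and removing any appearance of circularity, and you supply an independent rank--nullity check through $\ker(\Delta_-)=\ker(I)$ and $\operatorname{rank}(I)=|V|-\kappa$. Both additions are sound and make the argument self-contained; the paper simply does not spell them out.
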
\begin{proof} The proof follows directly from that of Theorem \ref{Thm:Ker-}. Since $\text{ker}(\Delta_-(\G))=\text{ker}(I(\G)),$ and $\text{span}(c)=\text{ker}(\Delta_-(\G))$,  it follows that $\text{dim[ker$(\Delta_{-}(\G))$]}$ is the number of linearly independent cycle states. 
\end{proof}

\subsection{Excited States} \label{s:ExitSF_G}
Understanding the non-zero eigenvalues of the graph Laplacian can help us to understand the excited states of physical systems. Unlike the steady states, which are purely topological (they depend on the general ``shape'' of the graph), the conditions for higher energy eigenstates are more difficult to determine and can depend on the geometry of the graph (for instance, how many vertices are in each cycle). While the non-zero eigenvalues for the adjacency matrix have been more well studied, Mohar argues that the spectrum of the graph Laplacian is more intuitive and more widely applicable \cite{Mohar_1991} . Certainly this true in our case. We present some important properties about the spectra of even graph Laplacians. Let the eigenvalues of $\Delta_+$ be $\lambda_1\leq\lambda_2\leq\dots\leq\lambda_n.$ Then:
\begin{enumerate}
    \item The smallest eigenvalue of $\Delta_+(\Gamma)$ is $\lambda_1=0$ \cite{Mohar_1992}. These are our steady states. 
    \item The second smallest eigenvalue $\lambda_{2}>0$ if and only if $\G$ is connected.\footnote{Naively, we can understand that if the graph is disconnected, the zero-eigenvalue is degenerate, so the second smallest eigenvalue is $0$. The rigorous proof in both directions follows from the Matrix Tree Theorem or Kirchhoff's Theorem \cite{Kirchhoff_1847}, which says that for a connected graph $\G$ with $n$ vertices, and non-zero eigenvalues listed in order of decreasing value $\lambda_{1}, \lambda_2\dots\lambda_{n-1}$, the number of spanning trees of $\G$ is $t(\G)=n^{-1}\lambda_1\lambda_2\dots\lambda_{n-1}$). } This eigenvalue, called the Fiedler value, is equivalent to the algebraic connectivity of the graph, a measure of how densely connected the graph is (see \textcite{Fiedler_1973, zhang_2011}). The Courant-Fisher Principle bounds the Fiedler value from above \cite{Mohar_1991}: $$\lambda_{2}\leq \min[(\Delta_+ x,x)/(x,x)],$$
where $x=|\psi\ket_v$ is a non-constant, non-trivial vertex state and $(\cdot,\cdot)$ is the inner product on $\mathbb{C}^n.$ The Fiedler value is also closely connected with a discrete version of the \textit{Cheeger constant} (also known as the isoperimetric number), a measure of the most efficient way to partition a Riemannian manifold into two pieces. In loose terms, the Cheeger constant $h_\G$ of a graph $\G$quantifies the minimum number of cut edges needed to divide a graph in half. On a connected graph, the discrete Cheeger inequality relates the Cheeger constant to the Fiedler value of the (degree-normalized) even Laplacian: \begin{equation} \label{eq:Cheeger}
    2h_\G\geq\lambda_2\geq \frac{h_\G^2}{2}.
\end{equation}
For formal definitions, proofs of the graph Cheeger inequality, and a neat overview of their importance in optimal partition algorithms for graphs, see Chung \cite{Chung_2010,Chung_2009}.
    \item Merris provides an upper bound on the spectral radius, the largest eigenvalue $\lambda_n$, of the even graph Laplacian \cite{Merris_1998}. Let $d(v)$ denote the degree of vertex $v\in\Gamma$ and $m(v)$ denote the average of the degrees of the vertices adjacent to $v$. Then $\lambda_n$ is bounded above by \begin{equation}
        \lambda_{n} \leq \text{max}\{m(v)+ d(v): v\in V(\G)\}.
    \end{equation} For a thorough discussion of the spectral radius of the graph Laplacian, and the many ways it can be constrained for different kinds of graphs, see \cite{Li_Zhang_1998, zhang_2011}. 
\end{enumerate}

\begin{theorem}
   \label{thm:+-_same_spectrum} The non-zero eigenvalues of even and odd graph Laplacians coincide.
\end{theorem}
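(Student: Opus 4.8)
The plan is to reduce this to the standard linear-algebra fact that for any matrix $I$, the two products $II^{T}$ and $I^{T}I$ have the same non-zero eigenvalues with the same multiplicities; since $\Delta_{+}=II^{T}$ acts on $\mathbb{C}^{V}$ and $\Delta_{-}=I^{T}I$ acts on $\mathbb{C}^{E}$, the theorem is exactly this fact applied to the incidence matrix. Rather than invoke it, I would give the self-contained eigenvector-transfer argument, which also delivers equality of multiplicities.

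First I would fix a non-zero eigenvalue $\lambda$ of $\Delta_{+}$ and a vertex state $v\in\mathbb{C}^{V}$ with $II^{T}v=\lambda v$. Setting $w=I^{T}v\in\mathbb{C}^{E}$, a one-line computation gives $\Delta_{-}w=I^{T}I(I^{T}v)=I^{T}(II^{T}v)=\lambda I^{T}v=\lambda w$, so $w$ is a $\lambda$-eigenvector of $\Delta_{-}$ as soon as $w\neq 0$. But $w=0$ would force $\lambda v=II^{T}v=Iw=0$, and since $\lambda\neq 0$ this gives $v=0$, a contradiction; so the linear map $v\mapsto I^{T}v$ sends the eigenspace $E_{\lambda}(\Delta_{+})$ into $E_{\lambda}(\Delta_{-})$, and the same kernel computation shows it is injective.

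Next I would run the symmetric argument: given $w\in\mathbb{C}^{E}$ with $I^{T}Iw=\lambda w$ and $\lambda\neq 0$, the vertex state $v=Iw$ satisfies $\Delta_{+}v=I I^{T}(Iw)=I(I^{T}Iw)=\lambda Iw=\lambda v$ and is non-zero, so $w\mapsto Iw$ injectively maps $E_{\lambda}(\Delta_{-})$ into $E_{\lambda}(\Delta_{+})$. Comparing dimensions yields $\dim E_{\lambda}(\Delta_{+})=\dim E_{\lambda}(\Delta_{-})$ for every $\lambda\neq 0$; in particular one eigenspace is nontrivial exactly when the other is. Because $\Delta_{+}$ and $\Delta_{-}$ are real symmetric (Theorem \ref{Th:Properties Delta+}), they are diagonalizable with real spectra, so equality of all non-zero eigenspace dimensions is precisely the claim that their non-zero spectra coincide with multiplicity.

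There is essentially no obstacle here; the only point requiring care is the non-vanishing of $I^{T}v$ (respectively $Iw$), which is exactly where the hypothesis $\lambda\neq 0$ enters — and this is also why the zero eigenvalue must be excluded, consistent with Corollaries \ref{Thm: DimKer+} and \ref{Thm: dimker-}, where $\dim\ker\Delta_{+}$ (the number of connected components) and $\dim\ker\Delta_{-}$ (the number of independent cycles) generally disagree. As an alternative one could simply cite the singular value decomposition of $I$: the non-zero eigenvalues of both Laplacians are the squares of the non-zero singular values of $I$, making the coincidence immediate, but the transfer argument above is elementary and already records the multiplicities.
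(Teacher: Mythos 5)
Your proof is correct and follows essentially the same route as the paper: the eigenvector-transfer argument, multiplying an eigenvector equation by $I$ (respectively $I^{T}$) to carry non-zero eigenvalues between $II^{T}$ and $I^{T}I$. Your version is slightly more careful than the paper's in explicitly verifying that the transferred vector is non-zero when $\lambda\neq 0$ and in recording equality of multiplicities, but the underlying idea is identical.
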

\begin{proof}
Since $I$ is real valued, it follows  matrices $II^T$ and $I^TI$ are positive semi-definite and their non-zero eigenvalues coincide. 

To see that that are positive semi-definite, consider $$v^T(II^T)v=(v^TI)(I^Tv)=(I^Tv)^T(I^Tv)=||I^Tv||^2\geq 0$$ and $$v^T(I^TI)v=(v^TI^T)(Iv)=(Iv)^T(Iv)=||Iv||^2\geq 0.$$
To see that their non-zero eigenvalues coincide, let $\lambda\neq0$ be an eigenvalue of $I^TI$ with corresponding nontrivial eigenvector $v$. Then, $II^Tv=\lambda v.$ \\Multiplying on the left by $I$, $$I(I^TIv)=I\lambda v \Rightarrow(II^T)Iv=\lambda (Iv).$$Then, $Iv$ is a nontrivial eigenvector of $(II^T)$ with eigenvalue $\lambda.$ Thus every eigenvalue $\lambda\neq0$ of $(I^TI)$ is also an eigenvalue of $(II^T).$ We could show by a similar argument that every eigenvalue $\mu\neq 0$ of $(II^T)$ is also an eigenvalue of $(I^TI)$ with corresponding eigenvector $I^Tv.$

\end{proof}
\begin{corollary}
    \label{cor: o_indep_spectra} The spectra of the even and odd Laplacians are independent of the graph's orientation. 
\end{corollary}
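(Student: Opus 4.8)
The plan is to obtain the statement as an essentially immediate consequence of Theorem~\ref{Th:Properties Delta+}(5) and Theorem~\ref{thm:+-_same_spectrum}, with one small extra argument needed to cover the zero eigenvalue. First, for the even Laplacian there is nothing to do beyond an appeal to Theorem~\ref{Th:Properties Delta+}(5): since $\Delta_+ = II^T$ depends only on the underlying unoriented graph, every spectral invariant of $\Delta_+$ — in particular its full list of eigenvalues together with their multiplicities — is independent of the chosen orientation.

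Second, I would handle the non-zero part of the spectrum of $\Delta_- = I^T I$. Theorem~\ref{thm:+-_same_spectrum} already shows that the non-zero eigenvalues of $\Delta_-$ coincide with those of $\Delta_+$; I would note that the maps $v \mapsto Iv$ and $w \mapsto I^T w$ appearing in that proof restrict to mutually inverse linear isomorphisms between the $\lambda$-eigenspace of $I^TI$ and the $\lambda$-eigenspace of $II^T$ for each fixed $\lambda \neq 0$ (up to the scalar $\lambda$), so the non-zero eigenvalues agree \emph{with multiplicities}. Since the non-zero eigenvalues of $\Delta_+$ are orientation-independent by the previous paragraph, the same holds for the non-zero eigenvalues of $\Delta_-$.

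Third, I would pin down the multiplicity of the eigenvalue $0$ for $\Delta_-$, which is the only point not already covered, since Theorem~\ref{thm:+-_same_spectrum} speaks only of non-zero eigenvalues. The multiplicity of $0$ is $\dim\ker(\Delta_-)$, and by Corollary~\ref{Thm: dimker-} this equals the number of independent cycles of $\Gamma$, i.e.\ the circuit rank $|E(\Gamma)| - |V(\Gamma)| + c$, where $c$ is the number of connected components — a purely combinatorial quantity making no reference to orientation. (Equivalently, by rank--nullity $\dim\ker(\Delta_-) = |E(\Gamma)| - \operatorname{rank}(I) = |E(\Gamma)| - \operatorname{rank}(II^T) = |E(\Gamma)| - \operatorname{rank}(\Delta_+)$, and $\operatorname{rank}(\Delta_+)$ is orientation-free by Theorem~\ref{Th:Properties Delta+}(5).) Combining the three observations, the eigenvalues of both $\Delta_+$ and $\Delta_-$, with multiplicities, do not depend on orientation.

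\textbf{Main obstacle.} There is no substantial difficulty here; the only subtlety worth flagging is that Theorem~\ref{thm:+-_same_spectrum} as stated addresses only the \emph{non-zero} spectrum and does not explicitly track multiplicities, so the real content of the corollary beyond a one-line citation lies in controlling the multiplicity of the zero eigenvalue of $\Delta_-$. This is exactly what Corollary~\ref{Thm: dimker-} (or the rank identity above) supplies, which is why I would route the argument through it.
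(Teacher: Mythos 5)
Your proposal is correct and follows essentially the same route the paper intends: the corollary is stated without explicit proof as an immediate consequence of Theorem~\ref{thm:+-_same_spectrum} together with the orientation-independence of $\Delta_+$ (Theorem~\ref{Th:Properties Delta+}(5)). Your additional care in tracking multiplicities and in pinning down the multiplicity of the zero eigenvalue of $\Delta_-$ via Corollary~\ref{Thm: dimker-} (or rank--nullity) fills in details the paper leaves implicit, and is exactly the right way to make the statement about the \emph{full} spectra rigorous.
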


\subsection{Dirac's Equation on Graphs} \label{s: Dirac_G}
Since there is a natural discrete analog of the Laplace operator on graphs, it follows that we might construct a similar graph theoretic analog for the Dirac operator. 

To fully appreciate the Dirac equation on graphs, we use \textit{vertex-edge graph states}, which index both the vertex and edge amplitudes in one vector. Accordingly, we also introduce \textit{vertex-edge walk} \cite{DelVecchio_2012,Yu_2017}. A vertex-edge walk $\gamma_{ve}$ \label{walk} of length $1$ moves from a vertex to an edge or an edge to a vertex (it does not matter which). If the edge involved has orientation originating from the vertex involved then $\text{sgn}(\gamma_{ve})=-1$. If the edge enters the vertex, then $\text{sgn}(\gamma_{ve})=1$. The sign of a vertex-edge walk of length $k$ is the product of the sign of each step. 

From this definition, we can describe two subtypes of walks, which we will call \textit{vertex superwalks} and \textit{edge superwalks}. A  vertex superwalk $\gamma_v$ of length $1$ starting on $v_i \in V(\Gamma)$ can go through an incident edge to an adjacent vertex, the typical characterization of walks. In this case, we say $\text{sgn}(\gamma_v)=-1$.  Or, it can go through an incident edge and return to vertex $v_i$, in which case  $\text{sgn}(\gamma_v)=1$. We call this consecutive repetition of a vertex in a superwalk a ``hesitation.''  An edge superwalk $\gamma_e$ of length $1$ starting at $e_i$ can go through an end vertex $v$ to an adjacent edge $e_j$. If both $e_i$ and $e_j$ are oriented into or out of $v$, $\text{sgn}(\gamma_e)=1$ and if one of $e_i$, $e_j$ is oriented  into and the other out of $e_j$,  $\text{sgn}(\gamma_e)=-1$. Or,  $\gamma_e$ can go through $v$ back to edge $e_j$, in which case $\text{sgn}(\gamma_e)=1$. Vertex superwalks and edge superwalks are vertex-edge walks of even length. 

The discrete Dirac operators are formal square roots of the discrete Laplace operators.\footnote{While in the continuum the Dirac operator is a sort of (multi-dimensional) ``square root'' of the d'Alembertian, without a metric or embedding space on our graph to distinguish space-like and timelike dimensions, there is no distinction between the d'Alembertian and the four dimensional Laplacian.} Since the graph Laplacians are real symmetric matrices, they are diagonalizable. We will write the diagonalizations of the even and odd Laplacians as  $P_+D_{\Delta_+}P_+^{-1}$ and $P_-D_{\Delta_-}P_-^{-1},$ respectively, where $\sqrt{D_{\Delta_\pm}}$ is the square root of each of their diagonal entries. Then, we define even and odd Dirac operators, acting on vertex states and edge states, respectively as: \label{edir} \label{odir} \begin{equation}
    \slashed{D}_+=P_+\sqrt{D_+}P_+^{-1},
\end{equation} and \begin{equation}
    \slashed{D}_-=P_-\sqrt{D_{\Delta_-}}P_-^{-1}.
\end{equation}

These can be seen as square roots of the Laplacians: \begin{equation}
    (\slashed{D}_\pm)^2=P_\pm\sqrt{D_{\Delta_\pm}} P_\pm^{-1}P_\pm\sqrt{D_{\Delta_\pm}} P_\pm^{-1}=\Delta_\pm.
\end{equation}

Following the form of the Laplace operators, we can define the \textit{Incidence Dirac Operator} to act on vertex-edge states as \begin{equation}\label{eq:D_G}
    \slashed{D}_I=\begin{pmatrix}
        0 & I\\I^T & 0
    \end{pmatrix}.
\end{equation} This is a formal square root of the direct sum of the even and odd Laplace operators: \begin{equation} \label{eq:D^2_G}
      (\slashed{D}_I)^2=\begin{pmatrix}
        \Delta_+ & 0\\0 & \Delta_-
        \end{pmatrix}.
\end{equation}
The quadratic form of the Dirac Incidence operator is \begin{equation}
    q_{\slashed{D}_I}\begin{pmatrix}
        v\\ e
    \end{pmatrix}= 2\sum_{e_{i,j}\in E(\G)} e_{i,j} (v_j - v_i).
\end{equation} 
Casiday et al. show that the elements of the powers of the graph Dirac operator are related to vertex-edge walks, in a similar manner to which powers of the adjacency matrix of graphs are related to standard walks on graphs:
\textcite{Casiday_Contreras_Meyer_Mi_Spingarn_2024}. Specifically, \begin{equation} \label{eq:D_k_walk}
    [\slashed{D}_I]^k_{ij}= \sum_{\gamma, i\to j,k} \text{sgn}(\gamma_{ve}).
\end{equation}
Finally, we use our incidence Dirac operator to define the Dirac equation on graphs as \begin{equation}
    \label{eq: GraphDirac} (i\slashed{D}_I-m)|\psi_{ve}\ket=0,
\end{equation} where $|\psi_{ve}\ket$ is a vertex-edge state. 

As with the Schr\"odinger equation, we want to determine the steady states of the Dirac equation. By Theorem \ref{Thm:inKer}, this amounts to finding the kernel of the Dirac operator.

\begin{theorem} \cite{Casiday_Contreras_Meyer_Mi_Spingarn_2024} The following holds for the graph Dirac and Laplace operators:
  \newline  
  (1) $\ker(\slashed{D}_\pm)=\ker(\Delta_\pm)$ \\(2) $\ker(\slashed{D}_I)=\ker(\slashed{D}_+\oplus\slashed{D}_-).$
\end{theorem}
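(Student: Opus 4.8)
The plan is to lean on two facts already in place: the identities $(\slashed{D}_\pm)^2=\Delta_\pm$ and $(\slashed{D}_I)^2=\Delta_+\oplus\Delta_-$ from \eqref{eq:D^2_G} and the surrounding discussion, together with the elementary observation that a real symmetric positive semidefinite matrix has the same kernel as its square. So the first step is to record that each of $\slashed{D}_+$, $\slashed{D}_-$, $\slashed{D}_I$ is real symmetric: by the spectral theorem the diagonalizing matrix $P_+$ of the real symmetric $\Delta_+$ may be taken orthogonal, whence $\slashed{D}_+=P_+\sqrt{D_{\Delta_+}}P_+^{-1}=P_+\sqrt{D_{\Delta_+}}P_+^{T}$ is symmetric and positive semidefinite (its eigenvalues are the nonnegative square roots $\sqrt{\lambda_i}$), and similarly for $\slashed{D}_-$; and $\slashed{D}_I^{T}=\slashed{D}_I$ is immediate from its block form \eqref{eq:D_G}.

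For part (1), one inclusion is free: $\ker(\slashed{D}_+)\subseteq\ker((\slashed{D}_+)^2)=\ker(\Delta_+)$. For the reverse, take $v\in\ker(\Delta_+)$ and compute $\lVert\slashed{D}_+v\rVert^2=v^{T}(\slashed{D}_+)^{T}\slashed{D}_+v=v^{T}(\slashed{D}_+)^2v=v^{T}\Delta_+v=0$, hence $\slashed{D}_+v=0$. (Equivalently: $\slashed{D}_+$ and $\Delta_+$ share the eigenbasis given by the columns of $P_+$, and an eigenvector has $\slashed{D}_+$-eigenvalue $0$ iff its $\Delta_+$-eigenvalue $\lambda\ge 0$ satisfies $\sqrt\lambda=0$ iff $\lambda=0$, so the two $0$-eigenspaces coincide.) The same argument verbatim with $\slashed{D}_-,\Delta_-,P_-$ gives $\ker(\slashed{D}_-)=\ker(\Delta_-)$.

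For part (2), assemble the two sides separately. On one hand, block-diagonality plus part (1) give $\ker(\slashed{D}_+\oplus\slashed{D}_-)=\ker(\slashed{D}_+)\oplus\ker(\slashed{D}_-)=\ker(\Delta_+)\oplus\ker(\Delta_-)$. On the other hand, running the norm computation with the symmetric operator $\slashed{D}_I$ yields $\ker(\slashed{D}_I)=\ker((\slashed{D}_I)^2)=\ker(\Delta_+\oplus\Delta_-)=\ker(\Delta_+)\oplus\ker(\Delta_-)$, again by block-diagonality. Comparing the two closes the argument. As a sanity check one can also read $\ker(\slashed{D}_I)$ off directly: $\slashed{D}_I(v,e)^{T}=(Ie,\,I^{T}v)^{T}$ vanishes iff $e\in\ker(I)$ and $v\in\ker(I^{T})$, and since $\ker(I)=\ker(I^{T}I)=\ker(\Delta_-)$ and $\ker(I^{T})=\ker(II^{T})=\ker(\Delta_+)$ by the same positive-semidefiniteness trick used in the proof of Theorem~\ref{thm:+-_same_spectrum}, this reproduces $\ker(\slashed{D}_I)=\ker(\Delta_+)\oplus\ker(\Delta_-)=\ker(\slashed{D}_+\oplus\slashed{D}_-)$.

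There is no serious obstacle: the whole statement collapses to ``a real symmetric (indeed positive semidefinite) matrix and its square have equal kernels'' plus $\sqrt\lambda=0\iff\lambda=0$. The only point deserving a moment's care is the choice of an orthogonal $P_\pm$ so that $\slashed{D}_\pm$ visibly inherits symmetry; without that remark one would instead have to note that $\slashed{D}_\pm$ is the unique positive semidefinite square root of $\Delta_\pm$ regardless of which diagonalization is used, and argue from there.
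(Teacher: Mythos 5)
Your proof is correct. The paper itself offers no proof of this theorem --- it simply defers to the cited reference --- but your argument is the natural one and closes the gap cleanly: a real symmetric (here positive semidefinite) matrix has the same kernel as its square, which handles $\ker(\slashed{D}_\pm)=\ker((\slashed{D}_\pm)^2)=\ker(\Delta_\pm)$ and $\ker(\slashed{D}_I)=\ker(\Delta_+\oplus\Delta_-)$ at once, and your sanity check via $\slashed{D}_I(v,e)^{T}=(Ie,\,I^{T}v)^{T}$ together with $\ker(I)=\ker(I^{T}I)$ and $\ker(I^{T})=\ker(II^{T})$ is precisely the computation one finds in the cited source. The one point you rightly flagged --- choosing $P_\pm$ orthogonal so that $\slashed{D}_\pm$ is visibly symmetric --- is worth the sentence you spent on it, since the paper's definition of $\slashed{D}_\pm$ via an unspecified diagonalization leaves that implicit.
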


\subsection{Path Integrals on Graphs} \label{s:PI_G}
In the continuum, computing the path integral is very challenging most of the time. We can't even always be certain that it converges. A discrete analogy and convenient simplification of the path integral is naturally defined on graphs, where the integral becomes an absolutely convergent sum. Mnev \cite{Mnev_2007, Mnev_2016} outlines how to set up these discrete path integrals, which we will call \textit{walk sums}.\footnote{This choice of term is important: in our graph theoretic terminology, a path from one vertex to another is a sequence of \textit{distinct} intermediate vertices. On the other hand, a \textit{walk} allows for vertex repetition. Thus it is the graph theoretic ``walk'' rather than the graph theoretic ``path'' that is analogous to the quantum mechanical path.}

Let $\Gamma$ be a finite graph and let $\{|v\br\}_{v \in V}$ be a set of  vertex basis states of  $\Gamma.$Then, consider the matrix element of the matrix exponential $$\bl v_i |e^{tA}|v_j\br = \sum_{n=0}^\infty \frac{1}{n!}t^nA^n,$$where, $v_i$ and $v_j$ are are two vertices of $\Gamma$, $t$ is a real parameter, and $A$ is the adjacency matrix (Equation \ref{eq:adjacency mtx}). By Theorem, \ref{Th:number_of_walks}, $[A^n]_{ij}$ is the number of possible walks from $v_j$ to $v_i$ of length $n$. Then, $\sum_{n=0}^\infty A^n$, is the total number of possible walks from $v_j$ to $v_i$ on $\G$. Thus, we may rewrite each matrix element as a sum over all possible walks $\gamma$ from $v_j$ to $v_i$: \begin{equation}
    \bl v_i|e^{tA}|v_j\br = \sum _{\text{walks} \; \gamma_{v_j v_i}}\frac{t^{|\gamma|}}{|\gamma|!},
\end{equation}
where $|\gamma|$ is the length of the walk $\gamma$. 

We first examine the special case of this sum on regular graphs. 
For a $k$-regular graph, the degree matrix is  $D=k\mathbb{1}$, where $\mathbb{1}$ is the identity matrix. Thus, the even Laplacian will be $\K=A-k\mathbb{1}$ and we can write $$e^{\K t}= e^{(A-k\mathbb{1})t}=e^{At}e^{-kt}.$$
Then, the matrix element \begin{equation} \label{eq: graph_propagator}
    \bl v_i|e^{t\K}|v_j\br = \sum_{k=0}^\infty\frac{t^{k}}{k!} e^{-kt}W_{k, ij}=\sum _{\text{walks} \; \gamma_{v_j v_i}}\frac{t^{|\gamma|}}{|\gamma|!} e^{-kt},
\end{equation}
where $W_{k, ij}$ denotes the number of walks of length $k$ from $v_i$ to $v_j$. Compare this to the continuous path integral. The left-hand side of this equation looks like the like the propagator.  The absolutely convergent sum over walks on the right-hand side looks like a discrete Euclidean path integral, with measure $\mathcal{D}[q(t)]=\frac{t^{|\gamma|}}{|\gamma|!}$ and Euclidean action $S_E[q(t)]= kt.$ Intuitively this makes some sense. In the measure, the factorial term dominates, so longer walks have smaller contributions. This is similar to the classical paths being the geodesics in the continuum. However, the number of incident edges on each vertex $k$ is only the graph action on regular graphs. Let's examine case of non-regular graphs. To do so we must consider vertex superwalks. \textcite{Yu_2017} shows \footnote{The proof is done by induction on $k$.} that for $\gamma_+$ a vertex superwalk of length $k$ from vertex $v_i$ to $v_j,$\begin{equation}
    [\K^k]_{ij}=\sum_{\gamma,i\to j,k} \text{sgn}(\gamma_+).
\end{equation}
For vertex superwalks, $\text{sgn}(\gamma_+)$ is easy enough to visualize because it is independent of the orientation of the edges of the graph and depends only on the length and number of hesitations of the walk. 

\textcite{DelVecchio_2012} shows that using the vertex superwalk definition, a general version of the walk sum formula is: 
\begin{equation} \label{eq:gen_g_prop}
    \bra v_i|e^{t\Delta_+}|v_j\ket= \sum _{k=0}^\infty\frac{t^{k}}{k!} (-1)^k\sum_{\gamma\in W_{k,ij}} deg(\gamma),
\end{equation}
where $$deg(\gamma)=(-1)^{\#(n|v_n\neq v_{n+1}\text{\;for\;}1\leq n\leq k, \; v_n, v_{n+1}\in \gamma)}$$
measures non-hesitant steps in the walk. This holds on any graph and reduces to Equation \ref{eq: graph_propagator} in the case of regular graphs.

A similar analysis might be applied to investigate edge superwalks, even though the odd Laplacian does not have nearly so nice a relationship to the adjacency matrix as the even Laplacian does. Similar to the even Laplacian case, Yu proves in \cite{Yu_2017} that for $\gamma_-$ an edge superwalk,
\begin{equation}
    [\Delta_-]_{ij}^k=\sum_{\gamma,i\to j,k} \text{sgn}(\gamma_-).
\end{equation}

We can also recover the graph partition function, by taking the trace of the graph Euclidean propagator (Equations \ref{eq: graph_propagator} and \ref{eq:gen_g_prop} ) \cite{Mnev_2007}. This is equivalent to considering the sum over closed walks.  On a $k$-regular graph
\begin{equation} \text{tr}(e^{t\Delta_+})= \bl v_i|e^{t\K}|v_i\br =     \sum_{\text{closed} \gamma} \frac{t^{|\gamma|}}{|\gamma|!} e^{-kt} . 
\end{equation}
 In general, this becomes, 
\begin{equation}
    \bra v_i|e^{t\Delta_+}|v_i\ket= \sum _{k=0}^\infty\frac{t^{|\gamma|}}{|\gamma|!} (-1)^k\sum_{\gamma\in W_{k,ii}} deg(\gamma).
\end{equation}

\section{Supersymmetry in the Continuum} \label{SUSY}
\subsection{Superalgebras}
We move now to the fundamentals of (continuum) supersymmetry which we seek to translate into the graph quantum mechanical picture outlined above. The mathematical structures of supersymmetric theories are called \emph{superalgebras}. We primarily adopt the superalgebra conventions of Witten \cite{Witten_1982}, with influences and adaptations from Tong \cite{Tong_2022} and Wegner \cite{Wegner_2016}. 

In a supersymmetric theory, fermions and bosons can be grouped into \textit{supermultiplets}. The constituent bosons and fermions of each supermultiplet are called \textit{superpartners}. Mathematically, this means that a set of Hermitian symmetry operators $Q_1, Q_2, \dots, Q_N$, called \textit{supercharges}, exist, where each $Q_i$  maps bosonic states to fermionic states and vice versa.

Consider a Hilbert space of states $\mathcal{H}$. We can express $\mathcal{H}$ in terms of its constituent Hilbert spaces of bosonic and fermionic states, $\mathcal{H}^+$ and $\mathcal{H}^-$, respectively, where $\mathcal{H}$ is the direct sum $\mathcal{H}=\mathcal{H}^+\oplus \mathcal{H}^-$.  Then, we may express a state $|\psi\ket $ in $\mathcal{H}$ as \begin{equation}
      |\psi\ket = \begin{pmatrix}
      B\\ \hline F
    \end{pmatrix},
\end{equation} where $B\in \mathcal{H^+}$ and $F\in\mathcal{H}^-$. Since a supercharge maps between bosonic and fermionic states, it is of the form
\begin{equation}
  Q_i =  \left (\begin{array}{c|c}
    \  0 &  M^\dagger\\ \hline
     \ M & 0\\ 
 \end{array} \right ),
\end{equation}  such that  \begin{equation}
    Q_i  \begin{pmatrix}
        B\\ \hline F
    \end{pmatrix} = \left (\begin{array}{c|c}
    \  0 &  M^\dagger\\ \hline
     \ M & 0\\ 
 \end{array} \right ) \begin{pmatrix}
        B\\ \hline F
    \end{pmatrix} = \begin{pmatrix}
        M^\dagger F\\ \hline MB
    \end{pmatrix},
\end{equation}
with $ M^\dagger F  \in \mathcal{H}^+$ and  $M B \in \mathcal{H}^-.$ That is, $Q_i$ can be decomposed into a sum of adjoint operators $M$ and $M^\dagger$ (with only mild notational abuse), where $M$ maps bosonic states to fermionic states and $M^\dagger$ maps fermionic states to bosonic states. Notice that $M^\dagger$ must be the adjoint of $M$ to satisfy the Hermitian property of $Q_i$, and $M^2={M^\dagger}^2=0$.

Three properties describe supercharges.

\begin{enumerate}
    \item Supercharges commute with the Hamiltonian: \begin{equation}
    [ Q_i, H]= Q_i H- H Q_i=0.
\end{equation} Thus, we can understand supercharges as conserved quantities, indicative of an underlying symmetry, the ``supersymmetry''. An important consequence of this commutation is that the supercharges do not change the mass of the state on which they act, and the superpartners should have the same mass. 
    \item Supercharges anti-commute with the fermion parity operator. Recall that fermions and bosons differ in their possible spins. Bosons have integer spins, while fermions have half integer spins. The \textit{fermion parity operator }$$(-1)^F=e^{2\pi i J_z}$$ distinguishes the bosonic states in $\mathcal{H}^+$ from the fermionic states in $\mathcal{H}^-$, where $F$, the \textit{fermion number}, is $0$ if the state is bosonic and $1$ if the state is fermionic and $J_z$ is the angular momentum. Under the action of this operator, a bosonic state $|b\ket$ will remain unchanged, while a fermionic state $|f\ket$ acquires a negative sign:\footnote{``Bosonic states'' and ``fermionic states'' need not consist of a single fermion or a single boson, but instead can have a net integer or net half-integer spins, respectively, as distinguished by the fermion parity operator.} \begin{equation} \begin{split}
    \left(-1\right)^F|b\ket &=|b\ket \; \; \;\;\ \text{for all} \ |b\ket\in\mathcal{H}_+ \\ \left(-1\right)^F |f\ket &=-|f\ket \; \;  \text{for all} \ |f\ket\in\mathcal{H}_- 
\end{split}
\end{equation}
\begin{equation}
    \{Q_i, (-1)^F\} = (-1)^F Q_i +  Q_i(-1)^F=0.
\end{equation}
\item For a normalized basis of $N$ supercharges, \begin{equation}
       Q_i^2=H  
\end{equation} \begin{equation} \label{eq: Q^2=H}
    \{Q_i,Q_j\}=Q_iQ_j+Q_jQ_i=0 \;\; \text{if } \;\ i\neq j .
\end{equation}Note that to ensure Lorentz invariance of this theory for nonzero momentum, a momentum term would appear in the superalgebra on the right-hand side of Equation \ref{eq: Q^2=H}. However, when we consider spontaneous supersymmetry breaking, we will search for vacuum states of zero-energy that are annihilated by the supercharges. We will see that supersymmetry implies that $E\geq 0$  and each zero-energy state also has zero momentum. Thus, we will find it entirely sufficient to restrict our study to the subspace of our Hilbert space containing only the zero-momentum states, where the algebra reduces to the relations above. 
\end{enumerate}

We consider the simplest case of $(1+1)$-dimensions, with two supercharges $Q_1$ and $Q_2$, defined such that \begin{equation} \label{eq: Q1_Q2_M}
    Q_1=M+M^\dagger \;\; \text{and} \;\  Q_2=i(M-M^\dagger).
\end{equation}

From (3) we see that the supersymmetric Hamiltonian is related to the supercharges by \begin{equation} \label{eq: SUSY_H}
  H_S=  \frac{1}{2} (Q_1^2+Q_2^2) =( M^\dagger M+M M^\dagger).
\end{equation}
 Accordingly, we can break this Hamiltonian into its bosonic and fermionic sectors. Notice, $M^\dagger M$ acts on bosonic states from $\mathcal{H}^+\rightarrow \mathcal{H}^+$ while $MM^\dagger$ acts on fermionic states from $\mathcal{H}^-\rightarrow \mathcal{H}^-$.  Then, \begin{equation} \label{eq: SUSY_H_2}
     H_S=H^++H^- = \left (\begin{array}{c|c}
    \  H^+ &  0\\ \hline
     \ 0 & H^-  \\ 
 \end{array} \right )=\left (\begin{array}{c|c}
    \  M^\dagger M &  0\\ \hline
     \ 0 & M M^\dagger \\ 
 \end{array} \right ),
 \end{equation}
 where $H^+=M^\dagger M$ and $H^-=M M^\dagger $. 

The spectrum of the supersymmetric Hamiltonian has two unique and important properties. 

\begin{theorem}
    \label{thm: D_SUSY_H} Non-zero eigenvalues of SUSY Hamiltonians have even degeneracies. More specifically, superpartners of $Q_i$ have the same energy.
\end{theorem}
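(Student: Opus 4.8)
The plan is to exploit the factorized form $H_S = H^+ \oplus H^-$ with $H^+ = M^\dagger M$ and $H^- = M M^\dagger$ from Equation \ref{eq: SUSY_H_2}, and to show that $M$ and $M^\dagger$ furnish an explicit isomorphism between the $\lambda$-eigenspaces of $H^+$ and of $H^-$ whenever $\lambda \neq 0$; evenness of the degeneracy is then immediate.

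Concretely, I would fix $\lambda \neq 0$ and let $V_\lambda^+ \subseteq \mathcal{H}^+$ and $V_\lambda^- \subseteq \mathcal{H}^-$ be the associated eigenspaces of $H^+$ and $H^-$. For $|b\rangle \in V_\lambda^+$, the one-line computation $H^-\bigl(M|b\rangle\bigr) = M M^\dagger M |b\rangle = M H^+ |b\rangle = \lambda\, M|b\rangle$ shows $M|b\rangle \in V_\lambda^-$, and $\|M|b\rangle\|^2 = \langle b| M^\dagger M |b\rangle = \lambda \|b\|^2 \neq 0$ shows $M \colon V_\lambda^+ \to V_\lambda^-$ is injective. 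By the mirror-image argument, $M^\dagger \colon V_\lambda^- \to V_\lambda^+$ is injective. Since $M^\dagger M = \lambda\,\mathrm{id}$ on $V_\lambda^+$ and $M M^\dagger = \lambda\,\mathrm{id}$ on $V_\lambda^-$, these two maps are mutually inverse up to the scalar $\lambda$, hence $\dim V_\lambda^+ = \dim V_\lambda^-$. Therefore the multiplicity of $\lambda$ as an eigenvalue of $H_S$ equals $\dim V_\lambda^+ + \dim V_\lambda^- = 2\dim V_\lambda^+$, which is even. For the ``more specifically'' clause I would repackage this via the supercharge: $Q_1 = M + M^\dagger$ commutes with $H_S$ (property (1)), so it preserves the full eigenspace $V_\lambda := V_\lambda^+ \oplus V_\lambda^-$; it anticommutes with $(-1)^F$, so it interchanges the bosonic and fermionic summands; and $Q_1^2 = H_S$ acts as $\lambda$ on $V_\lambda$, so $Q_1/\sqrt{\lambda}$ restricts to an involution of $V_\lambda$ swapping $V_\lambda^+$ and $V_\lambda^-$. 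This is exactly the assertion that every nonzero-energy bosonic state has a fermionic superpartner of equal energy and conversely, and the same works verbatim with any $Q_i$ since $Q_i^2 = H_S$.

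I do not anticipate a real obstacle; the only steps needing care are (i) verifying $M|b\rangle \neq 0$, which is precisely where $\lambda \neq 0$ enters and which simultaneously explains why the statement fails for the vacuum (there $M$ kills the zero modes, so no pairing is forced), and (ii) checking that $M$ and $M^\dagger$ are genuinely inverse up to scaling on these eigenspaces, so that one concludes \emph{equality} of dimensions rather than merely a one-sided inequality.
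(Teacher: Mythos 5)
Your proposal is correct, and it is actually more complete than the argument the paper gives. The paper's proof starts from the \emph{assumption} that there exist superpartners with $Q_i|b\rangle=|f\rangle$ and $Q_i|f\rangle=|b\rangle$ and then uses $[Q_i,H]=0$ to conclude $E^+=E^-$; it establishes that paired states share an energy, but it never shows that the pairing map is injective on eigenspaces or that $\dim V_\lambda^+=\dim V_\lambda^-$, which is what the word ``even'' in the statement really requires. Your route supplies exactly that missing content: the intertwining relations $H^-M=MH^+$ and $H^+M^\dagger=M^\dagger H^-$, together with the observation that $M^\dagger M=\lambda\,\mathrm{id}$ on $V_\lambda^+$ and $MM^\dagger=\lambda\,\mathrm{id}$ on $V_\lambda^-$, make $M$ and $M^\dagger$ mutually inverse up to the scalar $\lambda$, so the two eigenspaces are isomorphic and the total multiplicity $2\dim V_\lambda^+$ is even. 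It is worth noting that this is precisely the ``$AB$ and $BA$ are isospectral away from zero'' argument that the paper itself deploys later, in the proof of Theorem \ref{thm:+-_same_spectrum} for $II^T$ versus $I^TI$; your proof makes the continuum and graph statements visibly instances of the same mechanism, which the paper leaves implicit. Your closing remarks also correctly isolate where $\lambda\neq 0$ is used ($\|M|b\rangle\|^2=\lambda\|b\|^2$) and hence why zero modes escape the pairing, matching the paper's concluding observation that $E=0$ states are annihilated by the supercharges and sit in their own supermultiplets.
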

\begin{proof}
Let $|b\ket$ and $|f\ket$ be superpartners such that $Q_i|b\ket=|f\ket$ and  $Q_i|f\ket=|b\ket$. Let $H|b\ket=E^+|b\ket$ and $H|f\ket=E^-|f\ket$. Then, \begin{equation*}
    \begin{split}
        H|b\ket=HQ_i|f\ket=Q_iH|f\ket=Q_iE^-|f\ket=E^-Q_i|f\ket=E^-|f\ket.
    \end{split}
\end{equation*}Then, $$E^+=E^-.$$  (In reality,  to obtain the proper normalization, since $Q_i^2=H$, we want $Q_i|b\ket=\sqrt{E}|f\ket$ and $Q_i|f\ket=\sqrt{E}|b\ket$).

This pairing doesn't hold, however, in the case of $E=0$. Since $Q_i^2=H$, if $H|b\ket=0$, then $Q_i|b\ket=0\neq|f\ket$, since the state is simply annihilated by the supercharge. (Likewise for $H|f\ket=0$). Notice that zero-energy states are in the kernel of the supercharges, and are therefore in their own supermultiplets.
\end{proof}
\begin{theorem}
    \label{thm: PD_SUSY_H} The Hamiltonian is positive semi-definite, and therefore the energy of a SUSY Hamiltonian is non-negative. 
\end{theorem}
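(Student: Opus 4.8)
The plan is to read off positive semi-definiteness directly from the two equivalent expressions for the supersymmetric Hamiltonian and then deduce non-negativity of the energy from the spectral theorem. First I would fix an arbitrary state $|\psi\rangle \in \mathcal{H}$ and compute the expectation value $\langle\psi|H_S|\psi\rangle$ using the decomposition $H_S = M^\dagger M + M M^\dagger$ from Equation~\ref{eq: SUSY_H}. Since $M^\dagger$ is the adjoint of $M$, this expands as $\langle\psi|M^\dagger M|\psi\rangle + \langle\psi|M M^\dagger|\psi\rangle = \| M|\psi\rangle \|^2 + \| M^\dagger|\psi\rangle \|^2 \geq 0$. As $|\psi\rangle$ was arbitrary, this is precisely the statement that $H_S$ is positive semi-definite. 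I would also note the equivalent argument via the supercharges: from $H_S = \tfrac{1}{2}(Q_1^2 + Q_2^2)$ (Equation~\ref{eq: SUSY_H}) and Hermiticity of each $Q_i$, one gets $\langle\psi|H_S|\psi\rangle = \tfrac{1}{2}\big(\|Q_1|\psi\rangle\|^2 + \|Q_2|\psi\rangle\|^2\big) \geq 0$. Either route is a one-line inner-product computation; I would present the $M,M^\dagger$ version since it makes transparent that $\langle\psi|H_S|\psi\rangle = 0$ forces $M|\psi\rangle = M^\dagger|\psi\rangle = 0$, i.e. annihilation by the supercharges, tying back to Theorem~\ref{thm: D_SUSY_H}.

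For the second half of the statement, I would first observe that $H_S$ is Hermitian: $(M^\dagger M)^\dagger = M^\dagger M$ and $(M M^\dagger)^\dagger = M M^\dagger$, so $H_S = H_S^\dagger$. Hence $H_S$ is diagonalizable with real eigenvalues. If $|\psi\rangle$ is an eigenstate of $H_S$ with eigenvalue $E$, normalized so that $\langle\psi|\psi\rangle = 1$, then $E = E\langle\psi|\psi\rangle = \langle\psi|H_S|\psi\rangle \geq 0$ by the first paragraph. Therefore every energy eigenvalue is non-negative, which is the claim.

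I do not expect any genuine obstacle here. The only point requiring a little care is to make sure the content of the argument is exactly the two structural inputs — that $M^\dagger$ is the adjoint of $M$ (equivalently, that the $Q_i$ are Hermitian) — and not some unstated hypothesis; and to remark that in the finite-dimensional graph setting all operators are bounded, so there are no domain or essential self-adjointness subtleties of the kind that would arise for the analogous continuum operators. A brief closing sentence would record the immediate corollary that, combined with Theorem~\ref{thm: D_SUSY_H}, the spectrum of $H_S$ consists of a possibly-degenerate ground space at $E = 0$ together with strictly positive levels whose nonzero energies come in boson–fermion pairs.
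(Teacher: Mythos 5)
Your proposal is correct and uses essentially the same argument as the paper: writing the expectation value of $H$ as a sum of squared norms, $\bra\psi|H|\psi\ket = \|M|\psi\ket\|^2 + \|M^\dagger|\psi\ket\|^2$ (equivalently $\tfrac{1}{2}\sum_i\|Q_i|\psi\ket\|^2$), which is exactly the computation the paper performs for the vacuum state $|\Omega\ket$. Your version is marginally tidier in that it runs the inner-product computation for an arbitrary state and then passes to eigenvalues via Hermiticity, rather than only evaluating on the vacuum, but the underlying idea is identical.
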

\begin{proof}
Consider the vacuum state $|\Omega\ket.$ If all of the supercharges annihilate the vacuum state, $Q_i^2|\Omega\ket=0$ for all $i$, then $\bra\Omega|H|\Omega\ket=\frac{1}{2}\bra\Omega|\sum_i Q_i^2|\Omega\ket= 0$. Otherwise, Suppose $Q_i|\Omega\ket\neq 0$ for some $j$. Then, $$\bra\Omega|H|\Omega\ket=\frac{1}{2}\bra\Omega|\sum_j Q_j^2|\Omega\ket= \frac{1}{2}\sum_j\bra\Omega|Q_iQ_i|\Omega\ket=|Q_i|\Omega\ket|^2>0.$$

    This is in contrast to non-supersymmetric quantum mechanics, where a constant can simply be added to the Hamiltonian to shift the zero-energy state. In a supersymmetric picture, the energy of the vacuum is fixed by Equation \ref{eq: SUSY_H}. 
\end{proof}

\subsection{Spontaneous Supersymmetry Breaking} \label{s: SSB_c}
If a symmetry is \textit{spontaneously} broken, the Lagrangian of the system respects the symmetry, but by varying the free parameters of the theory, one can obtain vacuum state solutions which shifted from zero such that they are not invariant under the symmetry operator. Thus, the symmetry becomes hidden.\footnote{This is in contrast to explicit supersymmetry breaking, where there are terms in the Lagrangian of the theory that don't respect the symmetry, but so long as the terms are small, the symmetry is approximate.} In the case of supersymmetry, this spontaneous breaking  allows for the appearance of a non-supersymmetric physical world, with superpartner pairs that differ in mass from each other, despite an underlying supersymmetric structure of the theory.  Since we don't see see boson-fermion pairs in nature, in order for supersymmetry to be a real symmetry of nature, it must be spontaneously broken. 

This occurs when the supercharges do not annihilate the vacuum state, ie. when \begin{equation}
    Q_i|\Omega\ket\neq 0 \Leftrightarrow  Q_i^2|\Omega\ket=H|\Omega\ket \neq 0,
\end{equation} for $|\Omega\ket$ the lowest energy state of the system. 
For a supersymmetric Hamiltonian, Theorem \ref{thm: PD_SUSY_H} tells us that the energy eigenvalues cannot be negative. Then, we state the following condition:
\begin{definition} Supersymmetry is \textit{spontaneously broken} if and only if the vacuum state energy is positive. That is, when supersymmetry is spontaneously broken, there exist no zero-energy bosonic states \textit{and} no zero-energy fermionic states. \end{definition}

\textit{When is the vacuum energy of a supersymmetric system greater than zero?} To begin to answer this question, we can look for criteria for theories for which the vacuum energy in any finite volume is exactly zero, following the topological approach of Witten \cite{Witten_1983, Witten_1982_2}. Such theories, which do not spontaneously break supersymmetry, can be eliminated as descriptions of supersymmetry in nature. Since the large volume limit of  $E=0$ is $E=0$,  knowing that supersymmetry is unbroken in any finite volume ensures that supersymmetry is unbroken in the infinite volume limit. This also justifies our restricted focus to finite graphs in the paper. 

As we vary the free parameters of our supersymmetric theory, such as the volume or coupling constants, the energy states shift. In particular, the vacuum energy states can be shifted above $E=0$, but they must move in supermultiplets, with a bosonic state and fermionic state shifting together, even though they were not related to one another by the supercharges when $E=0$.  Then, even when supersymmetry is spontaneously broken, the even degeneracy of $E>0$ energy states remains, although the masses of boson-fermion pairs need to longer be equal.\footnote{ The full explanation of this relies on the theoretical appearance of a Goldstino (a massless fermion) when supersymmetry is broken. See \textcite{Tong_2022_2}. }  

Figure \ref{fig:SUSY_shifting} shows three supersymmetric configurations of energy states, divided into bosonic states (blue $\text{x}$'s) and fermionic states (green dots). Notice that in each configuration, the $E>0$ states are degenerate. In the leftmost configuration, there are two bosonic zero-energy states and one fermionic zero-energy state, so supersymmetry is unbroken. With one fewer fermionic zero energy state than zero-energy bosonic state, it is impossible for all of the-zero energy states to be shifted to positive energy in boson-fermion pairs, and thus this configuration is protected against spontaneous supersymmetry breaking. The middle configuration, with an equal number of zero-energy bosonic and fermionic states, also does not spontaneously break supersymmetry, but could be shifted into the rightmost configuration, which does. 
\begin{figure}
    \centering
    \includegraphics[width=0.5\linewidth]{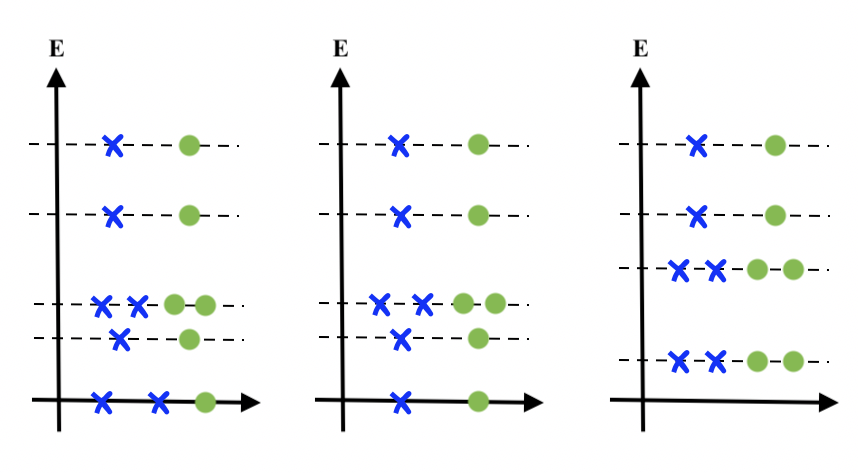}
    \caption{\small{Supersymmetric energy states, with bosons represented by blue $\text{x}$'s and fermions represented by green dots. Only the rightmost configuration spontaneously breaks supersymmetry, with a vacuum energy above $E=0$. The parameters of the leftmost system cannot be varied to break supersymmetry, due to the discrepancy in the number of zero-energy bosonic and fermionic states, while those of the middle configuration can.}}
    \label{fig:SUSY_shifting}
\end{figure}

Since bosonic and fermionic states shift in pairs, even as the parameters are varied to change the energies, the difference in the number of zero energy bosonic states minus the number of zero energy fermionic states is preserved. We call this invariant number the \textit{Witten index}, \label{win} denoted \begin{equation} \label{eq: W_Index_0}
    W=n_B^{E=0}-n_F^{E=0}, 
\end{equation} where $n_B^{E=0}$ and $n_F^{E=0}$ denote the number of zero-energy bosonic and fermionic states, respectively  \cite{Witten_1982_2}. 

A Witten index of $0$ is a \textit{necessary condition} for supersymmetry breaking. That is, if $W\neq0$, supersymmetry is unbroken. The argument for this is simple: if $W\neq0$, either $n_B^{E=0} > 0$ or $n_F^{E=0} > 0 $, or both, which implies the existence of at least one zero-energy state. However, knowing $W=0$, is not \textit{sufficient} to determine that supersymmetry is broken. Rather, there are two scenarios which give $W=0$. 

\begin{enumerate}[label=(\Alph*)]
    \item $n_B^{E=0}=n_F^{E=0}=0$ in which supersymmetry is spontaneously broken.
    \item $n_B^{E=0}=n_F^{E=0}\neq0$ in which $W=0$ but supersymmetry is unbroken.
\end{enumerate}

In the examples in Figure \ref{fig:SUSY_shifting}, the rightmost configuration satisfies $W=0$ (Scenario A), with SUSY spontaneously broken, while the middle satisfies $W=0$, without breaking supersymmetry (Scenario B). 

The Witten index is equivalent to \begin{equation} \label{eq: W_Index_Tr}
    W=\text{Tr}[(-1)^Fe^{(-\beta H)}].
\end{equation}
The trace term is a sum over a $+1$ for each bosonic state and a $-1$ for each fermionic state.
Since each non-zero energy state corresponds to a supermuliplet containing a boson and a fermion, the contributions to the trace from these states cancel out, leaving only the contributions from the states annihilated by the supercharges. For an infinite sum over states in our Hilbert space, the trace term is divergent. So, we regulate the sum with the statistical mechanical partition function $e^{-\beta H}$. Note that the Witten index for a supersymmetric system is independent of the parameter $\beta$, since the contributions to $W$ come only from terms where $E=0$, which sends the partition function to $1$.  

\subsection{Supersymmetric Path Integrals} \nocite{Tong_2022} \label{s:susy_PI}
We give here a brief discussion of supersymmetric Lagrangians and path integrals in the case of quantum mechanics, and refer the reader to \textcite{Tong_2022, Witten_1983, Witten_1981, Wegner_2016} for more detail. 

The spin-statistics theorem tells us that the wavefunctions of particles with integer spins (bosons) commute while the wavefunctions of particles with half-integer spins (fermions) anti-commute. To account for this, we introduce the anti-commuting \textit{Grassmann variable}, $\psi$, for which \begin{equation}
     \{\psi,\psi^\dagger\}= 1 \;\;\; \text{and} \;\;\; \{\psi,\psi\}= \{\psi^\dagger,\psi^\dagger\} = 0. \end{equation} 
First, consider a bosonic system. In this case, the fermion parity operator $(-1)^F=1$ and the Witten index\footnote{Technically,  $\text{Tr}[(-1)^Fe^{-\beta H}]$ is only an ``index'' in a supersymmetric system. However, the quantity may be computed regardless and we will choose to call it the Witten index in both cases to avoid unnecessary complication.} (Equation \ref{eq: W_Index_Tr}) simplifies to $$W_B=Z_B=\text{Tr}[e^{-\beta H}],$$ the familiar thermal partition function$Z_B$ can be computed with a Euclidean path integral over closed paths, implementing periodic boundary conditions $\mathbf{q}(0)=\mathbf{q}(\beta).$

To preserve supersymmetry in path integral calculations, we must apply the same boundary conditions for the bosons and fermions.  Since the bosonic system is always periodic, the fermionic system must also be periodic.

Unlike in the bosonic case, for a fermionic system, $(-1)^F=-1$, so the Witten index differs from the thermal partition function by a minus sign. Implementing periodic boundary conditions  $\psi(0)=\psi(\beta)$ in the path integral approach here yields the Witten index \footnote{Computing the partition function $Z_F=\text{Tr}[e^{-\beta H}] = \int_{\psi(0)={-\psi(\beta)}}\mathcal{D}\psi^\dagger\mathcal{D}\psi e^{-S_E[\psi,\psi^\dagger]}$ requires implementing anti-periodic boundary conditions, $ \psi(0)=-\psi(\beta)$. Thus, in a SUSY theory it is simpler to compute the Witten index than the thermal partition function.}$$W_F=\text{Tr}[(-1)^Fe^{-\beta H}] = \int_{\psi(0)={\psi(\beta)}}\mathcal{D}\psi^\dagger\mathcal{D}\psi e^{-S_E[\psi,\psi^\dagger]}. $$

Consider the case of a particle moving in a potential on a line. The Hilbert space is \begin{equation}
    \mathcal{H}= L^2(\mathbb{R})\otimes\mathbb{C}^2,
\end{equation} where $L^2(\mathbb{R})$ are the set of position-basis states on the real line, and $\mathbb{C}^2$ are the spin degrees of freedom which distinguish fermions from bosons. The supercharges are \begin{equation} \label{eq:line_Qs}\begin{split} 
    Q_1&=\frac{1}{2}(\sigma_1p+\sigma_2h'(q))\\
    Q_2&=\frac{1}{2}(\sigma_2p-\sigma_1h'(q)),
\end{split}
\end{equation}
with $p = -id/dq$ the momentum, $\sigma_i$ the Pauli matrices, and $h(q)$ some function. Then, by Equation  \ref{eq: SUSY_H}, \begin{equation} \label{eq:SUSY_H_line}
    H=(p^2+h'(q)^2)\mathbb{1} - \sigma_3h'' (q).
\end{equation}
We will assume that $|h'(q)|\rightarrow \infty$ and $|q|\rightarrow \infty$ so that the spectrum of the Hamiltonian is discrete. 
The first term looks like the classical energy (setting $2m=1$ as before), with a potential term $$V(q)=\left(\frac{dh}{dq}\right)^2,$$ while the second term distinguishes the spin of the particle. 

The following is the action for the system in terms of both bosonic variables $q$ and fermionic Grassmann variables $\psi$ and $\psi^\dagger$: \begin{equation} \label{eq: SUSY_line_S}
     S[q,\psi,\psi^\dagger]=\int dt \;L =\int dt \; (\dot q^2 + i\psi^\dagger\dot\psi-h'^2+2h''\psi^\dagger\psi).
\end{equation}
Notice that using the Legendre transform $$H=p\dot q+\pi\dot\psi-L,$$ where $p(t)=2\dot q(t)$ is the conjugate momentum for the bosonic variables and $\pi(t) = i\psi^\dagger$  is the conjugate momentum of the fermionic Grassmann variables, we recover the Hamiltonian \begin{equation}
    H=(p^2+h'^2)-h''(\psi^\dagger\psi-\psi\psi^\dagger).
\end{equation} This is the same as the Hamiltonian in \ref{eq:SUSY_H_line}, where $$\psi\rightarrow\begin{pmatrix}
    0&1\\0&0
\end{pmatrix}\;\; \text{and}\;\; \psi^\dagger \rightarrow\begin{pmatrix}
    0&0\\1&0
\end{pmatrix},$$ such that $\psi^\dagger\psi-\psi\psi^\dagger=\sigma_3$. Thus, the action in Equation \ref{eq: SUSY_line_S} is justified. 

We can now find the Witten index. We apply a Wick rotation to compute the Euclidean action around closed paths \begin{equation}
     S_E[q,\psi,\psi^\dagger]=\oint dt \; \dot q^2 + \psi^\dagger\dot\psi+h'^2- 2h''\psi^\dagger\psi.
\end{equation}
Then, 
\begin{equation}
W=\text{Tr}[(-1)^Fe^{-\beta H}] =\int\mathcal{D}q\mathcal{D}\psi^\dagger\mathcal{D}\psi \;e^{-S_E[q,\psi,\psi^\dagger]}.
\end{equation}
The Witten index is independent of the magnitude of the potential $h$  \cite{Tong_2022}. That is, if we compute the Witten index rescaling $h\rightarrow \lambda h$ for some value $\lambda>0$, $dW/d\lambda = 0$. Then, we calculate $W$ in the infinite $\lambda$ limit. In this limit, the $(\lambda h)'^2$ term dominates the action, suppressing the contributions from all other terms, except in the case that $q(t)= a$ for which $h'(a)=0$. Thus, the infinite dimensional functional integral $W$ depends only on the contributions from the finite number of critical points of $h$. Then, we compute $W$ around these critical points. 

The behavior of functions in the close vicinity of critical points are well approximated by harmonic oscillators in the infinite-$\lambda$ limit: \begin{equation}
    \begin{split}
        h(q)&\approx h(a)+h'(a)(x-a)+\dots \\
        \Rightarrow h'(q)&\approx h''(a)(x-a) +\dots\\
        \Rightarrow V(q)&\approx h'' (a)^2(q-a)^2 +\dots \;\;\;,
    \end{split}
\end{equation}  with  $\omega= 2h''(a)$. The partition function (and therefore also the Witten index) for the bosonic harmonic oscillator is $$W_B= \int \mathcal{D}q\; e^{-S_E[q]}=  \frac{1}{2\text{sinh}(\beta|\omega|/2)}.$$ The absolute value reminds us that for the bosonic oscillator, $\omega>0$. (This is not true below in the fermionic case, where the anti-commutative nature of the Grassmann variables allow for negative $\omega$).  The Witten index of the fermionic oscillator is $$W_F=\int \mathcal{D}\psi^\dagger\mathcal{D}\psi\;e^{-S_E[\psi^\dagger, \psi]}=- 2\;\text{sinh}( \beta\omega/2).$$
Putting it all together, we find \begin{equation} \begin{split}
        W &= \int\mathcal{D}q\mathcal{D}\psi^\dagger\mathcal{D}\psi \;e^{-S_E[q,\psi,\psi^\dagger]} = \sum_a \frac{ - 2\; \text{sinh}(\beta h''(a))}{2\;\text{sinh}(\beta|h''(a)|)}  =\sum_a \frac{-\text{sinh}(h''(a))}{\text{sinh}|h''(a)|}.
\end{split}
\end{equation}
Since 
\begin{equation}
\frac{ - \; \text{sinh}(h''(a))}{\text{sinh}(|h''(a)|)}=\begin{cases}
            -1  & h''(a)>1 \\ 1  & h''(a) <0\\ \text{undefined} & h''(a)=0
        \end{cases}  \;\;\; ,
\end{equation}
\begin{equation}
    W=\text{Tr}[(-1)^Fe^{-\beta H}]=\sum_a \text{sign}(-h''(a)),
\end{equation}
an alternating sum of $-1$'s and $1$'s over the critical points of the potential, just as we asserted before.

\begin{example} \textbf{Particle on a Circle} \label{ex:Particle_on_circle}
As a simple illustration which we will demonstrate can be approximated on graphs in Section \ref{Results}, we show here the zero-energy vacuum state solutions for a particle moving on a circle $S^1$ of radius $R$. The supercharges (Equation \ref{eq:line_Qs}) and Hamiltonian (Equation \ref{eq: SUSY_line_S}) are the same as for a particle moving on a line. To satisfy the boundary conditions of the circle, we impose the restriction $h(q)=h(q+2\pi R)$. The zero-energy states must be annihilated by the supercharges, so must satisfy \begin{equation*}\begin{split}
   Q_1|\Omega\ket&=\left[\begin{pmatrix}
        0 & -i\frac{d}{dq}\\ -i\frac{d}{dq} & 0
    \end{pmatrix}+\begin{pmatrix}
        0&-ih'(q)\\ih'(q) &0
    \end{pmatrix}\right]\begin{pmatrix}
        B\\F
    \end{pmatrix}=0, \\
       Q_2|\Omega\ket&=\left[\begin{pmatrix}
        0 & -\frac{d}{dq}\\ \frac{d}{dq} & 0
    \end{pmatrix}+\begin{pmatrix}
        0&- h'(q)\\-h'(q) &0
    \end{pmatrix}\right]\begin{pmatrix}
        B\\F 
    \end{pmatrix}=0 . 
\end{split}
\end{equation*}
We find two linearly independent solutions to these equations, one bosonic and the other fermionic: \begin{equation*}
    b(q)= \begin{pmatrix}
        e^h\\0
    \end{pmatrix}\;\;\; \text{and} \;\;\; f(q)= \begin{pmatrix}
        0\\e^{-h}
    \end{pmatrix}.
\end{equation*}
Then, since the supercharges annihilate one bosonic state and one fermionic state, the Witten index is $$W= n_B^{E=0}-n_F^{E=0}=1-1=0,$$ and supersymmetry is not spontaneously broken.  Notably, these are solutions regardless of the particular choice of $h$.  On a line, the normalizability of these states depends on the form of $h$. If $h\to +\infty$ as $|q|\to \infty$, the fermionic wavefunction $f(q)$ is normalizable while $b(q)=0$ and the vacuum state is fermionic. If $h\to -\infty$ as $|q|\to \infty$, the bosonic wavefunction $b(q)$ is normalizable while $f(q)=0$ and the vacuum state is bosonic. Otherwise, there are no normalizable $E=0$ states and supersymmetry is spontaneously broken. However, since the circle has finite radius and $h$ is periodic, these states are \textit{always} normalizable on $S^1$. So although the invariance of the Witten index does not prevent this system from spontaneously breaking supersymmetry (as is does for scenarios such as the leftmost system in Figure \ref{fig:SUSY_shifting}), as the parameters of $h$ are varied, supersymmetry will always remain unbroken on a circle. 
\end{example}

\section{Discrete Morse Theory} \label{discrete Morse}

We have seen that the Witten index can be viewed as an alternating sum of  $\pm 1$ values over the critical points of a potential term and that the Witten index is an invariant of a supersymmetric system. In 1982, Witten established a connection between this property of supersymmetry in quantum mechanics and Morse theory \cite{Witten_1982}.  By thinking of the supercharges of a supersymmetric quantum mechanical system on a manifold as maps between spaces of differential forms, his perturbative description of Morse theory helps us to understand the geometry and topology of the underlying manifold. From the other direction, this connection shows that the invariant Witten index of a supersymmetric system can be understood as arising from the topology of the underlying manifold. This can then be translated into the discrete world. 

The following is a sketch of Witten's approach. For details see \cite{Witten_1982}. He begins by proposing an association of the Hilbert space of quantum states on a Riemannian manifold $(\mathcal{M}, g)$ with the de Rham Complex, a co-chain complex of differential forms on the manifold: $$(\Omega^\bullet (\mathcal{M}) , d):\;0\to\Omega^0 \left(\mathcal{M}\right) \xrightarrow{d}\Omega^1 \left(\mathcal{M}\right) \xrightarrow{d}\Omega^2\left(\mathcal{M}\right)\xrightarrow{d}\ldots\xrightarrow{d}\Omega^D\left(\mathcal{M}\right)\to 0,$$ where $d$ is the  de Rham operator. 

The metric Laplacian is given by \begin{equation} \label{eq:deRahm_Lap}
  \Delta_g=dd^\dagger+d^\dagger d.
\end{equation} 
Spin statistics are incorporated into this picture by associating even (odd) $n$-forms with bosons (fermions). For the supercharges $Q_1$ and $Q_2$ (Equation  \ref{eq: Q1_Q2_M}) \begin{equation*}
 Q_1=M+M^\dagger \;\; \text{and} \;\  Q_2=i(M-M^\dagger),
\end{equation*}
he makes the identification \begin{equation*}
Q_1=d+d^\dagger \;\; \text{and} \;\  Q_2=i(d-d^\dagger),\end{equation*} 
with \begin{equation*}
M\mapsto d \;\; \text{and} \;\; M ^\dagger\mapsto d^\dagger.\end{equation*}
The operators $$d_t=e^{-ht}de^{ht} \;\; \text{and} \;\; d^\dagger_t=e^{ht}d^\dagger e^{-ht},$$ for some Morse function $h$ and real parameter $t$,  also satisfy the supersymmetry algebras and have associated conjugate SUSY Hamiltonian $$H_t=\Delta_t=d_td_t^\dagger+d_t^\dagger d_t.$$
Then, for even $\lambda$, the Betti number $$b_\lambda(t)=\dim[\text{Ker}(d_t)]-\dim[\text{Im}(d_t)],$$ is the number of $p$-forms (quantum states) $\psi$ that obey $d_t\psi=0$ but cannot be written as $\psi=d_t\phi$ for any $p$-form $\phi$. This is the number zero-energy bosonic eigenstates of $H_t$. Moreover, for odd $\lambda$, $$b_\lambda(t)=\dim[\text{Ker}(d^\dagger_t)]-\dim[\text{Im}(d^\dagger_t)],$$
is the number of zero-energy fermionic eigenstates of $H_t$. 

In general, the spectrum of $H=dd^\dagger+d^\dagger d$ is challenging to find. But the Betti numbers associated with $d_t$ are the same as the Betti numbers associated with $d$ and therefore the number of zero-energy bosonic and fermionic eigenstates of $H$ is the same as that of $H_t$. 

So,  \begin{equation} \label{eq: Witten=Euler}
   W=n_B^{E=0}-n_F^{E=0}= \sum_{\lambda\; even} b_\lambda - \sum_{\lambda\;odd} b_\lambda = \chi (\mathcal{M}).
\end{equation}
Thus, the Witten index for a SUSY quantum system on $\mathcal{M}$ is equal to the Euler characteristic of $\mathcal{M}$.

\subsection{Discretizing Witten--Morse Theory}\label{s: MT_G}
A \textit{simplicial complex} is a set of points, edges with boundary points, faces with boundary edges, and their higher dimensional counterparts. A graph is therefore a 1--dimensional simplicial complex. Each individual vertex, edge, face, etc. that makes up the simplicial complex is called a \textit{simplex}. Forman demonstrates one method of constructing a combinatorial version of Morse theory on simplicial complexes which retains many of the intricate properties of continuous Morse theory \cite{Forman_2002, Forman_1998_2, Forman_1998}. In particular, he posits and proves the Morse inequalities for simplicial complexes, which look remarkably like the Morse inequalities on manifolds  \cite{Forman_1998_2}. In \cite{Forman_1998}, he then re-proves them from an approach mimicking Witten's for continuous Morse theory. The benefit of this result is that since the operators become finite dimensional when discretized, analytical issues of describing the deformed Laplacian in the continuum disappear. 
We outline the construction here, and refer the reader to \cite{Forman_2002, Forman_1998_2, Forman_1998} for details. 

A function on a simplicial complex assigns a value to each simplex. There is no obvious interpretation of critical simplices of such functions, and therefore no immediate classification of critical simplices as ``degenerate'' or ``non-degenerate,'' the characteristic we used to determine if a function on a manifold was a Morse function. Instead, Forman introduces a definition of a combinatorial Morse function, which we will see allows us to define a type of ``critical simplex'' with similar properties to non-degenerate critical points of functions on manifolds. 

Consider a simplicial complex. We say that a vertex is a $0$-dimensional simplex, an edge is a $1$-dimensional simplex, a face is a $2$-dimensional simplex, and so on. Each simplex of dimension $p$ is by definition bounded by a set of simplices of dimension $p-1$.  (On a graph, a vertex $u$ lies in the boundary of all of its incident edges and an edge $uv$ is bounded by its end vertices $u$ and $v$). Let $\sigma^{(p)}$ be a simplex with dimension $p$. For $\tau^{(p+1)}$, a simplex of dimension $p+1$, we say $\tau > \sigma$ if $\sigma$ lies in the boundary of $\tau$ and for $\tau^{(p-1)}$, a simplex of index $p-1$, we say $\tau<\sigma$ if $\tau$ lies in the boundary of $\sigma$. 

Let $S$ be a finite simplicial complex, $K$ \label{setsimp} the set of simplices of $S,$ and $K_p$ the subset of simplices of dimension $p$. A function $f: K\to\mathbb{R}$ is a \textit{discrete} Morse function if for every $\sigma^{(p)}\in K_p,$ \begin{equation} \label{eq:D_Morse_f}
    \begin{split}
        &\#\;\{ \tau^{(p+1)}>\sigma^{(p)}|f(\tau^{(p+1)})\leq f(\sigma^{(p)})\}\leq 1 \\
        &\#\;\{ \tau^{(p-1)}<\sigma^{(p)}|f(\tau^{(p-1)})\geq f(\sigma^{(p)})\}\leq 1.
    \end{split}
\end{equation}
On a graph $K(\Gamma)= V(\G)\oplus E(\G)$, this means that for every vertex $v\in V$, a discrete Morse function can assign no more than one incident edge a value less than it assigns to the vertex \textit{and} for every $e \in E$ the function can assign no more than one of its end vertices a value greater than it assigns to the edge. 

Then a given simplex $\sigma^{(p)}$ with dimension $p$ is critical if \begin{equation} \label{eq: critical simplex}
    \begin{split}
         &\#\;\{ \tau^{(p+1)}>\sigma|f(\tau)\leq f(\sigma)\}=0\\
        & \#\;\{ \tau^{(p-1)}<\sigma|f(\tau)\geq f(\sigma)\}=0 \;. 
    \end{split}
\end{equation}
On a graph, this means that a vertex is critical under a Morse function if the function assigns to it a smaller value than it does to each of its incident edges and an edge is critical if the function assigns to it a value greater than both of its end vertices. 

Figure \ref{fig:Graph Morse Function} shows two different functions on the cycle graph $C_3$. (A) is not a discrete Morse function, as the value assigned to  $v_1$ is greater than the value assigned to both of its incident edges and the value assigned to $e_3$ is less than the value assigned to its end vertices. On the other hand, (B) does satisfy the conditions for a Morse function, with $v_1$ a critical vertex and $e_3$ a critical edge.
\begin{figure}
    \centering
    \includegraphics[width=0.4\linewidth]{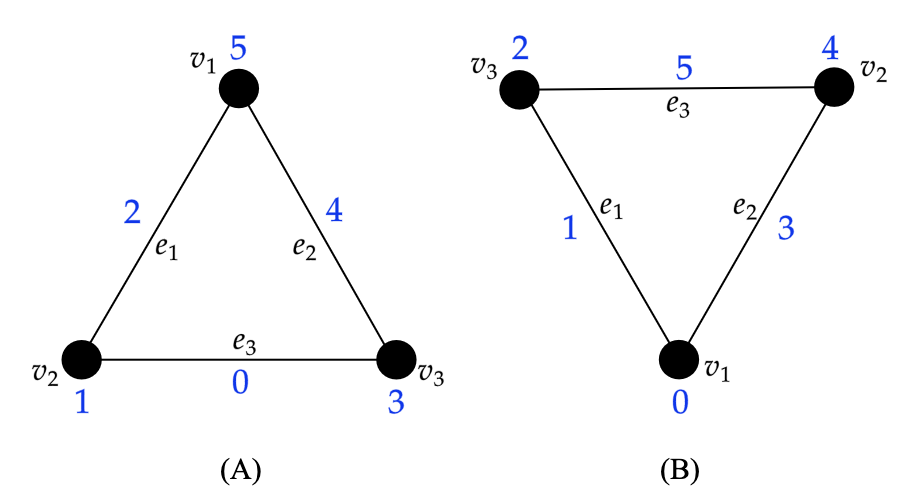}
    \caption{\small{A Morse function (blue) on a graph. Figure adapted from \textcite{Forman_2002}.}}
    \label{fig:Graph Morse Function}
\end{figure}

One  might ask what the index of a critical simplex is. As on a manifold, the index of a critical simplex is the dimension of the simplex. On a manifold, we determine the dimension of the cell based on the index of the critical point that determined it. This entails asking in how many independent directions an infinitesimal displacement from the critical point would result in a decrease in the output of the Morse function. Though the same question cannot be perfectly answered on a simplicial complex, we offer a lower-dimensional way to conceive of this same concept on graphs, in an attempt to make clearer the connection between the continuous Morse theory and Forman's discrete construction.

Let a small displacement from a critical simplex be a vertex-edge walk of length $1$ and let the ``direction'' of the displacement be the type of simplex where the walk terminates. Then, a displacement from a critical vertex moves in the ``edge direction,'' while one from a critical edge moves in the ``vertex'' direction. 

In \cite{Contreras_Xu_2019}, this construction is used in the context of discrete Morse theory for graphs, providing some insightful examples. Of particular relevance to our results of supersymmetry on graphs in are the following observations.

\begin{lemma}
    Non-critical graph simplices always come in adjacent vertex-edge pairs.
\end{lemma}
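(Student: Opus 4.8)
The plan is to exhibit the pairing explicitly and then verify it is a well-defined bijection, using only the two defining inequalities of a discrete Morse function. On a graph $\G$ every simplex is either a vertex ($0$-simplex) or an edge ($1$-simplex); a vertex has no simplices below it, and an edge has no simplices above it. Reading off \eqref{eq: critical simplex} in this situation: a vertex $v$ fails to be critical exactly when some incident edge $e$ has $f(e)\le f(v)$, and an edge $e$ fails to be critical exactly when some endpoint $v$ has $f(v)\ge f(e)$. The first line of \eqref{eq:D_Morse_f} applied at $v$ says there is \emph{at most one} such edge, so every non-critical vertex has a \emph{unique} incident edge $m(v)$ with $f(m(v))\le f(v)$; dually, the second line of \eqref{eq:D_Morse_f} applied at $e$ shows every non-critical edge has a unique endpoint $m(e)$ with $f(m(e))\ge f(e)$.

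First I would check that $m$ interchanges non-critical vertices and non-critical edges and is an involution on the set of non-critical simplices. If $v$ is non-critical and $e=m(v)$, then $v$ is an endpoint of $e$ with $f(v)\ge f(e)$, so $e$ is non-critical, and since $e$ has only one such endpoint we get $m(e)=v$. Conversely, if $e$ is non-critical and $v=m(e)$, then $e$ is an edge incident to $v$ with $f(e)\le f(v)$, so $v$ is non-critical, and since $v$ admits only one such incident edge we get $m(v)=e$. Hence $m\circ m=\mathrm{id}$ on the non-critical simplices; in particular $m$ is a bijection between the non-critical vertices and the non-critical edges (injectivity is also immediate: $m(v)=m(v')$ forces $v=m(m(v))=m(m(v'))=v'$).

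The orbits of $m$ are therefore precisely the pairs $\{v,m(v)\}$ with $v$ a non-critical vertex and $m(v)$ an incident, hence adjacent, non-critical edge, and these orbits partition the full set of non-critical simplices of $\G$ — which is the assertion. There is no genuinely hard step here; the one point requiring care is well-definedness and consistency of the map, namely that no non-critical edge is claimed by two distinct vertices and that the non-strict inequalities $f(e)\le f(v)$ introduce no ambiguity. Both are handled entirely by the ``$\le 1$'' clauses built into the definition \eqref{eq:D_Morse_f} of a discrete Morse function, so no further combinatorial input about the graph is needed; the argument also makes transparent why the graph case is cleaner than the general simplicial case, since a vertex can only fail criticality ``from above'' and an edge only ``from below''.
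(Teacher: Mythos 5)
Your proof is correct and follows essentially the same route as the paper's: the failure of criticality forces at least one exceptional incident simplex, while the ``at most one'' clauses in the definition of a discrete Morse function force at most one, so each non-critical simplex has a unique partner which is itself non-critical. Your write-up is in fact slightly more complete, since you verify explicitly that the pairing map is an involution and hence that the pairs genuinely partition the set of non-critical simplices, a consistency check the paper's argument leaves implicit.
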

\begin{proof}
Let $e\in E(\G)$ be a critical edge of a Morse function $f$ on a graph $\G$. Let $v_1, v_2 \in V(\G)$ be the end vertices of $e$. Then, $f$ must take a greater value on exactly one of $v_1$ and $v_2$ than it does on $e$. That is, by definition of a critical simplex, either $f(v_1)>f(e)$ or $f(v_2)>f(e)$ and by definition of a discrete Morse function only one of these inequalities may hold and the other must be inverted. Without loss of generality, let $f(v_1)>f(e)$. Then, by Equation  \ref{eq: critical simplex}, $v_1$ is a critical vertex. If we begin instead with a non-critical vertex $v\in V(\G)$ of $f$, by the same logical sequence, exactly one of its incident edges must take a value smaller that $f(v)$, and thus that edge is also non-critical. 
\end{proof}
This observation indicates that the Morse inequalities will hold on a graph. Since the Euler characteristic of a graph is given by $\chi(\G)=|V|-|E|$, the ability to pair all non-critical simplices implies that that the value of the $\chi(\G)$ is determined completely by the critical simplices.

This lemma is used in \cite{Contreras_Xu_2019} to develop a discretization of gradient vectors and instantons on graphs, wherein the discrete gradient vector field $\nabla f$ of a combinatorial Morse function $f$ on $\G$ is the set of adjacent non-critical simplex pairs. This results in a similar decomposition with which we began our study of Morse theory.

Then, from a graph, we construct the chain complex  \begin{equation} \label{eq: graph_chain_cmplx}
    0\to\mathbb{C}^{|V|}\xrightarrow{\partial}\mathbb{C}^{|E|}\to 0,
\end{equation}
where the vertex states $\mathbb{C}^{|V|}$ correspond to the zero-forms (Witten's bosonic states) and the edge states $\mathbb{C}^{|E|}$ correspond to the 1-forms (Witten's fermionic states). To find the boundary operator, we can ask which operators act from vertex states to edge states and and vice versa. But we have already seen such an operator, the incidence matrix!

 Indeed, we can understand the graph incidence operator and its transpose as discrete versions of the de Rham co-boundary and boundary operators:\begin{equation}
     d=I^T \;\;\; \text{and}\;\;\; d^\dagger=I.
 \end{equation}
From the supersymmetric picture, we would say that the graph supercharges are of the form \begin{equation}
    Q_\G =  \left (\begin{array}{c|c}
    \  0 &  I\ \\ \hline
    \ I^T & 0 \ \\  \end{array} \right ),
\end{equation} with \begin{equation}
    M \mapsto I^T \;\; \;\text{and} \;\;\; M^\dagger \mapsto I.
\end{equation}
Then, the induced graph Laplace operator is given by \begin{equation} \label{eq: SUSY_G_L}
    \Delta_{S,\G}=dd^\dagger+d^\dagger d = II^T + I^TI = \Delta_+ \oplus \Delta_-. 
\end{equation} We call this direct sum of the even (bosonic) graph Laplacian and the odd (fermionic) graph Laplacian the \textit{supersymmetric graph Laplacian}. 

The combinatorial supercharges constructed with the graph incidence matrix obey the supersymmetry algebras, with the exception that $I^2\neq {I^T}^2 \neq 0.$ (Indeed, in most cases this equation doesn't even make sense, since $I$ and $I^T$ are not usually square matrices.) However, they are still appropriate choices for $d^\dagger$ and $d$ because the truncation for the graph cochain complex trivially mimics this effect. 

The recovery of our even and odd graph Laplacian when we consider Morse theory on graphs is ideal, because we have already shown that there is a close relationship between the steady states of the graph Lapalacian and the topology of the graph. Specifically, we proved that
\begin{enumerate}
    \item $\dim[\ker(\Delta_+ (\G))]=\# \; \text{of connected components of}\;\G$ (Theorem \ref{Thm: DimKer+})
    \item $\dim[\ker(\Delta_- (\G))]=\# \; \text{of independent cycles of}\;\G$ (Theorem \ref{Thm: dimker-}). 
\end{enumerate} 

We can use this knowledge to find the Betti numbers of the graph.
\begin{theorem}
    On a graph, $b_0$ and $b_1$ equal the number of connected components and independent cycles, respectively. 
\end{theorem}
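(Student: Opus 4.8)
The plan is to identify the Betti numbers $b_0$ and $b_1$ of the graph with the dimensions of the kernels of the supersymmetric graph Laplacian's two summands, and then invoke the topological characterizations already established in the paper. Concretely, the chain complex \eqref{eq: graph_chain_cmplx} is $0 \to \mathbb{C}^{|V|} \xrightarrow{\partial} \mathbb{C}^{|E|} \to 0$ with $\partial = I^T$ (equivalently, the de Rham picture has $d = I^T$ and $d^\dagger = I$). The Betti numbers of this complex are, by definition (as recalled in the Witten--Morse discussion), $b_0 = \dim[\ker(d)] - \dim[\mathrm{Im}(d_{-1})]$ and $b_1 = \dim[\ker(d_{+1})] - \dim[\mathrm{Im}(d)]$, where $d_{-1}$ and $d_{+1}$ are the (zero) maps into $\mathbb{C}^{|V|}$ and out of $\mathbb{C}^{|E|}$, respectively. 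Since those flanking maps are zero, this collapses to $b_0 = \dim[\ker(I^T)]$ and $b_1 = |E| - \mathrm{rank}(I^T) = \dim[\mathrm{coker}(I^T)]$.

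First I would observe that Hodge theory in this finite-dimensional setting identifies harmonic forms with homology: $b_p = \dim[\ker(\Delta)|_{\text{degree }p}]$. Here degree $0$ is vertex states and degree $1$ is edge states, and by \eqref{eq: SUSY_G_L} the degree-$0$ part of $\Delta_{S,\G}$ is $\Delta_+ = II^T$ while the degree-$1$ part is $\Delta_- = I^TI$. Thus $b_0 = \dim[\ker(\Delta_+(\G))]$ and $b_1 = \dim[\ker(\Delta_-(\G))]$. To make this self-contained I would spell out the two elementary linear-algebra facts needed: $\ker(II^T) = \ker(I^T)$ and $\ker(I^TI) = \ker(I)$ (both already used in the proof of Theorem \ref{thm:+-_same_spectrum} and Theorem \ref{Thm:Ker-}), together with $\dim\ker(I^T) = |V| - \mathrm{rank}(I)$ and, by rank--nullity on $I\colon \mathbb{C}^{|E|}\to\mathbb{C}^{|V|}$, $\dim\ker(I) = |E| - \mathrm{rank}(I)$. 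This shows $b_0$ equals the dimension of degree-$0$ harmonics and $b_1$ equals the dimension of degree-$1$ harmonics, matching the homology of \eqref{eq: graph_chain_cmplx}.

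Then I would simply quote the two topological results already proved in the paper: Corollary \ref{Thm: DimKer+} gives $\dim[\ker(\Delta_+(\G))] = \#\{\text{connected components of }\G\}$, and Corollary \ref{Thm: dimker-} gives $\dim[\ker(\Delta_-(\G))] = \#\{\text{independent cycles of }\G\}$. Chaining these equalities yields $b_0 = \#\{\text{connected components}\}$ and $b_1 = \#\{\text{independent cycles}\}$, which is the claim. As a consistency check one can note $b_0 - b_1 = \mathrm{rank}(I^T) - |E| + |V| - \mathrm{rank}(I^T)$ wait --- more cleanly, $b_0 - b_1 = |V| - |E| = \chi(\G)$, recovering \eqref{eq: Witten=Euler} for graphs.

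The main obstacle is not difficulty but bookkeeping: being careful that the ``$b_\lambda$'' of the abstract Witten--Morse setup, defined via $\ker$ and $\mathrm{Im}$ of the (co)boundary operators, really does coincide with the kernel of the corresponding graded piece of $\Delta_{S,\G}$ --- i.e. the finite-dimensional Hodge decomposition $\mathbb{C}^{|V|} = \ker(I^T)\oplus \mathrm{Im}(I)$ and $\mathbb{C}^{|E|} = \ker(I)\oplus\mathrm{Im}(I^T)$, which is where the orthogonality of $\ker$ and image under the adjoint enters. Once that identification is in place, the proof is a two-line citation of Corollaries \ref{Thm: DimKer+} and \ref{Thm: dimker-}.
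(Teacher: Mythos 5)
Your proposal is correct and follows essentially the same route as the paper: reduce $b_0$ and $b_1$ to $\dim\ker(I^T)$ and $\dim\ker(I)$ via the truncated (co)chain complex, then cite the already-established kernel characterizations of $\Delta_+$ and $\Delta_-$. Your extra care with the degree-one case --- passing from $\dim[\mathrm{coker}(I^T)]$ to $\dim\ker(I)$ via the orthogonal decomposition $\mathbb{C}^{|E|}=\ker(I)\oplus\mathrm{im}(I^T)$ --- is actually a welcome tightening of a step the paper's own proof glosses over.
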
\begin{proof}
 Let $\G$ be a finite graph. Then $$b_p=\dim[H_p(\G)]=\dim[\text{ker}(\partial_p)/\text{im}(\partial_{p+1})]=\dim[\text{ker}(\partial_p)]-\dim[\text{im}(\partial_{p+1})].$$ But $\dim [\text{im}(\partial_{p+1})]=1$, since the chain is truncated and everything is sent to $0$. Thus, \begin{equation*}
    \begin{split}
        b_0&= \dim[H_v(K)]=\dim[\text{ker}(I^T)]\\ b_1&=\dim[H_e(K)]=\dim[\text{ker}(I)] \;\; .
    \end{split}
\end{equation*}
On the other hand, from Theorems \ref{Thm:Ker+} and \ref{Thm:Ker-} that $\dim[\text{ker}(I^T)] = \# \;\text{of connected components  of}\; \G$ and $\dim[\text{ker}(I)]= \# \;\text{of independent cycles of}\; \G $. 
This completes the proof. 
\end{proof}

We follow this with the observation that, by our analysis of the zero-energy states of the even and odd graph Laplacians in Section \ref{s:SteadSF_G}, the vertex Betti number $b_0$ is exactly the number of bosonic zero-energy eigenstates and the Betti number $b_1$ the number of fermionic zero-energy eigenstates. 

Finally, this let's us see that the Euler characteristic of a graph is equal to the Witten index of a graph for our supersymmetric graph Laplacian. 

\begin{equation}\begin{split} \label{eq: G_Witten_Index}
      \chi(\G)&=|V|-|E|\\&=b_0-b_1\\&= \#\text{connected components} - \#\text{independent cycles}\\&=\#\text{$0$-energy vertex (bosonic) states} - \#\text{$0$-energy edge (fermionic) states} \\&=W(\G).
\end{split}
\end{equation}

This is precisely the relation Witten exploited on manifolds to prove the Morse inequalities using supersymmetry. Knowing that this relation holds on graphs, we show how we can use it to study combinatorial supersymmetric quantum mechanics. 

\section{Main Results} \label{Results}
In the previous section, we worked our way along the black arrows, from supersymmetry to combinatorial Morse theory using Witten's perturbative approach. The link between supersymmetry and Morse theory in the continuum and the observation that Morse theory can be implemented on graphs, leads us to believe that realizing supersymmetric quantum mechanics on graphs is possible and may yield interesting results, both in the study of quantum theory on graphs, and in the study of graphs themselves.

Indeed, we now show that graphs are, mathematically, naturally supersymmetric systems. Combining the framework of discrete Morse theory our graphical model of quantum mechanics, we posit and prove some new theorems that enable us to begin to untangle the supersymmetric nature of graphs and to use graphs to model supersymmetric quantum mechanical systems. These results concern the spectrum of the supersymmetric graph Laplacian (Section \ref{s: SUSY_L_spectrum}), the Dirac operator on graphs and supersymmetric path integrals (Section \ref{s:SUSY_Dirac_G}), and spontaneous supersymmetry breaking (Sections \ref{s: SSB_G} and \ref{s:rewiring}). 

\subsection{The Supersymmetric Structure of Graphs} \label{s: SUSY_L_spectrum} \nocite{Ogurisu_2003} \nocite{Ogurisu_2002} \nocite{Post_2009} \nocite{Requardt_2005}
We have seen that graphs display a natural mathematically supersymmetric structure, with the incidence matrix and its transpose forming ``supercharge''-like objects which act between vertex and edge states. Drawing a vertex-to-boson and edge-to-fermion analogy allowed us to construct what we called the supersymmetric graph Laplacian (Equation \ref{eq: SUSY_G_L}) which we used to find the Betti numbers of the graph. But, does the association of vertex states as bosonic and edge states actually make \textit{physical} sense? The answer, we argue, is yes.

The Hilbert space of quantum mechanical states can be decomposed into a direct sum of bosonic and fermionic sectors. We should be able to describe the evolution of states within each sector, independently, and the direct sum of these bosonic and fermionic evolution operators should describe the evolution of mixed states. Then, given our supersymmetric graph Laplacian\begin{equation*}
    \Delta_{S,\G}= II^T + I^TI, 
\end{equation*} we would expect $II^T$ and $I^TI$  to act as Laplace operators in the vertex and edge sectors, respectively.

We have already shown, when constructing quantum mechanics on graph that they do precisely this! The even graph Laplacian $\Delta_+=II^T$ controls the evolution vertex states $|\psi\ket_v$ via the vertex Schr\"odinger equation (Equation \ref{eq:Schrod+_G}) and the odd graph Laplacian $\Delta_-=I^TI$ controls the evolution of edge states $|\psi\ket_e$ via the edge Schr\"odinger (Equation \ref{eq:Schrod-_G}). This indicates to us that the association is, at least at face-value, sound. 

Let us now lay out clearly the players in our graph construction of supersymmetric quantum mechanics. In the continuum, we considered the Hilbert space $$\mathcal{H}=\mathcal{H}^+\oplus \mathcal{H^-}= L^2(\mathbb{C}),$$
with bosonic sector $\mathcal{H}^+$ and fermionic sector $\mathcal{H}^-$. On graphs, we consider the analogous Hilbert space of normalizable functions which assign a complex value to each simplex $$\mathcal{H}_\G=L^2(\G)=\left\{ |\psi\ket: K (\G)\to \mathbb{C}, \sum_{\sigma \in S(G)} |f(\sigma)|^2<\infty \right\}.$$
Since the graph partitions neatly into vertex and edge sets $K(\G)=V(\G)\oplus E(\G)$, 
$$\mathcal{H}_\G=\mathcal{H}_\G^+\oplus\mathcal{H}_\G^-=L^2(\G)=L^2(V(\G))\oplus L^2(E(\G)),$$ with vertex-bosonic sector $\mathcal{H}_\G^+$ and edge-fermionic sector $\mathcal{H}_\G^-$.\footnote{We will sometimes refer to these sectors (and the states within them) by only either the simplex type or the particle type, whichever is most relevant, and the other should be understood to be true also.}

We consider the evolution of mixed states in $\mathcal{H_\G}$, $$ |\psi\ket_{v,e}= \begin{pmatrix}
      B_v\\ \hline F_e
    \end{pmatrix}  ,$$ governed by the supersymmetric Laplacian (the free SUSY Hamiltonian) $$ \Delta_{S}= \begin{pmatrix}
        \Delta_+&0\\0&\Delta_-
    \end{pmatrix}.$$
Graph supercharges, which act between $\mathcal{H}_\G^+$ and $\mathcal{H}_\G^-$, take the form \begin{equation*}
    Q_{\G 1} =  \left (\begin{array}{c|c}
    \  0 &  I\ \\ \hline
    \ I^T & 0 \ \\  \end{array} \right ), \;\;\;  Q_{\G 2} =  \left (\begin{array}{c|c}
    \  0 &  iI\ \\ \hline
    \ -iI^T & 0 \ \\  \end{array} \right ), 
\end{equation*}
where $I$ is the incidence operator of $\G$. Notice that the continuum superalgebra structure is preserved, with $$[Q_\G, \Delta_{S}]=0 \;\; \text{and} \;\; Q_\G^2=\Delta_{S}.$$

We propose a natural graph analog of the fermion parity operator. Let $\G$ be a graph with order $m$ and size $n$. Then, the $(m+n)\times(m+n)$ diagonal matrix \label{gferp} \begin{equation}
    F_\G=  \left (\begin{array}{c|c}
    \ \;1 & 0\;\\ \hline
     \ \;0 & -1 \;
 \end{array} \right ) \begin{matrix}
   \} \; \text{\tiny{$m\times m $}} \\ \}  \; \text{\tiny{$n\times n$}} \;\;
    \end{matrix}, 
\end{equation} with the property that $$\{Q_i, F_\G\}=0,$$ acts as a fermion parity operator on graphs, distinguishing vertex states from edge states by 
 \begin{equation*}
    F_\G|\psi\ket_{v,e}= \left (\begin{array}{c|c}
    \ \;1 & 0\;\\ \hline
     \ \;0 & -1 \;
 \end{array} \right ) \begin{pmatrix}
      B_v\\ \hline F_e
    \end{pmatrix} = \begin{pmatrix}
      B_v\\ \hline -F_e
    \end{pmatrix}.
\end{equation*}
Recall that in the continuum, the (regulated) trace of the fermion parity operator is equivalent to the Witten index (Equation \ref{eq: W_Index_Tr}). Likewise, the trace of $F_\G$ is the Euler characteristic of the graph, which we have shown to be equivalent to the Witten Index of a SUSY graph quantum system: 
$$\text{Tr}(F_\G)=\sum m(1)+n(-1)= |V|-|E|= \chi(\G)= W(\G).$$
We can say more by examining the spectrum of the supersymmetric graph Laplacian.  

\begin{theorem}
\label{thm:SUSY_degeneracy}All non-zero eigenvalues of the supersymmetric graph Laplacian $\Delta_S=\Delta_+\oplus\Delta_-$  are degenerate.
\end{theorem}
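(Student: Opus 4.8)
The plan is to leverage Theorem \ref{thm:+-_same_spectrum}, which establishes that the non-zero eigenvalues of $\Delta_+ = II^T$ and $\Delta_- = I^TI$ coincide, together with the explicit eigenvector correspondence exhibited in its proof. Recall that if $\lambda \neq 0$ is an eigenvalue of $\Delta_-$ with eigenvector $v$, then $Iv$ is an eigenvector of $\Delta_+$ with the same eigenvalue, and conversely $I^T w$ carries an eigenvalue $\mu \neq 0$ of $\Delta_+$ to an eigenvector of $\Delta_-$. The first step is to record that these maps are genuinely inverse to each other up to scaling: if $\Delta_- v = \lambda v$, then $I^T(Iv) = (I^TI)v = \Delta_- v = \lambda v$, so applying $I$ and then $I^T$ recovers $\lambda v$, a nonzero multiple of $v$ since $\lambda \neq 0$. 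Hence $v \mapsto Iv$ and $w \mapsto I^T w$ restrict to mutually inverse (up to the scalar $\lambda$) linear isomorphisms between the $\lambda$-eigenspace of $\Delta_-$ and the $\lambda$-eigenspace of $\Delta_+$.

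Next I would assemble the eigenspaces of $\Delta_S = \Delta_+ \oplus \Delta_-$. Because $\Delta_S$ is block diagonal, its eigenspace for any $\lambda$ is the direct sum of the $\lambda$-eigenspace of $\Delta_+$ (sitting in the vertex sector $\mathcal{H}_\G^+$) and the $\lambda$-eigenspace of $\Delta_-$ (sitting in the edge sector $\mathcal{H}_\G^-$). For $\lambda \neq 0$, the isomorphism from the previous step shows these two pieces have equal dimension, say $k \geq 1$; therefore the $\lambda$-eigenspace of $\Delta_S$ has dimension $2k$, which is even and in particular greater than $1$. This is exactly the assertion that every nonzero eigenvalue of $\Delta_S$ is degenerate (with even degeneracy). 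One may phrase this more physically: the isomorphism $v \mapsto Iv$ is, up to normalization, the action of the supercharge $Q_{\G 1}$ pairing each nonzero-energy edge (fermionic) state with a vertex (bosonic) state of the same energy, recovering Theorem \ref{thm: D_SUSY_H} in the graph setting.

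The only subtlety — and the main thing to be careful about rather than a genuine obstacle — is making sure the maps $v \mapsto Iv$ and $w \mapsto I^T w$ are injective on the relevant eigenspaces, so that "isomorphism" is justified rather than merely "surjection." This follows immediately from the computation above: if $Iv = 0$ for $v$ in the $\lambda$-eigenspace of $\Delta_-$ with $\lambda \neq 0$, then $\lambda v = I^T I v = I^T 0 = 0$, forcing $v = 0$; and symmetrically for $I^T$. Since the domain and codomain then have the same finite dimension and the composite is $\lambda \cdot \mathrm{id}$ (invertible), both maps are isomorphisms. I would also note explicitly that this argument says nothing about $\lambda = 0$, where the degeneracies of $\Delta_+$ and $\Delta_-$ are governed instead by the number of connected components and independent cycles (Corollaries \ref{Thm: DimKer+} and \ref{Thm: dimker-}) and need not be equal — which is precisely what will later obstruct spontaneous supersymmetry breaking.
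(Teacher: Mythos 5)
Your proposal is correct and takes essentially the same route as the paper: both invoke Theorem \ref{thm:+-_same_spectrum} and the block-diagonal structure of $\Delta_S$ to conclude that each non-zero eigenvalue appears in both the vertex and edge sectors. You go a step further by making the eigenspace isomorphism via $I$ and $I^T$ explicit, which upgrades the paper's ``at least twofold degeneracy'' to the sharper statement that every non-zero eigenspace of $\Delta_S$ has even dimension.
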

\begin{proof}
Recall that that the spectra of the even and odd graph Laplacians $\Delta_+$ and $\Delta_-$ coincide in their non-zero eigenvalues (Theorem \ref{thm:+-_same_spectrum}). Then, since the SUSY graph Laplacian is the direct sum of the even and odd graph Laplacians, the spectrum of $\Delta_S$ is the union of their spectra and each non-zero eigenvalue of $\Delta_S$ has at least twofold degeneracy.
\end{proof}

\begin{theorem}
    The supersymmetric graph Laplacian is positive semi-definite.
\end{theorem}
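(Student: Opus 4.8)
The plan is to reduce the claim to the elementary fact that a Gram-type matrix $M^T M$ is always positive semi-definite, together with the observation that a block-diagonal assembly of positive semi-definite matrices is again positive semi-definite. Concretely, recall from Equation \ref{eq: SUSY_G_L} that $\Delta_S = \Delta_+ \oplus \Delta_- = II^T \oplus I^T I$, where $I$ is the (real) incidence matrix of $\G$. So the entire statement is about the direct sum of the two real symmetric matrices $II^T$ and $I^T I$.

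First I would record that each summand is individually positive semi-definite. This is exactly the computation already carried out inside the proof of Theorem \ref{thm:+-_same_spectrum}: for any vertex state $v \in \C^V$ one has $v^T(II^T)v = (I^T v)^T(I^T v) = \|I^T v\|^2 \geq 0$, and for any edge state $w \in \C^E$ one has $w^T(I^T I)w = (Iw)^T(Iw) = \|Iw\|^2 \geq 0$. (One can equally cite Theorem \ref{Th:Properties Delta+}(3) for $\Delta_+$ and note that $\Delta_-$ is positive semi-definite by the identical argument.) Next I would assemble these: for a general mixed state $|\psi\ket_{v,e} = \begin{pmatrix} B_v \\ F_e \end{pmatrix}$, block multiplication gives $\bra \psi | \Delta_S | \psi \ket = B_v^T \Delta_+ B_v + F_e^T \Delta_- F_e = \|I^T B_v\|^2 + \|I F_e\|^2 \geq 0$, which is precisely the assertion.

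An equivalent and even shorter route, if one prefers a spectral argument, is to invoke the results already in hand: the eigenvalues of $\Delta_+$ are non-negative by Theorem \ref{Th:Properties Delta+}(4), the non-zero eigenvalues of $\Delta_-$ coincide with those of $\Delta_+$ by Theorem \ref{thm:+-_same_spectrum} (and $\Delta_-$ is symmetric with a non-trivial kernel by Theorem \ref{Thm:Ker-}), so every eigenvalue of $\Delta_-$ is non-negative as well. Since the spectrum of a direct sum is the union of the spectra, every eigenvalue of the symmetric matrix $\Delta_S$ is non-negative, which for a real symmetric (Hermitian) matrix is equivalent to positive semi-definiteness. I would probably present the quadratic-form version as the main proof and mention the spectral version as a remark.

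Honestly, there is no real obstacle here; the statement is a direct consequence of facts established earlier in the paper, and the only thing worth being careful about is bookkeeping the block structure correctly (i.e.\ that the cross terms in $\bra\psi|\Delta_S|\psi\ket$ vanish because $\Delta_S$ is block diagonal). If anything, the subtle point to flag — though it is not needed for this proof — is that $\Delta_-$, unlike $\Delta_+$, depends on the orientation of $\G$; but positive semi-definiteness holds for every orientation since the argument $\|Iw\|^2 \geq 0$ is orientation-agnostic.
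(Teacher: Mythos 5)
Your proposal is correct and follows essentially the same route as the paper: both arguments reduce to the positive semi-definiteness of $\Delta_+$ and $\Delta_-$ (via the quadratic forms $\|I^T v\|^2$ and $\|Iw\|^2$, or equivalently the coincidence of their non-zero spectra) and then use the block-diagonal structure of $\Delta_S=\Delta_+\oplus\Delta_-$. Your quadratic-form presentation is, if anything, slightly more self-contained than the paper's, which leans on the spectral coincidence from Theorem \ref{thm:+-_same_spectrum} for the odd Laplacian.
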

\begin{proof}
The spectra of both the even graph Laplacian $\Delta_+$ and the odd graph Laplacian $\Delta_-$ are positive semi-definite. This is apparent from the quadratic form of the even Laplacian (Equation \ref{eq:+_qform}) and the coincidence of the non-zero eigenvalues of the two graph Laplacians (\ref{thm:+-_same_spectrum}). Then since $\Delta_S = \Delta_+ \oplus \Delta_- $, the spectrum of $\Delta_S$ is positive semi-definite.
\end{proof}
Thus, on a graph, the non-zero energies of a graph supersymmetric system are degenerate and positive, as is  true for supersymmetric systems in the continuum (Theorems \ref{thm: D_SUSY_H} and \ref{thm: PD_SUSY_H} ).

\subsection{Walk Sums and the Graph Dirac Operator, Revisited} \label{s:SUSY_Dirac_G} \nocite{Mnev_2007} \nocite{Yu_2017}
In the continuum, the square of the Dirac operator $\slashed{\partial}^2$ can be considered as the Hamiltonian of a supersymmetric quantum mechanical system. Cooper et al. give a clear justification of this relationship in \cite{Cooper_Khare_Musto_Wipf_1988}. In particular, they study the spectrum and eigenfunctions of $\slashed{\partial}^2$ by exploiting SUSY methods. They discuss two supersymmetric decompositions of $\slashed{\partial}^2.$ We use the first of these, called the chiral supersymmetry. Using this decomposition, the Dirac operator can be shown to be related to the supercharges by
\begin{equation*}\begin{split}
   \slashed{\partial}&=M+M^\dagger \\\slashed{\partial}^2&= H_S= \left (\begin{array}{c|c}
    \  H^+ &  0\\ \hline
     \ 0 & H^-  \\ 
 \end{array} \right )=\left (\begin{array}{c|c}
    \  M^\dagger M &  0\\ \hline
     \ 0 & M M^\dagger \\ 
 \end{array} \right ).
\end{split}
\end{equation*}

By design, our discrete Dirac operator $\slashed{D}_I$ and supersymmetric graph Laplacian preserve this relationship. That is, in Section \ref{graph QM}, after defining vertex-edge states on graphs, we constructed the incidence Dirac operator (Equation \ref{eq:D_G}) as $$\slashed{D}_I=\begin{pmatrix}
    0&I\\I^T&0
\end{pmatrix}.$$This was a square root of the direct sum of the even and odd graph Laplacians (Equation \ref{eq:D^2_G}) $$\slashed {D}_I^2=\begin{pmatrix}
    \Delta_+&0\\0&\Delta_-
\end{pmatrix},$$
which we know recognize as the supersymmetric graph Laplacian, or the free supersymmetric Hamiltonian.

In the continuum, recognition of the Dirac operator as a SUSY Hamiltonian enables the simpler analytic methods of solving the Schr\"odinger equation to be leveraged toward solving the Dirac equation, with the supersymmetric Hamiltonian as the intermediary. We propose that studying the incidence Dirac operator as a discrete SUSY operator in the context of walk sums could allow for even further simplification of the analytic methods via finitely convergent sums on graphs.  

Recall that the $ij$ elements of the $k^{th}$ power of the graph Dirac operator counts the number of vertex-edge walks of length $k$ of non-canceling sign (Equation \ref{eq:D_k_walk}):$$[\slashed {D}_I]^k_{ij}=\sum_{\gamma: \;i\to j, \;k} \text{sgn}(\gamma_{ve}).$$
Since $\slashed{D}_I^2$ is a SUSY graph Hamiltonian, we can consider the graph Euclidean propagator \begin{equation*}
    \bra\sigma_i|e^{t\slashed{D}^2}|\sigma_j\ket=\sum_{k=0}^\infty \frac{t^k}{k!}(\slashed{D}^2)^k,
\end{equation*}
where $\sigma_i$ and $\sigma_j$ are two simplices of the graphs. We can understand this as a sum over vertex-edge walks of length $2k$: 
\begin{equation*}
    \bra\sigma_i|e^{t\slashed{D}^2}|\sigma_j\ket=\sum_{\text{walks}\ \gamma_{\sigma_i\sigma_j}} \text{sgn} (\gamma_{ve})\frac{t^{(|\gamma|/2)}}{(|\gamma|/2)!}
\end{equation*}
Recall that vertex-edge walks of even length are really just vertex and edge superwalks.

We can also compute the trace of the propagator, summing over all closed walks. $$\bra\sigma|e^{t\slashed{D}^2}|\sigma\ket=\sum_{k=0}^\infty \frac{t^k}{k!} \ \left (\sum_{\gamma: \sigma\to\sigma,\ 2}\text{sgn}(\gamma_{ve}) \right )^k.$$
Any $\gamma_{ve}$ a length--$2$ vertex-edge walk can be equivalently though of as a length-$1$ superwalk $\gamma'$. For both vertex superwalks and edge superwalks, if the superwalk is hesitant (returns to the same simplice $\sigma$ where it began), $\text{sgn}(\gamma')=1$.  Furthermore, if $\sigma$ is a vertex there are $\text{deg}(\sigma)$ unique choices of walk (one through each incident edge) and if $\sigma$ is an edge, there are exactly two unique walks (one through each end vertex). So, $$\bra v_i |e^{t\slashed{D}^2}|v_i\ket=\sum_{k=0}^\infty \frac{t^k}{k!} (\text{deg(v)})^k$$ and 
  $$\bra e_i |e^{t\slashed{D}^2}|e_i\ket=\sum_{k=0}^\infty \frac{t^k}{k!} (2)^k.$$

\subsection{Spontaneous Supersymmetry Breaking on Graphs} \label{s: SSB_G}
We've seen that the continuum, it can be quite challenging to determine conclusively if the vacuum energy of a supersymmetric Hamiltonian is positive, thus breaking supersymmetry. However, a Witten index of zero is a \textit{necessary }condition for supersymmetry breaking and is far more easily computed, as it is a topological invariant of the underlying manifold. We have also seen via discrete Witten-Morse theory that the Witten index of a graph SUSY system is a topological invariant of the graph. More precisely, \begin{equation} \label{eq:WI_G}
    W_\G= \text{Tr}(F_\G)= |V|-|E|= \# \text{connected components} - \# \text{independent cycles}.
\end{equation}
Under the reasonable proposition that supersymmetry on a graph is, as in the continuum, said to be spontaneously broken when there are no zero-energy vertex-bosonic or edge-fermionic states, $W_\G=0$ is likewise a necessary condition  for spontaneous supersymmetry breaking on a graph. We can immediately see from Equation \ref{eq:WI_G} that this condition can only be satisfied only on the particular class of graphs for which $|V|=|E|$.\footnote{By Equation \ref{eq:WI_G}, $|V|=|E|\Leftrightarrow$ the number of connected components of $\G$ equals the number of independent cycles of $\G$.}

Interestingly, this condition expressly does not hold on a lattice graph, the graph proposed by Mnev in \cite{Mnev_2016} to recover non-supersymmetric quantum mechanics in the continuum when embedded on a torus. Thus, in our construction, without additional impositions on the graph structure, a ``realistic'' lattice-graph model of quantum theory is incompatible with a ``realistic'' supersymmetric one.\footnote{Motivated by a desire to link lattice QCD with superstring quantum gravity theories, supersymmetry on a lattice is an active field of study that has puzzled physicists for a long time. Naively discretizing continuum QCD theories and hoping to stumble upon an action on the lattice that does not explicitly break supersymmetry is nearly impossible. However, Catterall has demonstrated that for certain chiral supersymmetries that are unbroken in the continuum but explicitly broken at non-zero lattice spacings, the implementation of ``topological twisting'' can preserve the supersymmetry on the lattice \cite{Catterall_2015}. Once a lattice supersymmetry is found, the question then becomes whether it can be spontaneously broken. Catterall and Veernala found that in their topologically twisted lattice super QCD, the supersymmetry can break spontaneously \cite{Catterall_Veernala_2015}. A graph theoretic analysis of such a twisting mechanism from the perspective of the graph supersymmetric structure developed in this work would be an interesting path to explore further.} 

Just as we can, in the continuum, rule out supersymmetric theories that have zero-energy vacuum states in a finite volume and thus are not candidates spontaneously broken SUSY in the infinite volume limit, we can consider our (finite) graph to act like a finite volume quantum system where we can study supersymmetry more simply. We present an example of this, returning to the case of a particle on $S^1$ seen previously in the continuum in Example \ref{ex:Particle_on_circle}. \newline

\begin{example}   \label{ex:particle_on_a_cycle}  \textbf{Particle on an \textit{n}-Cycle}\\
Consider the SUSY Hilbert space of vertex-edge states on the cycle graph $C_n$ $$\mathcal{H}_\G=\mathcal{H}_\G^+\oplus\mathcal{H}_\G^-=L^2(C_n)=\left\{ |\psi\ket: K (C_n)\to \mathbb{C}, \sum_{\sigma \in S(G)} |f(\sigma)|^2<\infty \right\}.$$

We can see immediately that, for any system in the space, $W=0$ but supersymmetry is not spontaneously broken since the graph is connected and has one cycle. This is exactly as we saw in the case of the supersymmetric particle on a circle in Example \ref{ex:Particle_on_circle}.

Furthermore, the supersymmetric Laplacian of a cycle graph is \begin{equation}
    \Delta_S=\Delta_+\oplus\Delta_-=\begin{pmatrix}
          2&-1& 0&\dots&-1\\-1&2&-1&\dots& 0\\0&-1&2&\dots&0\\\vdots&\vdots&\ddots&\ddots&\vdots\\-1&0&\dots &-1&2
    \end{pmatrix} \oplus \Delta_-.
\end{equation}
Since $\Delta_+$ is orientation independent, we can write it out explicitly, though we can't do the same for the orientation dependent $\Delta_+$. Notice, that $\Delta_+$ is circulant and each of its diagonal entries is $2$ since $C_n$ is $2$-regular.\footnote{By contrast, $\Delta_-$ need not be circulant. For example, consider $C_3$ with two edges oriented clockwise and one oriented counter-clockwise.} Then, we may express $\Delta_+$ as the sum $$\Delta_+=D-A=\begin{pmatrix}
          2&0& 0&\dots&0\\0&2&0&\dots& 0\\0&0&2&\dots&0\\\vdots&\vdots&\ddots&\ddots&\vdots\\0&0&\dots &0&2
    \end{pmatrix}-\begin{pmatrix}
          0&1& 0&\dots&1\\1&0&1&\dots& 0\\0&1&0&\dots&0\\\vdots&\vdots&\ddots&\ddots&\vdots\\1&0&\dots &1&0
    \end{pmatrix}.$$
 Evidently, the spectrum of $D$ is $\{2\}$, with multiplicity $n$. The adjacency matrix $A$ is also circulant, so the form of its spectrum is well known (see, for example \cite{Lee_Luo_Sagan_Yeh_1992}). Its eigenvalues are \begin{equation*}
      \lambda_{(k+1)}=x^k+x^{(n-1)k}=e^{2k\pi i/n}+e^{-2k\pi i /n}=2\cos(2k\pi/n),
 \end{equation*} for $k=0,1,\dots(n-1).$ Then, the eigenvalues of $\Delta_+$ are \begin{equation*} \begin{split}
      \mu_{(k+1)}=2-2\cos(2k\pi/n)=4\sin^2(k\pi/n), \;\;\; k=0,1,\dots(n-1).
 \end{split}
 \end{equation*}
 If we wish to take a continuum limit of the spectrum, we impose a metric on the graph such that each edge has a weight $h=2\pi/n$ and consider the metric Laplacian $\frac{1}{h^2}\Delta_+.$ This can be thought of as describing a uniform physical spacing between points of the cycle. The spectrum becomes
$$\mu_{k+1}= \frac{1}{\pi^2}\left(n^2\sin^2(k\pi/n)\right ) = \frac{1}{\pi^2} \left( \frac{\sin(k\pi/n)}{(1/n)} \right)^2 $$
 
Then, we increase the number of vertices and edges of our graph (taking a more dense discrete sampling). In the limit of $n\to\infty$ and $h\to0,$

$$\mu_{k+1}\to  \frac{1}{\pi^2}(k\pi)^2=k^2.$$
 This is exactly the spectrum of a free particle on $S^1$.

 Note that by Theorem \ref{thm:+-_same_spectrum}, the odd Laplacian has the same spectrum as the even Laplacian, so we can guarantee the recovery of bosonic-fermionic state degeneracy in the continuum limit of the spectrum of $\Delta_S$.  However, it is not yet clear to us exactly how to impose a metric on the odd graph Laplacian nor what the physical meaning of the orientations of edges and how they behave in the continuum limit\footnote{We suggest naively that the orientation of edges may help us build a discrete notion of curl, for the view of the graph Laplacian as a operator on $1$-forms.}.
\end{example}

We've seen that in the case of a lattice graph, a non-zero Witten index prevents supersymmetry from being broken, while in the case of a cycle graph, supersymmetry is unbroken despite that $W(C_n)=0$. For graphs with $W(\G)=0$,  like $C_n$,  we can strengthen the conditions for spontaneous supersymmetry breaking. 

\begin{theorem} \label{thm:SUSY_no_breaking} For the free SUSY Hamiltonian on a combinatorial graph, supersymmetry is spontaneously broken if and only if the graph is null.
\end{theorem}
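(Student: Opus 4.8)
The plan is to reduce the statement entirely to the two kernel‑dimension counts already in hand. On a graph the free SUSY Hamiltonian is the supersymmetric Laplacian $\Delta_S=\Delta_+\oplus\Delta_-$, so a vertex‑edge state has zero energy exactly when its bosonic component lies in $\ker(\Delta_+)$ and its fermionic component lies in $\ker(\Delta_-)$. By the definition of spontaneous supersymmetry breaking adopted here, supersymmetry is broken on $\G$ if and only if $n_B^{E=0}=n_F^{E=0}=0$, i.e.\ if and only if $\ker(\Delta_+(\G))$ and $\ker(\Delta_-(\G))$ are both trivial. So the first step is just to record this reformulation.

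Next I would invoke Corollary~\ref{Thm: DimKer+} and Corollary~\ref{Thm: dimker-}: $\dim\ker(\Delta_+(\G))$ equals the number of connected components of $\G$, and $\dim\ker(\Delta_-(\G))$ equals the number of independent cycles of $\G$. Hence supersymmetry is spontaneously broken if and only if $\G$ has zero connected components \emph{and} zero independent cycles. The forward direction is then the topological observation that a graph has zero connected components precisely when it has no vertices at all, i.e.\ when $\G$ is the null graph; and in that case it has no edges either, so the second condition holds automatically and need not be checked separately.

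For the converse, suppose $\G$ is not null, so $V(\G)\neq\varnothing$. Then $\G$ has at least one connected component, whence $\dim\ker(\Delta_+(\G))\geq 1$; concretely, the vertex state that is $1$ on a chosen component and $0$ elsewhere is a zero‑energy vertex‑bosonic state by Theorem~\ref{Thm:Ker+}. Therefore $n_B^{E=0}\geq 1$ and supersymmetry is not spontaneously broken. Combining the two directions proves the theorem.

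I do not expect any analytic obstacle; the content is entirely in correctly assembling the earlier counting results. The one subtlety worth flagging in the write‑up is that ``null'' must be read as \emph{vertex‑free}, not merely edge‑free: an edgeless graph on $k\geq 1$ isolated vertices still carries $k$ zero‑energy bosonic states (and no fermionic states), so it does \emph{not} break supersymmetry, which is precisely why the hypothesis cannot be weakened to ``$\G$ has no edges.'' It is also worth remarking, as a sanity check against the continuum picture, that the only graph on which supersymmetry breaks is the one with no states at all, so the breaking is vacuous — consistent with, and sharpening, the principle that a genuine finite‑volume system with an unbroken Witten index cannot spontaneously break supersymmetry.
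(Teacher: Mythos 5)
Your proposal is correct and follows essentially the same route as the paper: reduce spontaneous breaking to the triviality of $\ker(\Delta_+)$ and $\ker(\Delta_-)$, invoke Corollaries~\ref{Thm: DimKer+} and~\ref{Thm: dimker-} to translate these into counts of connected components and independent cycles, and observe that zero connected components forces the graph to be null (with the null case then vacuously broken). Your added remark that ``null'' must mean vertex-free rather than merely edge-free is a worthwhile clarification, but the underlying argument is the paper's.
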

\begin{proof}
($\Rightarrow$) If supersymmetry is spontaneously broken, then there must be no zero-energy vertex-bosonic states and no zero-energy edge-fermionic states. Recall that the zero-energy (steady state) solutions of the free graph Schr\"odinger equations are determined by the kernels of the even and odd graph Laplacians. By Theorem \ref{Thm: DimKer+}, the number of zero-energy bosonic (vertex) states is equal to the number of connected components of the graph. Then, the graph must contain no connected components, ie. the graph must be null. 

($ \Leftarrow$) If the graph is null, there are neither vertices nor edges, so the even and odd Laplacian matrices have no eigenvalues. Then, there are no zero-energy fermionic or bosonic states (or indeed, any states at all). 
\end{proof}
\textit{Remark:} The lack of combinatorial graphs on which spontaneous supersymmetry breaking can arise for a free system fits with the physical intuition. In a free theory, fields are non-interacting, but in a supersymmetric field theory, bosonic and fermionic fields necessarily interact, since supercharge operators act between $\mathcal{H^+}$ and $\mathcal{H^-}.$ Thus, without a potential term, the Lagrangian for a free theory lacks the necessary mathematical complexity for supersymmetry to properly manifest.

\subsection{Rewiring and the Spectrum of the SUSY Graph Laplacian  } \label{s:rewiring}
Although supersymmetry cannot be spontaneously broken for the free graph Hamiltonian on combinatorial graphs, we investigate in this section how varying what we will call the \textit{wiring parameter} of such graphs changes the spectrum of the supersymmetric graph Laplacian. We will see that the supersymmetric nature of graphs holds under the variation of this parameter. 

We define the wiring parameter of a finite graph $\G$ to specify which vertices in $V(\G)$ bound which edges in $E(\G).$ Then, we may vary this parameter by \textit{rewiring} the graph, or swapping the end vertices of the edges. If wiring is to behave as a parameter of a SUSY system, rewirings must preserve the dimension of $\mathcal{H}_\G$ as well as the dimensions of each of the vertex and edge sectors, so no edge or vertex may be removed from or added to the graph in the rewiring process. Since we are working on simple graphs, no edge $e=v_iv_j$ can be rewired such that both of its end vertices are the same $(v_i=v_j)$, nor can it be rewired to share both end vertices with another edge. We propose a notion of ``smooth rewirings'' wherein only one edge may be moved at a time, so mimic the smooth variation of continuous parameters. 

\begin{theorem} \label{thm: rwiring_1}
    The Witten index of a graph is invariant under rewiring.
\end{theorem}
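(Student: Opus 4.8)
The plan is to observe that rewiring an edge changes neither $|V(\G)|$ nor $|E(\G)|$, and then invoke the formula $W_\G = |V| - |E| = \chi(\G)$ established in Equation \ref{eq: G_Witten_Index} (and restated in Equation \ref{eq:WI_G}). Since the Witten index of a graph SUSY system has been shown to equal the Euler characteristic $\chi(\G) = |V| - |E|$, and since the defining constraint on rewiring is precisely that no vertex or edge is added or removed (only the incidence data, i.e.\ which vertices bound which edge, is altered), the quantity $|V| - |E|$ is manifestly untouched. Hence $W_\G$ is invariant.

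First I would recall the setup: a rewiring takes an edge $e = v_iv_j$ and replaces one or both of its endpoints with other vertices, subject to the simple-graph constraints (no loops, no doubled edges) spelled out just before the theorem statement. The key point to make explicit is that this operation is a bijection on the vertex set and a bijection on the edge set — it only modifies the incidence matrix $I$ by relocating the $\pm 1$ entries within columns. Then I would write $W_{\G'} = |V(\G')| - |E(\G')| = |V(\G)| - |E(\G)| = W_\G$ for the rewired graph $\G'$, citing Equation \ref{eq:WI_G}.

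It is worth adding a sentence noting that although the individual Betti numbers $b_0$ (number of connected components) and $b_1$ (number of independent cycles) — equivalently $\dim\ker\Delta_+$ and $\dim\ker\Delta_-$ — \emph{can} change under rewiring (for instance, disconnecting a component while creating a new cycle), their difference $b_0 - b_1$ cannot, since it equals $\chi(\G) = |V| - |E|$, a combinatorial quantity insensitive to how the edges are attached. This is exactly the statement that the number of zero-energy bosonic states minus the number of zero-energy fermionic states is preserved, which is the physical content of Witten-index invariance.

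There is essentially no obstacle here: the ``hard part,'' such as it is, is purely definitional — making sure the notion of rewiring is stated so that $|V|$ and $|E|$ are preserved by fiat, and then the result is immediate from the earlier identification $W_\G = \chi(\G)$. If one wanted to be thorough, the only thing to check is that the rewired object is still a legitimate finite simple graph (so that the earlier theory applies to it), which is guaranteed by the simple-graph constraints built into the definition of a smooth rewiring.
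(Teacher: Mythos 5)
Your proposal is correct and matches the paper's argument exactly: the paper also observes that a rewiring adds or removes no simplices, so $W(\G)=|V(\G)|-|E(\G)|$ is unchanged. Your extra remarks about the individual Betti numbers shifting while their difference is fixed are a nice gloss but not needed for the proof itself.
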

\begin{proof}
   Since a rewiring cannot not add or remove simplices of the graph,  $W(\G)= |V(\G)|-|E(\G)|$ is invariant.
\end{proof}
We can use the invariance of the Witten index under rewiring to constrain the possible rewired configurations of any graph. We prove a series of lemmas.

\begin{lemma}
\label{lem:rewiring_1} A rewiring of a graph increases (decreases) the number of connected components of a graph if and only if it increases (decreases) the number of independent cycles by the same number.
\end{lemma}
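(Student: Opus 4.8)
The plan is to reduce the statement to a single bookkeeping identity, namely the invariance of the Witten index under rewiring (Theorem \ref{thm: rwiring_1}). Recall from Equation \ref{eq: G_Witten_Index} that for any finite graph $\G$ we have $W(\G) = |V(\G)| - |E(\G)| = b_0(\G) - b_1(\G)$, where $b_0(\G)$ is the number of connected components and $b_1(\G)$ the number of independent cycles; this identity holds without any connectivity hypothesis, since $b_0$ and $b_1$ are the dimensions of the kernels of $\Delta_+$ and $\Delta_-$ (Corollaries \ref{Thm: DimKer+} and \ref{Thm: dimker-}). First I would record that a rewiring, by definition, alters neither $|V(\G)|$ nor $|E(\G)|$ — it only reassigns which vertices bound which edges — so $W(\G)$ is unchanged, as already observed in Theorem \ref{thm: rwiring_1}.

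Next I would let $\G$ denote a graph and $\G'$ the result of a rewiring, and write $b_0, b_1$ for the invariants of $\G$ and $b_0', b_1'$ for those of $\G'$. Invariance of the Witten index gives $b_0 - b_1 = b_0' - b_1'$, i.e. $b_0' - b_0 = b_1' - b_1$. Setting $k = b_0' - b_0$, this says precisely that the number of connected components changes by $k$ if and only if the number of independent cycles changes by the same integer $k$, which is the claim in both the ``increase'' ($k>0$) and ``decrease'' ($k<0$) directions; the biconditional is then immediate because ``changes by $k$'' on one side forces ``changes by $k$'' on the other.

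I do not expect any real obstacle here: the content of the lemma is entirely contained in the Euler-characteristic identity together with the trivial observation that rewiring preserves the vertex and edge counts. The only point that merits an explicit sentence is that Equation \ref{eq: G_Witten_Index} is valid for disconnected graphs (which it is, since it is assembled from the kernel-dimension theorems component by component), so that the argument applies even when a rewiring disconnects or reconnects the graph. If anything, the ``hard part'' is purely expository — making sure the reader sees that this lemma is the mechanism by which later structural constraints on rewired configurations will be derived.
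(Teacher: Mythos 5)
Your argument is correct and is essentially identical to the paper's own proof: both derive the lemma from the identity $W(\G)=b_0-b_1$ together with the invariance of $W$ under rewiring (Theorem \ref{thm: rwiring_1}). Your version simply spells out the bookkeeping $b_0'-b_0=b_1'-b_1$ more explicitly.
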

\begin{proof}
Since we may equivalently define the Witten index to be $$W (\G)= \# \text{connected components} - \# \text{independent cycles},$$ by the invariance of the Witten index, any change in the number of connected components must be counterbalanced by an equivalent change in the number of independent cycles. 
\end{proof}
In the context of physical systems on graphs, this means that eigenstates that are raised from the zero-energy or lowered to zero-energy under variation in the wiring parameter always shift in boson-fermion pairs. Indeed this could have been anticipated also from the degeneracy of the spectrum of the SUSY graph Laplacian. 

While the proof of Lemma \ref{lem:rewiring_1} is brief, without a thorough understanding of the Witten index of graphs, it is not at all evident from simply looking at a graph that this would would be the case. We provide a few illustrative examples. 

\begin{example}
Figure \ref{fig:Rewiring_5} shows all possible rewirings of the cycle graph $C_6$ with $6$ vertices and $6$ edges, and $W(\G)=0$. Independent cycles are highlighted. The first row contains rewirings with one cycle, the second row rewirings with two cycles, and the third row rewirings with three cycles. As the graph is rewired to form an additional cycle, it also gains an additional connected component. While the number of vacuum states increases, the Witten index remains constant. 
\begin{figure}[ht]
    \centering
    \includegraphics[width=0.6\linewidth]{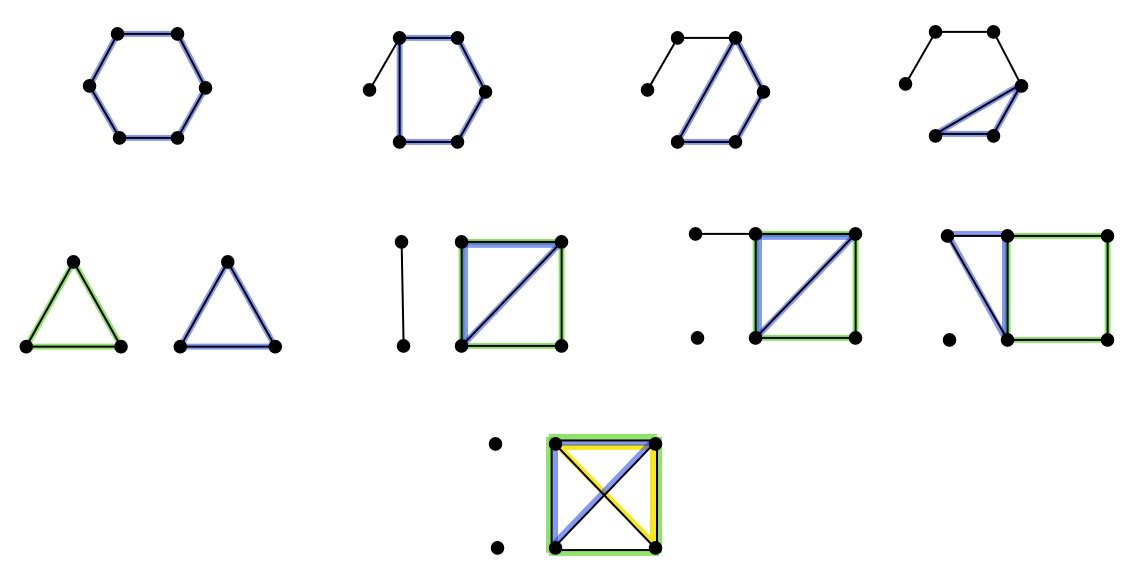}
    \caption{\small{All possible rewirings of $C_6$ with highlighted independent cycles. In each, the number of connected components equals the number of independent cycles.}}
    \label{fig:Rewiring_5}
\end{figure}
\end{example}

\begin{example}
If removing a particular edge of a graph disconnects a connected component (that is, it splits a connected component into two connected components), we will call that edge a \textit{bottleneck.} The bottlenecks of the large connected graph in Figure \ref{fig:Rewiring_2} are colored in red. It is challenging, however, on such large graphs to see that \textit{any} rewiring of one of these bottlenecks which disconnects the graph \textit{necessarily} forms a new independent cycle of the graph. But this is exactly what Lemma \ref{lem:rewiring_1} tells us. 

For the smaller, more manageable graph Hilbert space shown in Figure \ref{fig:Rewiring_3}, we can clearly see that when a bottleneck (red) is rewired to disconnect the top graph, a new independent cycle is formed in the resultant graph. Under these rewirings, the number of vacuum states decreases, but the Witten index, again, remains constant. 

\begin{figure}[ht]
    \centering
    \includegraphics[width=0.4\linewidth]{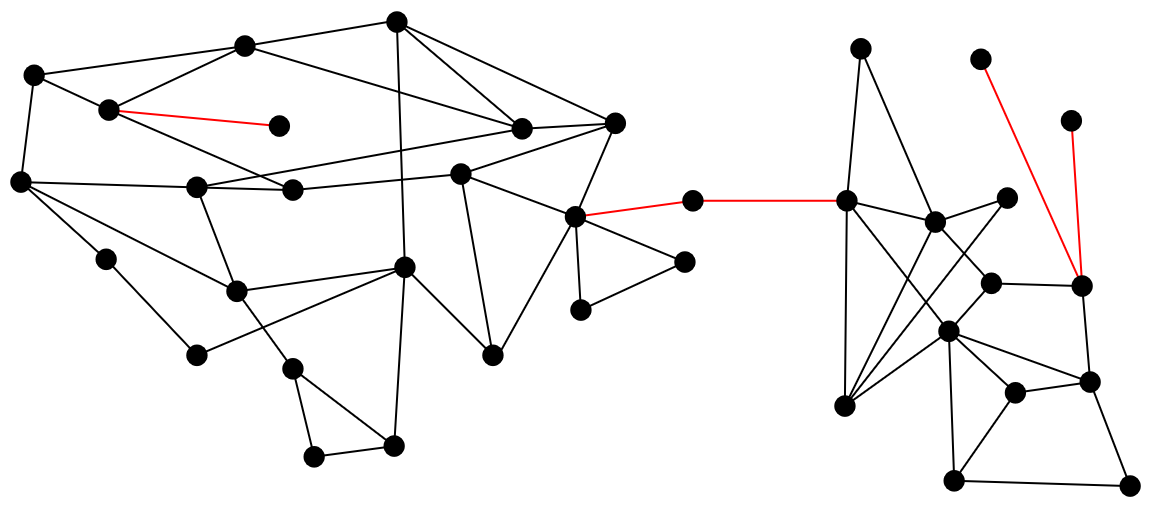}
    \caption{\small{A large connected graph with bottlenecks shown in red.}}
    \label{fig:Rewiring_2}
\end{figure}
    \begin{figure}[ht]
    \centering
    \includegraphics[width=0.5\linewidth]{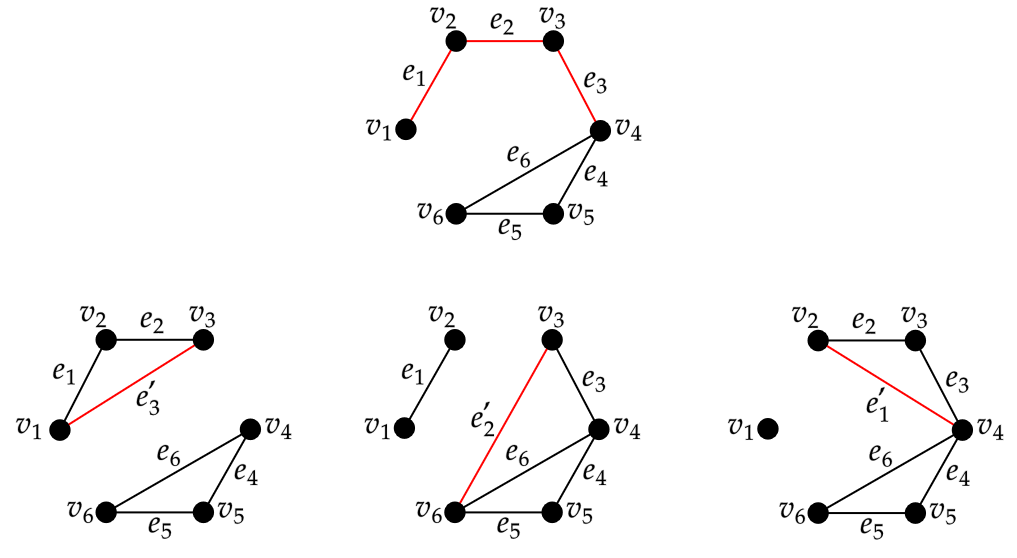}
    \caption{\small{Rewiring a bottleneck (red) to disconnect a connected graph (top line) adds an independent cycle to the resultant graph (bottom line).}}
    \label{fig:Rewiring_3}
\end{figure}
\end{example}
Based on the sign of the Witten index, we can describe the minimally populated vacuum state for supersymmetric graph Hilbert spaces under rewiring. 
\begin{lemma}
Let $n_{CC}$ and $n_{IC}$ be the number of connected components and the number of independent cycles of a graph Hilbert space, respectively. \\ \textit{Case 1:} If  $W>0$, then $n_{IC}< n_{CC}$ and smooth rewiring can reduce $n_{IC}$ to $0$, producing a fully bosonic vacuum.\\ \textit{Case 2:} If $W\geq0,$ then $n_{CC}\leq n_{IC}$ and smooth rewiring can reduce $n_{IC}$ by $n_{CC}-1.$ Then, if (a) If $n_{CC}= n_{IC}$ rewiring can produce a graph with $n_{CC}=n_{IC}=1$ (one fermionic and one bosonic vacuum state) or (b) if $n_{CC} < n_{IC}$ rewiring can reduce the number of fermionic states until there is only one remaining connected component (fermion-dominated vacuum state). 
\end{lemma}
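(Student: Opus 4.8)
The plan is to reduce everything to the repeated application of one elementary rewiring move. Throughout I would use the three facts already available: the Witten index $W=n_{CC}-n_{IC}$ is rewiring-invariant (Theorem \ref{thm: rwiring_1}), so that $n_{CC}$ and $n_{IC}$ always change together (Lemma \ref{lem:rewiring_1}); and $n_{CC}$, $n_{IC}$ count the bosonic and fermionic zero-energy states, respectively (Corollaries \ref{Thm: DimKer+} and \ref{Thm: dimker-}). (For the inequalities in Case 2 to be consistent the hypothesis must read $W\le 0$; I treat the regimes $W>0$, $W=0$, $W<0$ uniformly below.)

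First I would isolate what I will call the \emph{merging move}. Suppose the current graph has $n_{CC}\ge 2$ and $n_{IC}\ge 1$. Since $n_{IC}\ge 1$ there is a cycle, hence a non-bridge edge $e=ab$, say lying in a component $A$; pick any vertex $c$ in a different component $B$ and rewire $e$ by relocating the endpoint $b$ to $c$, i.e.\ replace $ab$ by $ac$. This is an admissible one-edge-at-a-time rewiring: it adds and removes no simplex, $a\ne c$ and $ac$ was not already an edge because $A\ne B$, and the vertex $b$ is not stranded since a non-bridge edge lies on a cycle and hence both its ends have degree $\ge 2$. I would then record its effect on the two counts: deleting a non-bridge edge from a connected graph leaves it connected and drops its cycle rank by exactly one, while inserting $ac$ fuses $A$ with $B$ and creates no new cycle; hence the move lowers both $n_{CC}$ and $n_{IC}$ by exactly one, which is in any case forced by the invariance of $W$. (Equivalently, one can apply $\beta_1=|E|-|V|+n_{CC}$ directly.)

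Next I would iterate. If $W>0$ then $n_{CC}=n_{IC}+W\ge n_{IC}+1$, so whenever $n_{IC}\ge 1$ we automatically have $n_{CC}\ge 2$ and the merging move is available; applying it repeatedly drives $n_{IC}$ down to $0$, terminating at $n_{CC}=W$, $n_{IC}=0$, i.e.\ a purely bosonic vacuum — this is Case 1. If $W\le 0$ then $n_{IC}=n_{CC}-W\ge n_{CC}$, so whenever $n_{CC}\ge 2$ we have $n_{IC}\ge 2\ge 1$ and the move is again available; applying it until $n_{CC}=1$ takes $n_{CC}^{\mathrm{init}}-1$ moves, each lowering $n_{IC}$ by one, so $n_{IC}$ drops by $n_{CC}-1$ as claimed, and the final graph has $n_{CC}=1$, $n_{IC}=1-W$. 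For $W=0$ this is $n_{CC}=n_{IC}=1$ (one bosonic and one fermionic vacuum state, Case 2(a)); for $W<0$ this is $n_{CC}=1$ with $n_{IC}=1+|W|>1$, a fermion-dominated vacuum (Case 2(b)). Simplicity is maintained at every step, since no move ever produces a loop, a parallel edge, or an isolated vertex.

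The main obstacle, and the only place real care is needed, is the bookkeeping for the merging move: one must verify that relocating a single endpoint of a \emph{non-bridge} edge across components changes each of $n_{CC}$ and $n_{IC}$ by exactly $-1$ (not $0$ and not $-2$), and that no degenerate configuration is produced. Once that lemma is pinned down, the remainder is a short induction on $n_{CC}$ (equivalently on $n_{IC}$), and the case split on $\mathrm{sign}(W)$ is just the arithmetic identity $n_{IC}=n_{CC}-W$ together with the rewiring-invariance of $W$.
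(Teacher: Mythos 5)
Your proof is correct and follows essentially the same route as the paper's: the same elementary move of rewiring one endpoint of a cycle edge so that it lands in a different component, which destroys that cycle and merges two components, iterated until the relevant count bottoms out. Your version is in fact tighter — the explicit choice of a non-bridge edge, the admissibility check for simplicity, and the observation that the Case 2 hypothesis should read $W\le 0$ rather than $W\ge 0$ all repair small imprecisions in the paper's own argument.
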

\begin{proof}
Case 1:  $n_{IC}< n_{CC}.$ \\On a connected graph, $n_{CC}=1,$ so $n_{IC}=0,$ and we are done. Suppose $n_{CC}>1$ and $n_{IC}\neq 0.$ Then, for each independent cycle in $\G$, there must also be at least one connected component containing no cycles. We may label the independent cycles of $\G$ $c_1,\dots,c_{n_{IC}}$ and the cycle-less connected components of $\G$ $k_1,\dots k_{n_{CC}}.$ Notice that the set of independent cycles can have no vertices in common with the set of cycle-less connected components. Then, without loss of generality, one edge of $c_1$ can be rewired such that one of its end vertices lies in $k_1.$ This rewiring removes the cycle $c_1$ and connects the component initially containing $c_1$ with $k_1,$ satisfying Lemma \ref{lem:rewiring_1}. Likewise, one edge  of $c_2$ can be rewired such that one of its end vertices lies in $k_2,$ one edge of $c_3$ can be rewired such that one of its end vertices lies in $k_3,$ and so on, until all of the cycles are removed. 

Case 2: $n_{CC}\leq n_{IC}.$ \\ Let $c_1,\dots,c_{n_{IC}}$ be the independent cycles of $\G$ and let $k_1,\dots k_{n_{CC}}$ be the (not necessarily cycle-less) connected components of $\G.$ An edge in each $c_i$ may be successively rewired to remove $c_i$, connecting the component of $\G$ initially containing $c_i,$ with a different connected component $k_j$. This may be done until the graph is connected. At this point, $n_{CC}=1$ and cannot be decreased further. So, $n_{IC}$ can be reduced, at most, by $n_{CC}-1.$ 
\end{proof}
Notice that Case 2 ensures that supersymmetry can't be spontaneously broken by simple variation of the wiring parameter.
\begin{corollary}
On a graph, the vacuum state can be either purely bosonic or contain bosons and fermions, but cannot be purely fermionic. \footnote{This result marks a significant departure from Requardt's results, which claim that for an infinite connected digraph, the zero eigenspace is purely fermionic.  Requardt builds his supersymmetric model from a slightly different graph theoretic perspective than we do, which emphasizes the ingoing and outgoing orientation of edges. However, our results of basic SUSY graph structures in Section \ref{s: SUSY_L_spectrum}  agree with his, so this is somewhat surprising. The discrepancy arises from Requardt's unexplained assertion that for infinite connected graphs, the kernel of $d=I^T$ is $0$, which differs from our own findings that the normalized constant state spans the kernel.}
\end{corollary}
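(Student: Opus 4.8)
The plan is to reduce the statement to the kernel dimension counts for the even and odd graph Laplacians that have already been established. Recall that the vacuum (zero-energy) subspace of the free supersymmetric Hamiltonian $\Delta_S = \Delta_+ \oplus \Delta_-$ is $\ker(\Delta_S) = \ker(\Delta_+) \oplus \ker(\Delta_-)$, where the first summand is the vertex-bosonic sector and the second the edge-fermionic sector. By Corollary \ref{Thm: DimKer+} the bosonic vacuum dimension equals the number $n_{CC}$ of connected components of $\G$, and by Corollary \ref{Thm: dimker-} the fermionic vacuum dimension equals the number $n_{IC}$ of independent cycles of $\G$. A purely fermionic vacuum would then require $n_{CC} = 0$ while $n_{IC} \geq 1$.

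First I would rule out the purely fermionic case directly: if $n_{CC} = 0$ then $\G$ has no vertices, hence no edges, hence no cycles, so $n_{IC} = 0$, contradicting $n_{IC} \geq 1$. Equivalently, one can phrase this as: every non-null graph has at least one connected component, so $\dim \ker(\Delta_+) \geq 1$ and the vacuum always contains at least one bosonic state, while the null graph supports no states at all; in neither case is the vacuum purely fermionic. (This also recovers the $n_{CC} = 1$ floor appearing in Case 2 of the preceding lemma, after as much rewiring as possible.) Then I would exhibit witnesses for the two permitted cases: any tree $T$ has $n_{IC} = 0$, so $\ker(\Delta_-) = 0$ and its vacuum is purely bosonic; and any cycle graph $C_n$ (as in Example \ref{ex:particle_on_a_cycle}) has $n_{CC} = n_{IC} = 1$, so its vacuum contains both a bosonic and a fermionic zero-energy state.

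There is no serious obstacle here --- the corollary is pure bookkeeping on $\ker(\Delta_+)$ and $\ker(\Delta_-)$. The only point requiring a little care is the degenerate null-graph case, which must be handled explicitly so that ``purely fermionic'' is excluded for \emph{every} graph rather than merely for non-null ones; having done that, the trichotomy ``purely bosonic / mixed / purely fermionic'' collapses to the two stated alternatives.
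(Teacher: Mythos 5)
Your proof is correct and takes essentially the same route the paper intends: it identifies the bosonic and fermionic vacuum dimensions with the number of connected components and the number of independent cycles (Corollaries \ref{Thm: DimKer+} and \ref{Thm: dimker-}) and observes that every non-null graph has at least one connected component, which is exactly the $n_{CC}\geq 1$ floor underlying Case 2 of the preceding lemma from which the paper draws this corollary without further proof. Your explicit handling of the null-graph edge case and the tree/cycle witnesses are minor additions in care beyond what the paper records, but not a different argument.
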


Some types of graphs are more stable under rewiring than others, particularly those which are either very densely or very sparsely connected. For example, the complete graph $K_n$ has the maximum possible number of cycles for a graph of order $n$ and cannot have any bottlenecks. Under a smooth rewiring, only the orientation of the edges of $K_n$ can change. On the other hand, its complement, the empty graph, cannot be rewired at all, since its edge set is empty. However, in both of these cases, by Corollary \ref{cor: o_indep_spectra}, rewiring will leave the spectrum of the SUSY graph Laplacian unchanged. This, however, is not true in most cases. Let's see a few examples how rewiring affects the spectrum of $\Delta_S$. Notice that in each of examples, although the eigenvalues change under rewiring, the spectrum maintains even degeneracy of non-zero eigenvalues. This is unsurprising, because as we rewire, we are simply producing new graphs, and we've shown that the this spectral degeneracy property holds for all graphs (Theorem \ref{thm:SUSY_degeneracy}). 

\begin{example}
The top graph of Figure \ref{fig:Rewiring_4} shows a system with Witten index $1$ along with a cartoon of the spectrum of its SUSY Laplacian. Below it is the resultant path graph (and its spectrum) obtained by rewiring $e_3= v_1v_3$ to $e_3=v_3v_4$. This rewiring raises one bosonic and one fermionic vacuum state together, leaving a purely bosonic vacuum and shifts the excited energy levels. 
    \begin{figure}[ht]
    \centering
    \includegraphics[width=0.4\linewidth]{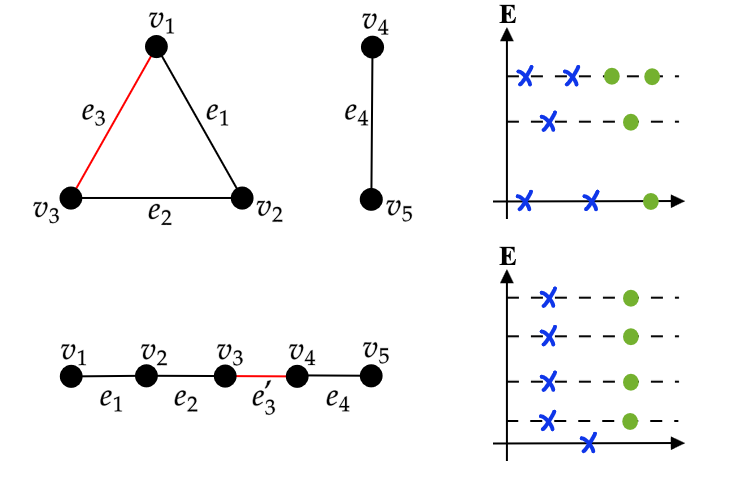}
    \caption{\small{Two graphs are shown at left. Rewiring of the top graph produces the path graph on the bottom. A cartoon of the spectrum of each graph's SUSY Lapalcian is shown at right, with vertex-bosonic states represented by blue x's and edge-fermionic states by green dots.}}
    \label{fig:Rewiring_4}
\end{figure}
\end{example}

\begin{example}
Figure \ref{fig:Rewiring_1} shows a graph of two disconnected 3-cycles and its spectrum. Any choice of single-step rewiring will produce a resultant graph isomorphic (up to orientation) to the rewiring of $e_2$ shown below it, raising one of the two pairs of vacuum states. All such single-step rewirings will result in the same spectrum as is shown at the bottom right, though they may differ in their fermionic steady states. Unlike in the previous example, in this configuration, the Witten index condition for spontaneous supersymmetry breaking holds, though the resultant graph cannot be subsequently rewired to raise the remaining bosonic and fermionic vacuum states above zero. 
\begin{figure}[ht]
    \centering
    \includegraphics[width=0.4\linewidth]{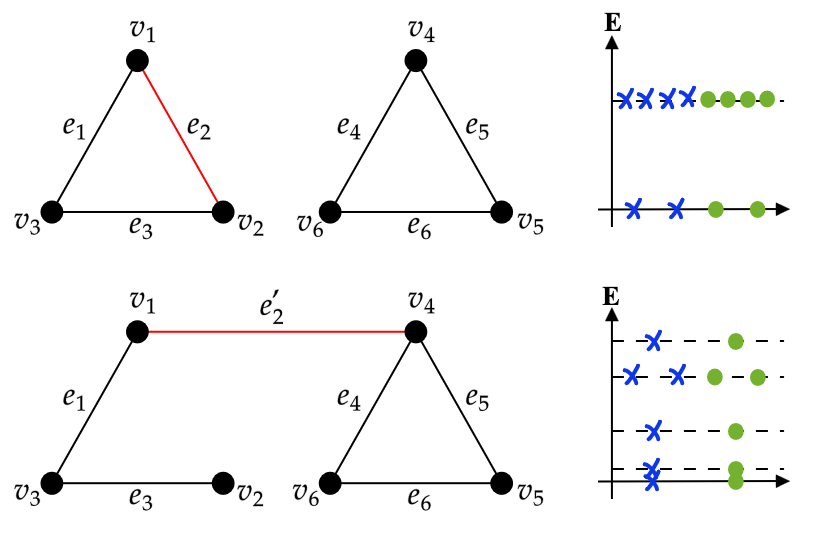}
    \caption{\small{Rewiring a disconnected $3$-cycle to connect the graph raises a bosonic and fermionic vacuum state together, preserving a Witten index of  $0$.}}
    \label{fig:Rewiring_1}
\end{figure}

\end{example}

\begin{example}
In Figure, \ref{fig:Rewiring_6}, the bottleneck edge $e_3$ is rewired to disconnect the graph and form a cycle, increasing the number of vacuum states.
\begin{figure}[ht]
    \centering
    \includegraphics[width=0.45\linewidth]{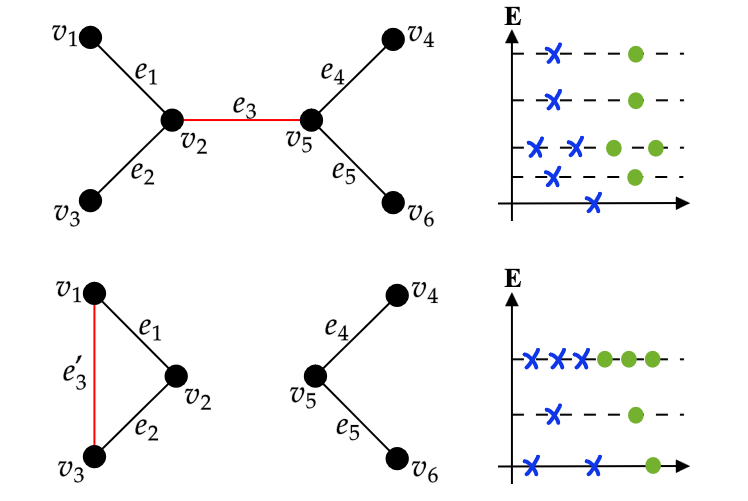}
    \caption{\small{Rewiring a bottleneck to disconnect a graph forms a new cycle and lowers a fermionic and bosonic state to zero-energy.}}
    \label{fig:Rewiring_6}
\end{figure}
\end{example}

\textit{Remark:} As we saw in Section \ref{s:ExitSF_G}, the ability to partition a graph, and therefore the bottlenecks of a graph, are related to the Cheeger inequality for the Fiedler value of the even graph Laplacian on connected graphs. While similar bounds on the second eigenvalue for the odd Laplacian have not, to our knowledge, been explored, we can immediately make two observations based on the equivalence of the positive eigenvalues of the even and odd graph Laplacians. First, as in the case of the top graph in Figure \ref{fig:Rewiring_6}, if the graph is connected, the multiplicity of $0$ in the spectrum of the even Laplacian is $1$, so the first excited energy (and the smallest eigenvalue of the odd Laplacian) is the Fiedler value of the even Laplacian. Second, if $W=0$, then the entire spectra of the two Laplacians are identical, so the second smallest eigenvalue of the odd Laplacian is the same as the Fiedler value of the even Laplacian. 
Further examination of the Fielder value and its analog for $\Delta_-$ could help us understand the spectrum of $\Delta_S$ and how the first excited state changes under rewiring. 

\section{Conclusions and Further Directions} \label{conclusions}

Motivated by the deep connections between supersymmetry and Morse theory and knowledge of a combinatorial version of Morse theory, we have presented a new addition to the graph quantum mechanics model which incorporates supersymmetry and shown that graphs have a naturally supersymmetric structure. Letting vertex states represent bosonic states and edge states represent fermionic states, we found a graph analog of the fermion parity operator to distinguish them and showed that the graph incidence matrix acts like a supercharge transforming between them. We defined a supersymmetric free graph Hamiltonian as the direct sum of the even and odd graph Laplacians and demonstrated that it has non-negative eigenvalues and at least twofold degeneracy in its excited states, as is true for SUSY Hamiltonians in the continuum.  Finally, we recognized the graph Dirac operator as these SUSY graph Hamiltonians, using it  to derive finitely convergent “walk sums,” graph analogs of supersymmetric path integrals. 
 
Based on a combinatorial Morse theory proof that the Euler characteristic of a graph is equal to the Witten index of a graph quantum system, we proved that the $W=0$ condition for spontaneous supersymmetry breaking holds only when $|V|=|E|,$ but that even on such graphs, supersymmetry in the combinatorial, free case cannot be spontaneously broken. Finally, we looked at how rewiring changes the density of the vacuum states of a graph Hilbert space, establishing that zero-energy states are removed and created in vertex-bosonic and edge-fermionic pairs, as they are a continuous SUSY system.  

We anticipate these results to be a starting point for leveraging the finite discrete nature of graphs to model more complicated systems. While our work establishes a model using combinatorial graphs in the free case, in the future we hope to incorporate metrics and potentials, to enable us to discretely model more “realistic” SUSY systems and to examine their continuum limits. We conjecture that in these more complicated cases, rewiring might cause graph Supersymmetry to be spontaneously broken. Further, comparing these results with other discrete supersymmetric models, such as Catterall et al.’s efforts to reconcile lattice QCD with supersymmetry \cite{Catterall_2015}, would be worthwhile, as well as exploring other combinatorial QFTs \cite{contreras2024combinatorial}, for instance, scalar field theory. Using the first quantization formalism here provides a way to represent QFT on a graph by mapping Feynman graphs to the spacetime graph (edges to paths). We  anticipate the utility of graphs as a promising toy model for supersymmetric quantum field theory and opens many new doors for future investigation.

\small{\printbibliography}

@book{Harris_Hirst_Mossinghoff_2008, 
address={New York}, 
edition={2nd ed}, 
series={Undergraduate texts in mathematics}, 
title={Combinatorics and graph theory}, 
ISBN={9780387797106}, callNumber={QA165 .H37 2008}, 
publisher={Springer}, 
author={Harris, John M. and Hirst, Jeffry L. and Mossinghoff, Michael J.}, 
year={2008}, collection={Undergraduate texts in mathematics} }

@inproceedings{Mnev_2016,
author = {Pavel Mnëv} ,
title = {Graph quantum mechanics},
booktitle = {Jahrbuch der Max-Planck-Gesellschaft},
year = {2016}
}

@article{contreras2024combinatorial,
  title={Combinatorial QFT on graphs: First quantization formalism},
  author={Contreras, Ivan and Kandel, Santosh and Mnev, Pavel and Wernli, Konstantin},
  journal={Annales de l’Institut Henri Poincar{\'e} D},
  DOI={10.4171/AIHPD/194},
  year={2024}
}

@article{Witten_1982, 
title={Supersymmetry and Morse theory}, 
volume={17}, 
DOI={10.4310/jdg/1214437492},
number={4}, 
journal={Journal of Differential Geometry}, 
author={Witten, Edward}, 
year={1982},
pages={661-692}}

@article{Contreras_Xu_2019, 
title={The graph Laplacian and Morse inequalities}, 
volume={300}, 
DOI={10.2140/pjm.2019.300.331}, 
number={2}, 
journal={Pacific Journal of Mathematics}, 
author={Contreras, Ivan and Xu, Boyan}, 
year={2019}, 
pages={331–345}, 
language={en} }

@article{Witten_1982_2, 
title={Constraints on supersymmetry breaking}, 
volume={202}, 
rights={https://www.elsevier.com/tdm/userlicense/1.0/}, 
DOI={10.1016/0550-3213(82)90071-2}, 
number={2}, 
journal={Nuclear Physics B}, 
author={Witten, Edward}, 
year={1982}, 
pages={253–316}, 
}

@book{Chung_2009, 
address={Providence, RI}, 
edition={reprint}, series={CBMS Regional conference series in mathematics},
title={Spectral graph theory: CBMS Conference on recent advances in spectral graph theory, held at California State University at Fresno, June 6 - 10, 1994}, ISBN={9780821803158}, 
publisher={American Mathematical Society}, 
author={Chung, Fan}, 
year={2009}, 
collection={Regional conference series in mathematics}, 
}

@inbook{Mohar_1991, 
address={New York}, 
title={The Laplacian Spectrum of Graphs}, 
volume={2}, 
booktitle={Graph Theory, Combinatorics, and Applications}, 
publisher={Wiley}, 
author={Mohar, Bojan}, 
year={1991}, 
pages={871–898},
editor= {Alavi, Yousef and Chartrand, Gary and Oellermann, O. R. and Schwenk, A. J.}}

@article{Mohar_1992,
title = {Laplace eigenvalues of graphs—a survey},
journal = {Discrete Mathematics},
volume = {109},
number = {1},
pages = {171-183},
year = {1992},
doi={10.1016/0012-365X(92)90288-Q},
author = {Bojan Mohar},
}

@misc{zhang_2011,
title={The Laplacian eigenvalues of graphs: a survey}, 
author={Xiao-Dong Zhang},
year={2011},
eprint={1111.2897},
archivePrefix={arXiv},
primaryClass={math.CO}, 
}

@article{Kirchhoff_1847,
author = {Kirchhoff, Gustav},
title = {Ueber die Auflösung der Gleichungen, auf welche man bei der Untersuchung der linearen Vertheilung galvanischer Ströme geführt wird},
journal = {Annalen der Physik},
volume = {148},
number = {12},
pages = {497-508},
doi = {10.1002/andp.18471481202},
year = {1847},
language={ger},
}

@article{Li_Zhang_1998, title={On the Laplacian eigenvalues of a graph}, volume={285}, rights={https://www.elsevier.com/tdm/userlicense/1.0/}, DOI={10.1016/S0024-3795(98)10149-0}, number={1–3}, journal={Linear Algebra and its Applications}, author={Li, Jiong-Sheng and Zhang, Xiao-Dong}, year={1998}, pages={305–307} }

@article{Merris_1998, title={A note on Laplacian graph eigenvalues}, volume={285}, rights={https://www.elsevier.com/tdm/userlicense/1.0/}, DOI={10.1016/S0024-3795(98)10148-9}, number={1–3}, journal={Linear Algebra and its Applications}, author={Merris, Russell}, year={1998}, pages={33–35}}

@article{Fiedler_1973, title={Algebraic connectivity of graphs}, volume={23}, DOI={10.21136/CMJ.1973.101168}, number={2}, journal={Czechoslovak Mathematical Journal}, author={Fiedler, Miroslav}, year={1973}, pages={298–305} }

@article{Casiday_Contreras_Meyer_Mi_Spingarn_2024, title={Laplace and Dirac operators on graphs}, volume={72}, DOI={10.1080/03081087.2022.2158297}, number={2}, journal={Linear and Multilinear Algebra}, author={Casiday, Beata and Contreras, Ivan and Meyer, Thomas and Mi, Sabrina and Spingarn, Ethan}, year={2024}, pages={325–365}}

@article{Yu_2017, title={Super-walk formulae for even and odd Laplacians in finite graphs}, volume={18}, number={1}, journal={ Rose-Hulman Undergraduate Mathematics Journal}, author={Yu, Chengzheng}, year={2017}, pages={270–278}, doi={10.48550/arXiv.1612.05505}}

@phdthesis{DelVecchio_2012, type={Master’s Thesis}, title={Path sum formulae for propagators on graphs, gluing and continuum limit}, school={ETH Zurich}, author={Del Vecchio, Simone}, year={2012} }

@article{Mnev_2007,
   title={Discrete path integral approach to the Selberg trace formula for regular graphs},
   volume={274},
   DOI={10.1007/s00220-007-0257-8},
   number={1},
   journal={Communications in Mathematical Physics},
   publisher={Springer Science and Business Media LLC},
   author={Mnëv, Pavel},
   year={2007}, pages={233–241} }

@misc{Tong_2022, title={Lectures on supersymmetric quantum mechanics}, url={http://www.damtp.cam.ac.uk/user/tong/susyqm.html}, author={Tong, David}, address={Department of Applied Mathematics and Theoretical Physics, University of Cambridge}, year={2022} }

@misc{Tong_2022_2, title={Lectures on supersymmety}, url={http://www.damtp.cam.ac.uk/user/tong/susy.html}, author={Tong, David}, address={Department of Applied Mathematics and Theoretical Physics, University of Cambridge}, year={2022} }

@book{Wegner_2016, address={Berlin, Heidelberg}, series={Lecture Notes in Physics Ser}, title={Supermathematics and its applications in statistical physics: grassmann variables and the method of supersymmetry}, ISBN={9783662491683}, publisher={Springer Berlin / Heidelberg}, author={Wegner, Franz}, year={2016}, collection={Lecture Notes in Physics Ser},}

@article{Forman_1998,
title = {Witten–Morse theory for cell complexes},
journal = {Topology},
volume = {37},
number = {5},
pages = {945-979},
year = {1998},
doi = {10.1016/S0040-9383(97)00071-2},
author = {Robin Forman}
}

@article{Forman_1998_2, title={Morse Theory for Cell Complexes}, volume={134}, rights={https://www.elsevier.com/tdm/userlicense/1.0/}, DOI={10.1006/aima.1997.1650}, number={1}, journal={Advances in Mathematics}, author={Forman, Robin}, year={1998}, month=mar, pages={90–145},}

@article{Forman_2002,
author = {Forman, Robin},
journal = {Séminaire Lotharingien de Combinatoire [electronic only]},
language = {eng},
pages = {B48c},
publisher = {Universität Wien, Fakultät für Mathematik},
title = {A user's guide to discrete Morse theory.},
url = {http://eudml.org/doc/123837},
volume = {48},
year = {2002},
}

@Inbook{Witten_1983,
author="Witten, Edward",
editor="Zichichi, Antonino",
title="Introduction to Supersymmetry",
bookTitle="The Unity of the Fundamental Interactions",
year="1983",
publisher="Springer US",
address= "Boston, MA",
pages="305-371",
isbn="978-1-4613-3655-6",
doi="10.1007/978-1-4613-3655-6_7",
}

@article{Witten_1981, title={Dynamical breaking of supersymmetry}, volume={188}, rights={https://www.elsevier.com/tdm/userlicense/1.0/}, DOI={10.1016/0550-3213(81)90006-7}, number={3}, journal={Nuclear Physics B}, author={Witten, Edward}, year={1981}, pages={513–554},}

@article{Cooper_Khare_Musto_Wipf_1988, title={Supersymmetry and the Dirac equation}, volume={187}, rights={https://www.elsevier.com/tdm/userlicense/1.0/}, DOI={10.1016/0003-4916(88)90279-5}, number={1}, journal={Annals of Physics}, author={Cooper, Fred and Khare, Avinash and Musto, R. and Wipf, A.}, year={1988}, pages={1–28},}

@article{Post_2009, title={First order approach and index theorems for discrete and metric graphs}, volume={10}, rights={http://www.springer.com/tdm},DOI={10.1007/s00023-009-0001-3}, number={5}, journal={Annales Henri Poincaré}, author={Post, Olaf}, year={2009}, pages={823–866},}

@inproceedings{Ogurisu_2003, organization={ Research Institute for Mathmatical Sciences of Kyoto University}, title={Supersymmetric analysis of discrete magnetic Schr\"odinger operators}, url={ http://hdl.handle.net/2433/24877}, author={Ogurisu, Osamu}, year={2003}, booktitle= {Applications of RG Methods in Mathematical Sciences}}

@inproceedings{Ogurisu_2002, title={Supersymmetric analysis of the spectral theory on infinite graphs}, booktitle={Seminars on infinite graphs and their spectrum at Lake Kawaguchi}, author={Ogurisu, Osamu}, year={2002},}

@article{Requardt_2005, title={Supersymmetry on graphs and networks}, volume={02},  DOI={10.1142/S0219887805000764}, number={04}, journal={International Journal of Geometric Methods in Modern Physics}, author={Requardt, Manfred}, year={2005}, month=aug, pages={585–595},}

@misc{Post_2008,
title={Spectral analysis of metric graphs and related spaces}, 
author={Olaf Post},
year={2008},
eprint={0712.1507},
archivePrefix={arXiv},
primaryClass={math.CO},
url={https://arxiv.org/abs/0712.1507}, 
}

@article{Requardt_2002,
title={Dirac operators and the calculation of the Connes metric on arbitrary (infinite) graphs},
volume={35},
DOI={10.1088/0305-4470/35/3/319},
number={3},
journal={Journal of Physics A: Mathematical and General},
publisher={IOP Publishing},
author={Requardt, Manfred},
year={2002},
pages={759–779} }

@article{Catterall_2015,
doi = {10.1088/1742-6596/640/1/012050},
year = {2015},
publisher = {IOP Publishing},
volume = {640},
number = {1},
pages = {012050},
author = {Catterall, Simon},
title = {Supersymmetry on a Lattice},
journal = {Journal of Physics: Conference Series},
}

@article{Catterall_Veernala_2015, title={Spontaneous supersymmetry breaking in two dimensional lattice super QCD}, volume={2015}, DOI={10.1007/JHEP10(2015)013}, number={10}, journal={Journal of High Energy Physics}, author={Catterall, Simon and Veernala, Aarti}, year={2015}, pages={13},}

@article{Chung_2010, title={Four proofs for the Cheeger inequality and graph partition algorithms}, volume={48}, publication={Fourth International Congress of Chinese Mathematicians}, author={Chung, Fan}, year={2010}, pages={331–349}, publisher={American Mathematical Society}, journal={ AMS/IP Stud. Adv. Math.}, }

@article{Lee_Luo_Sagan_Yeh_1992, title={Eigenvector and eigenvalues of some special graphs. IV. Multilevel circulants}, volume={41}, rights={http://onlinelibrary.wiley.com/termsAndConditions#vor}, DOI={10.1002/qua.560410111}, number={1}, journal={International Journal of Quantum Chemistry}, author={Lee, Shyi‐Long and Luo, Yeung‐Long and Sagan, Bruce E. and Yeh, Yeong‐Nan}, year={1992}, pages={105–116},}

\end{document}